\newtheorem{theorem}{Theorem}[section]
\newtheorem{lemma}[theorem]{Lemma}
\newtheorem{cor}[theorem]{Corollary}
\newtheorem{prop}[theorem]{Proposition}
\numberwithin{equation}{section}
\DeclareMathAccent{\wtilde}{\mathord}{largesymbols}{"65}
\DeclareMathAccent{\what}{\mathord}{largesymbols}{"62}
\def\m@th{\mathsurround=0pt}
\mathchardef\bracell="0365 
\def\upbrall{$\m@th\bracell$}
\def\undertilde#1{\mathop{\vtop{\ialign{##\crcr
    $\hfil\displaystyle{#1}\hfil$\crcr
     \noalign
     {\kern1.5pt\nointerlineskip}
     \upbrall\crcr\noalign{\kern1pt
   }}}}\limits}
\newcommand{\wb}[1]{\overline{#1}}
\newcommand{\wh}{\widehat}
\newcommand{\wt}{\widetilde}
\newcommand{\Equiv}{:=}
\newcommand{\cQ}{\mathcal{Q}} 
\newcommand{\cH}{\mathcal{H}}
\newcommand{\cA}{\mathcal{A}}
\newcommand{\DONE}{\\$\blacksquare$\par\smallskip\smallskip\noindent\ignorespacesafterend}
\newcommand{\ssp}{\mathfrak{p}}
\newcommand{\ssk}{\mathfrak{k}}
\newcommand{\ssl}{\mathfrak{l}}
\newcommand{\ssq}{\mathfrak{q}}
\newcommand{\ssr}{\mathfrak{r}}
\newcommand{\veta}{t}
\newcommand{\bun}{\boldsymbol{1}}
\newcommand{\tbu}{\,^{t\!}{\bu}}
\newcommand{\tbs}{\,^{t\!}{\boldsymbol{s}}}
\newcommand{\bblu}{\begin{color}{blue}}
\newcommand{\bred}{\begin{color}{red}}
\newcommand{\ecl}{\end{color}}
\newcommand{\bA}{\boldsymbol{A}}
\newcommand{\bK}{\boldsymbol{K}} 
\newcommand{\bL}{\boldsymbol{L}} 
\newcommand{\bM}{\boldsymbol{M}}
\newcommand{\aR}{\alpha}
\newcommand{\bb}{\beta}
\newcommand{\gm}{\gamma}
\newcommand{\ven}{\varepsilon}
\newcommand{\dd}{\delta}
\newcommand{\sg}{\sigma}
\newcommand{\kp}{\kappa}
\newcommand{\ld}{\lambda}
\newcommand{\oa}{\omega}
\newcommand{\be}{\begin{equation}}
\newcommand{\ee}{\end{equation}}
\newcommand{\bea}{\begin{eqnarray}}
\newcommand{\eea}{\end{eqnarray}}
\newcommand{\bse}{\begin{subequations}}
\newcommand{\ese}{\end{subequations}}
\newcommand{\nn}{\nonumber}
\newcommand{\ol}{\overline}
\newcommand{\po}{\accentset{o}{p}}
\newcommand{\qo}{\accentset{o}{q}}
\newcommand{\Po}{\accentset{o}{P}}
\newcommand{\Qo}{\accentset{o}{Q}}
\newcommand{\Do}{\accentset{o}{\delta}}
\newcommand{\bu}{\boldsymbol{u}}
\newcommand{\ba}{{\boldsymbol a}}
\newcommand{\bbb}{\boldsymbol{b}} 
\newcommand{\bc}{\boldsymbol{c}}
\newcommand{\bk}{\boldsymbol{k}} 
\newcommand{\brr}{\boldsymbol{r}} 
\newcommand{\mbx}{{\boldsymbol x}}
\newcommand{\mby}{{\boldsymbol y}}
\newcommand{\bs}{{\boldsymbol s}}
\newcommand{\bphi}{{\boldsymbol \phi}}
\newcommand{\tbphi}{\,^{t\!}\boldsymbol{\bphi}}
\begin{document}
\title[ Elliptic solutions of ABS lattice equations]
{Elliptic solutions of ABS lattice equations}

\author{Frank W Nijhoff }
\address{
School of Mathematics\\
University of Leeds\\ 
Leeds LS2 9JT\\ 
United Kingdom}
\email{nijhoff@maths.leeds.ac.uk}

\author{James Atkinson}
\address{
School of Mathematics and Statistics\\ 
University of Sydney\\ 
NSW 2006\\
Australia} 
\email{jamesa@maths.usyd.edu.au}


\begin{abstract}
Elliptic $N$-soliton-type solutions, i.e. solutions emerging from the application of $N$ consecutive 
B\"acklund transformations to an elliptic {\it seed} solution, are constructed for all equations
in the ABS list of quadrilateral lattice equations, except for 
the case of the Q4 equation which is treated elsewhere. The 
main construction, which is based on an elliptic Cauchy matrix, 
is performed for the equation Q3, and by coalescence on certain auxiliary parameters, 
the corresponding solutions of the remaining equations in the list are obtained. Furthermore, the 
underlying linear structure of the equations is exhibited, leading, in 
particular, to a novel Lax representation of the Q3 equation. 

\end{abstract}

\maketitle

\section{Introduction}
 
In a series of recent papers, soliton type solutions of two-dimensional integrable quadrilateral lattice equations 
were explored, \cite{AHN,AHN2,AN,NAH}. Integrability here is understood in the sense of the multidimensional 
consistency property, \cite{NW,BS}, which states that such equations can be consistently embedded in a multidimensional lattice, and 
which has been accepted as a key integrability characteristic.
Several examples of equations exhibiting this property were known for more than two decades, cf. e.g. 
\cite{WE,Hir,DJM,NQC,QNCL}, but it was only recently that a full list of \textit{scalar} multidimensionally consistent equations was obtained, 
cf. \cite{ABS}, and also \cite{ABS2}. Apart from lattice equations \textit{of KdV type} which had been established early on, a number 
of novel equations arose from this classification, for which no further structures (such as Lax pairs, inverse scattering scheme or 
direct linearization treatments) were a priori known. As was demonstrated in \cite{BS,Nij} some of these structures, in fact, follow 
from the multidimensional consistency of the equation itself. However, the construction of explicit solutions remained an open problem, and 
this was systematically undertaken in the series of papers mentioned above, and they reveal some surprising new features. In \cite{NAH} 
we revealed some of the underlying structures of the equations in the ABS list, showing that they are deeply interrelated not only through 
degeneration, but through Miura type relations as well. In particular, closed-form $N$ solitons for the whole ABS list were obtained, with the 
exception of the ``top'' equation in the list, the so-called Q4 equation, which was first discovered by V. Adler, \cite{Adler}, as the 
permutability condition for the B\"acklund transformations of the famous Krichever-Novikov equation.
Recently we have given explicit $N$-soliton solutions for Adler's equation \cite{AN2} using a new constructive approach in which the solutions emerge in {\it Hirota form}.
With respect to the Cauchy-matrix approach considered here, the equation Q4 has special features which warrant a separate treatment. 
The main focus in \cite{NAH}, cf. also \cite{AHN2}, was on the construction of ``rational soliton solutions'' for the Q3 equation, which turns out to be 
the equation in which the features of the entire remaining list culminate. Its underlying structure incorporates ingredients from all the 
other equations in the ABS list, and consequently from its explicit $N$-soliton solutions, the corresponding solutions of 
all the ``lower'' equations in the ABS list are obtained through degenerations (i.e. via limits on some fixed parameters). Thus, we 
were able to present explicitly all such rational N-soliton solutions for those equations in closed form, (cf. also \cite{HZ} for 
bilinear form and Casorati determinant expressions of the same solutions).

In this paper we will generalise the Cauchy-matrix approach of \cite{NAH} to the elliptic case, i.e. to solutions which we 
could call \textit{elliptic N-soliton solutions}. The problem of finding such elliptic solutions should not be 
confused with the one of finding soliton solutions of the elliptic equation Q4 (in which the lattice parameters take values as points on an elliptic 
curve) considered in \cite{AN2}. The treatment here follows 
closely the derivations in \cite{NAH}, but the lifting of the rational solutions to the elliptic case introduces some new interesting features, 
such as the emergence of a non-autonomous Cauchy kernel that incorporates the core strucuture behind these solutions. Furthermore, some aspects 
become actually more transparant in the elliptic soliton case, as the elliptic addition formulae at the heart of the development in some sense 
drive the construction. 
Moreover, we unravel in this paper some of the underlying linear structures behind these solutions, which as a byproduct leads to the construction of 
a novel Lax pair for Q3, which seems more natural than the Lax pairs that are obtained from multidimensional consistency following the recipe in 
\cite{Nij,BS}. Thus, we expect that this 
new Lax representation can prove useful in the construction of wider classes of solutions, such as finite-gap solutions, and inverse scattering solutions 
with radiation. We also envisage that the structures revealed in this paper may prove important 
in understanding the general nature of elliptic solutions of integrable discrete equations and the corresponding many-body 
systems of Calogero-Moser and Ruijsenaars type, cf. also \cite{NRK,KrichWieg}. 
  
\section{Preliminaries: the ABS list}
First, we need to establish some notations that we will employ throughout the paper. 
Equations within the class of \textit{quadrilateral} partial difference equations (P$\Delta$Es) have the following canonical form: 
\[ \cQ_{\ssp,\ssq~}(u,\tilde{u},\hat{u},\hat{\tilde{u}}) = 0\  ,  \] 
where we adopt the short-hand notation of vertices along an 
elementary plaquette on a rectangular lattice: 
$$ u=u_{n,m}\quad,\quad\wt{u}=u_{n+1,m}\quad,\quad
\wh{u}=u_{n,m+1}\quad,\quad\wh{\wt{u}}=u_{n+1,m+1} $$
Schematically these are indicated in Figure \ref{quad}.
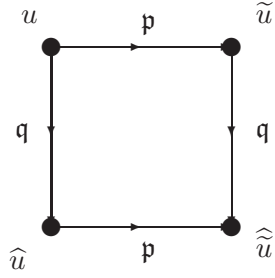
\begin{figure}[ht]
\begin{center}  

\setlength{\unitlength}{0.8mm}
\begin{picture}(40,40)(10,-5) 

\put(0,0){\circle*{3}}
\put(0,30){\circle*{3}}
\put(30,30){\circle*{3}}
\put(30,0){\circle*{3}}

\put(0,0){\vector(1,0){15}}
\put(15,0){\vector(1,0){15}}
\put(0,30){\vector(0,-1){15}}
\put(0,15){\vector(0,-1){15}}
\put(0,30){\vector(1,0){15}}
\put(15,30){\vector(1,0){15}}
\put(30,30){\vector(0,-1){15}}
\put(30,15){\vector(0,-1){15}}

\put(-7,-7){$\wh{u}$}
\put(34,-5){$\wh{\wt{u}}$}
\put(34,34){$\wt{u}$}
\put(-5,34){$ u$}

\put(15,-5){$\ssp$}
\put(15,33){$\ssp$}
\put(-6,15){$\ssq$}
\put(34,15){$\ssq$}
\end{picture} 
\end{center} 
\caption{Arrangement of the shifted dependent variable on the vertices of a quadrilateral and association of the lattice parameters to the edges.}
\label{quad}
\end{figure}
Here $\ssp$, $\ssq$ are parameters of the equation, related to the lattice 
spacing, so each parameter is identified with a direction in the lattice. It is useful, and actually crucial for the classification 
problem as solved in \cite{ABS}, to consider the lattice parameters 
$\ssp,\ssq$ as points on an algebraic curve, i.e. given by  ~$\ssp=(p,P)$~~,~
$\ssq=(q,Q)$~~, where the coordinates $p,P$ and $q,Q$ respectively are related through some polynomial equation. 

ABS, in \cite{ABS}, considered the case where $u$ is a \textit{scalar} (i.e. single-field) quantity, and restricted themselves to the 
affine-linear case, i.e. the case in which the function $\cQ$ is assumed to be affine linear in each of its four arguments.
They required the {\it multidimensional consistency} property, cf. \cite{NW,BS}.
That is, in each pair of directions of the multidimensional lattice a copy of the equation (chosing values of the lattice parameters 
$\ssp$, $\ssq$, etc., associated with those directions) can be imposed such that the iteration of well-chosen initial values leads to a unique 
determination of the solution in each lattice point, avoiding possible multivaluedness that would generically occur in the iteration of 
the solution by evaluating the values of $u$ on lattice points following different configurations of quadrialterals. In the integrable 
case, such multivaluedness does not occur by the precise structure of the equation and the combinatorics of the evaluation process. 
More concretely, this means that any three copies of the equation with different parameters, namely 
$$ \cQ_{\ssp,\ssq}(u,\wt{u},\wh{u},\wh{\wt{u}})=0\quad,\quad\cQ_{\ssp,\ssr}(u,\wt{u},\ol{u},\wt{\ol{u}})=0\quad,\quad
\cQ_{\ssq,\ssr}(u,\wh{u},\ol{u},\wh{\ol{u}})=0\  $$ 
can be simultaneously solved, and lead to a unique and single-valued determination of the triple-shifted point $\wh{\wt{\ol{u}}}$ at the vertex of an 
elementary cube on which initial conditions are given by $u$, $\wt{u}$, $\wh{u}$, $\ol{u}$, where the $\ol{\phantom{a}}$ denotes the elementary 
shift in an additional lattice direction associated with the lattice parameter $\ssr$. 
 
The classification in \cite{ABS} yielded, among others, the following list of equations (presented in a slightly different notation)
\vspace{.2cm}

\noindent
\framebox[2cm][l]{\bf Q-list:} 
\bse\label{eq:Qeqs}\begin{eqnarray}
&&{\rm Q1}: \quad \po(u-\wh{u})(\wt{u}-\wh{\wt{u}})-q^2(u-\wt{u})(\wh{u}-\wh{\wt{u}})=\Do\po\qo\left(\po-\qo\right)  
\label{eq:Qeqsa} \\ 
&&{\rm Q2}: \quad  \po(u-\wh{u})(\wt{u}-\wh{\wt{u}})-\qo(u-\wt{u})(\wh{u}-\wh{\wt{u}})+
\po\qo(\po-\qo)(u+\wt{u}+\wh{u}+\wh{\wt{u}})=\nn \\ 
&& \hspace{3cm}  =\po\qo(\po-\qo)(\po^2-\po\qo+\qo^2) \label{eq:Qeqsb}\\ 
&&{\rm Q3}: \quad \po(1-\qo^2)(u\wh{u}+\wt{u}\wh{\wt{u}})-\qo(1-\po^2)(u\wt{u}+\wh{u}\wh{\wt{u}})= \nn \\ 
&& \hspace{3cm} =(\po^2-\qo^2)\left((\wh{u}\wt{u}+u\wh{\wt{u}})+\Do\frac{(1-\po^2)(1-\qo^2)}{4\po\qo}\right) 
\label{eq:Qeqsc} \\
&&{\rm Q4}: \quad \po(u\wt{u}+\wh{u}\wh{\wt{u}})-\qo(u\wh{u}+\wt{u}\wh{\wt{u}})= \nn \\ 
&& \hspace{3cm} =\frac{\po\Qo-\qo\Po}{1-\po^2\qo^2}\left((\wh{u}\wt{u}+u\wh{\wt{u}})-
\po\qo(1+u\wt{u}\wh{u}\wh{\wt{u}})\right) \label{eq:Qeqsd}
\end{eqnarray}\ese 
with $\Do$ being a fixed parameter, and 
where the form of Q4 as given in \eqref{eq:Qeqsd} was actually due to \cite{Hiet}, and involves additional parameters $\Po$, $\Qo$ related to 
$\po$, $\qo$ respectively via the Jacobi elliptic curve relations: ~$\Po^2=\po^4-\gm \po^2+1$~,~$\Qo^2=\qo^4-\gm \qo^2+1$~,~ 
in which $\gm$ denotes the modulus of this curve.

In what follows we will focus mostly on Q3 and its degenerations, and for the purpose of our treatment we will rewrite \eqref{eq:Qeqsc} 
in a different parametrisation which is more adapted to the actual structure of the solutions, namely as follows:
\begin{equation}
{\rm Q3}:\quad P(u\wh{u}+\wt{u}\wh{\wt{u}})-Q(u\wt{u}+\wh{u}\wh{\wt{u}})= 
(p^2-q^2)\left[\left(\wh{u}\wt{u}+u\wh{\wt{u}}\right)+\frac{\Delta}{PQ}\right] 
\label{eq:Q3new} 
\end{equation}
in which, following \cite{AHN2,NAH}, the lattice parameters $p$, $q$ are related to the parameters $\po$, $\qo$ by the relations 
\be\label{eq:parmtransf}
\po=\frac{P}{p^2-a^2}\quad,\quad \qo=\frac{Q}{q^2-a^2}\  ,  
\ee  
and where $(p,P)$ and $(q,Q)$ are points on a Jacobi elliptic curve (different from the one mentioned above associated with the Q4 equation) 
with branch points $\pm a$, $\pm b$, i.e. ~$P^2=(p^2-a^2)(p^2-b^2)$~, and ~$Q^2=(q^2-a^2)(q^2-b^2)$~. 
The fixed parameter $\Delta$ will have a special significance in terms of certain arbitrary ceofficients in the solutions, as we shall see. 
Even though Q3, as given in \eqref{eq:Qeqsc}, has a rational parametrisation, it turns out that for the solution structure, both here 
(where we are dealing with elliptic solutions) as in the case of rational solitons as in \cite{NAH}, 
this new Jacobi elliptic curve plays a crucial role. Thus, throughout this paper we will consider the parameters $p$ and $q$, and the associated 
parameters $P$, $Q$ on the elliptic curve, to be the natural parameters for the solutions, and consequently we prefer to express the solutions of 
the degenerate cases, such as Q2 and Q1 in terms of these parameters. Other degenerate cases involve the so-called H-equations from \cite{ABS}, and 
they are given by 

\vspace{.2cm}

\noindent
\framebox[2cm][l]{\bf H-list:} 
\bse\label{eq:Heqs}\begin{eqnarray}
&&{\rm H1}: \quad (w-\wh{\wt{w}})(\wh{w}-\wt{w})=p^2-q^2  \label{eq:Heqsa}\\ 
&&{\rm H2}: \quad (w-\wh{\wt{w}})(\wh{w}-\wt{w})=(p^2-q^2)(w+\wt{w}+\wh{w}+\wh{\wt{w}})-p^4+q^4  
\label{eq:Heqsb}\\ 
&&{\rm H3}: \quad P(w\wh{w}+\wt{w}\wh{\wt{w}})-Q(w\wt{w}+\wh{w}\wh{\wt{w}})=2\Delta \frac{P^2-Q^2}{PQ} 
\label{eq:Heqsc}
\end{eqnarray}\ese 
where we have now expressed them in terms of the new parameters $p$ and $q$, whereas in \eqref{eq:Heqsc} we have introduced associated 
parameters $P$, $Q$ defined by the relations ~$P^2+p^2=a^2$~, ~$Q^2+q^2=a^2$~. 
We have omitted from the list the equations denoted by A1 and A2 in \cite{ABS}, which are equivalent to Q1 and Q3 respectively upon gauge transformations.
In a more recent paper, \cite{ABS2}, ABS achieved a somewhat stronger classification result, still in the scalar and affine-linear case, but otherwise 
under less stringent assumptions. 
The various equations are connected to each other through degenerations of the parameter curve, from Q4 down to all other equations. Concentrating 
on the Q-list alone, the corresponding coalescence diagram is given in Figure \ref{Qdegen}.
\begin{figure}[h]
\xymatrix{
&&& \framebox{Q3$_{\delta=1}$} \ar[rr] \ar[dd] && \framebox{Q3$_{\delta=0}$} \ar[dd]\\
&\framebox{\ Q4\ } \ar[urr] \ar[drr]\\
&&& \framebox{\ Q2\ } \ar[rr] && \framebox{Q1$_{\delta=1}$} \ar[rr]&& \framebox{Q1$_{\delta=0}$}\\
}
\caption{Coalescence diagram for equations in the ABS Q list (this was first given by Adler and Suris in \cite{Q4}).}
\label{Qdegen}
\end{figure}
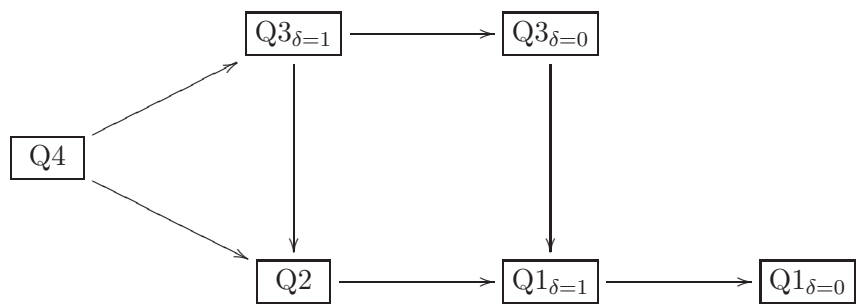
The question as to what is known about explicit solutions of these equations is an interesting one. In fact, most results so far are obtained for the equations 
of KdV type, and go already back several years.  Thus, identifying this subclass of equations, we note that H1 is the so-called \textit{lattice potential KdV equation}, 
which also appeared under the guise of the permutability condition of the B\"acklund transformations of the (continuous) KdV equation, \cite{WE}, or as the so-called $\ven$-algorithm in 
numerical analysis, due to Wynn, cf. \cite{Wynn}. The equation $({\rm Q1})_0$, i.e. the Q1 equation with fixed parameter $\dd=0$, is the lattice Schwarzian KdV 
equation, or \textit{cross-ratio} equation, first presented as an integrable lattice equation in \cite{NC}. The equation $({\rm H3})_0$ can, up to a point transformation, 
be identified with the lattice potential modified KdV (MKdV) equation, also referred to as the Hirota equation, and $({\rm Q3})_0$ can be obtained from an interpolating equation 
that was first presented in \cite{NQC}, and which hereafter we refer to as the NQC equation following \cite{RasinNQC}. For all these equations multi-soliton solutions 
are implicit in the \textit{direct linearisation approach} developed for lattice equations in \cite{NQC,QNCL}. In fact, inverse scattering type solutions including radiation are 
in principle (i.e. on a formal level) included in that scheme as well. Finite-gap solutions of those equations are the ones that are connected to the 
periodic reductions of those lattice equations, i.e. the reduction to finite-dimensional integrable (in the sense of Liouville) dynamical mappings, 
cf. \cite{PNC}. Those were developed 
for the lattice KdV in \cite{EN}, whereas finite-gap solutions for the Hirota equation and the lattice Schwarzian KdV were constructed from a geometric perspective 
in \cite{BP,MP}. Finally, scaling symmetric solutions, leading to reductions to discrete Painlev\'e equations were constructed in \cite{NijPap,NRGO}. To our knowledge this 
exhausts the list of more or less explicit solutions to the lattice equations of KdV type. For any of the new equations in the ABS list, such as the deformed equations 
depending on the parameter $\dd$, as well as H2, Q2 and Q4, no explicit solutions were presented until more recently. The case of Q4 was considered in \cite{AHN}, whilst with the exception of Q4, all of the other new ABS equations were shown to be B\"acklund related to the older equations \cite{Atkinson}, which percipitated the generalisation of the Cauchy-matrix approach to these new systems and hence the results of \cite{AHN2,NAH,HZ}.
The aim of the present paper is to add a new class of solutions, namely elliptic soliton type solutions, of the equations in the ABS list to those found in recent years. 

The outline of the remainder of the paper is as follows. In section 3 we show that additiona formulae for elliptic functions (in the Weierstrass class) have a natural 
interpretation in terms of partial difference equations. Thus, elliptic functions are shown to arise naturally as seed solutions of a number of such equations. In section 
4 we will set out the ground-work for our construction of elliptic $N$-soliton solutions, defining the basic objects and relations in terms of elliptic Cauchy matrices. 
These results will subsequently be used in section 5 to arrive at the main statement (Theorem 5.1) presenting the elliptic $N$-soliton solution for Q3. Furthermore,we 
present a realisation of those solutions in terms of a novel Lax pair for Q3, identifying explicitly the eigenfunctions. In section 6, we demonstrate that the closed form 
solution of Q3 of the previous section arises also from a B\"acklund chain. Finally, in section 7 we show how the solutions of Q3 degenerate to analogous solutions of the 
lower members of the ABS list. Some direct verifications of the resulting solutions of the Q- and H-equations are also presented. 


\section{Elliptic addition formulae as solutions of P$\Delta$Es }

In this section we will show that many of the equations in the ABS list admit elementary solutions in 
terms of elliptic functions. In fact, one can consider the lattice equations as models for basic addition formulae 
of elliptic functions. We will then, in subsequent sections, generalise these elementary solutions to more general elliptic solutions, 
which we will call elliptic solitons, and we will show that the elementary solutions play the role of the \textit{seed solutions}
for those elliptic solitons as viewed as arising from a B\"acklund chain.  

\subsection{Elliptic addition formulae}

In this paper we will work throughout exclusively in terms of Weierstrass functions, and we employ primarily  
the basic addition formulae for these functions. In what follows, 
$\sg(x)=\sg(x|2\oa,2\oa')$, $\zeta(x)=\zeta(x|2\oa,2\oa')$, $\zeta(x)=\sg(x|2\oa,2\oa')$, denote the 
standard Weierstrass $\sg$-, $\zeta$- and $\wp$-functions with simple periods $2\oa,2\oa'\in\mathbb{C}$, 
see e.g. \cite{Akh}. Let us now recall the following elliptic identities:

\begin{description} 
\item[$\sg$-function] The celebrated three-term identity for Weierstrass $\sg$-function
\begin{eqnarray}\label{eq:3term}
&& \sg(x+a)\sg(x-a)\sg(y+b)\sg(y-b)-\sg(x+b)\sg(x-b)\sg(y+a)\sg(y-a) \nn \\ 
&& =\sg(x+y)\sg(x-y)\sg(a+b)\sg(a-b) 
\end{eqnarray}
can be rewritten in terms of 
\be\label{eq:Phi} 
\Phi_\kp(x)\Equiv  \frac{\sg(x+\kp)}{\sg(\kp)\,\sg(x)} ~\ee 
as:
\be\label{eq:PhiPhi} 
\Phi_\kp(x)\Phi_\ld(y)=\Phi_{\kp+\ld}(x)\Phi_\ld(y-x)+\Phi_\kp(x-y)\Phi_{\kp+\ld}(y)\  . \ee  
\item[$\zeta$-function] The Weierstrass $\zeta$-function ~$\zeta(x)=\sg'(x)/\sg(x)$~~ obeys 
\be\label{eq:zetasg} 
\zeta(x)+\zeta(y)+\zeta(z)-\zeta(x+y+z)=\frac{\sg(x+y)\,\sg(y+z)\,\sg(x+z)}{\sg(x)\,\sg(y)\,\sg(z)\,\sg(x+y+z)} \ee  
or:
\be\label{eq:Phizeta}
 \Phi_\kp(x)\Phi_\kp(y)=\Phi_\kp(x+y)\left[\zeta(\kp)+\zeta(x)+\zeta(y)-\zeta(\kp+x+y) \right] \ee 

\item[$\wp$-function] The Weierstrass $\wp$-function ~$\wp(x)=-\zeta'(x)$~~ obeys: 
\be\label{eq:wpsg} 
\wp(x)-\wp(y)=\frac{\sg(x+y)\,\sg(y-x)}{\sg^2(x)\,\sg^2(y)}\ee 
or:
\be\label{eq:Phiwp} 
\Phi_\kp(x)\Phi_{-\kp}(x)=\wp(x)-\wp(\kp) \ee  
\end{description} 
Furthermore, we recall the following basic addition rules 
\be\label{eq:zetawp}
\zeta(x+y)-\zeta(x)-\zeta(y)=\frac{1}{2} \frac{\wp'(x)-\wp'(y)}{\wp(x)-\wp(y)}\  , 
\ee 
and 
\be\label{eq:zetawpwp}
\wp(x)+\wp(y)+\wp(x+y)= \left(\zeta(x+y)-\zeta(x)-\zeta(y)\right)^2 = \frac{1}{4}\left( \frac{\wp'(x)-\wp'(y)}{\wp(x)-\wp(y)}\right)^2\  . 
\ee
All identities needed in what follows rely on these the above basic addition formulae for the Weierstrass family. 

\subsection{From addition formulae to P$\Delta$Es}

We will now show that the identites given above lead directly to an interpretation in terms of basic elliptic solutions of 
certain lattice equations. Setting
\be\label{eq:tausg}
\tau=\tau_{n,m,h}=\sg(\xi)\quad {\rm with}\quad \xi=\xi_0+n\dd+m\ven+h\gm\  , \ee
and using the abbreviations 
$$ \wt{\tau}=\tau_{n+1,m,h}\quad,\quad \wh{\tau}=\tau_{n,m+1,h}\quad,\quad \ol{\tau}=\tau_{n,m,h+1}\  ,  $$  
we can write the three-term relation as:
\be\label{eq:HM} 
\sg(\dd-\ven)\sg(\gm)\ol{\tau}\,\wh{\wt{\tau}}+\sg(\ven-\gm)\sg(\dd)\wt{\tau}\,\wh{\ol{\tau}}+ \sg(\gm-\dd)\sg(\ven)\wh{\tau}\,\wt{\ol{\tau}}=0\  ,  \ee  
which is the famous \textit{discrete analogue of generalised Toda equation}, or Hirota-Miwa equation, cf. \cite{HM}. It is probably significant that the fundamental 
addition formula for the Weierstrass $\sigma$-function obeys naturally one of the fundamental partial difference equations (P$\Delta$Es) in the 
theory of integrable systems. We will demonstrate now that other P$\Delta$Es, closely related to cases of the ABS list, emerge naturally from 
the addition formulae as well.   

Thus, setting  ~$w=w_{n,m}=\zeta(\xi)-n\zeta(\dd)-m\zeta(\ven)$~  
we get from \eqref{eq:zetasg} 
\be\label{eq:H1} 
(\wh{w}-\wt{w})(w-\wh{\wt{w}})=\wp(\dd)-\wp(\ven)\  , \ee 
which is the lattice potential KdV equation, cf. \cite{Hir,NQC}, or the H1 equation in the ABS list, by the 
identification ~$\wp(\dd)-e\Equiv p^2$~~,~$\wp(\ven)-e\Equiv q^2$~~ 
where $e=\wp(\oa)$ being a branch point of the Weierstrass curve: 
$$ \Gamma=\{(x,y)\,|\,y^2=4(x-e)(x-e')(x-e'')\,\}  $$ 

Setting ~$v(\aR)=v_{n,m}(\aR)= \Phi_\aR(\xi)$~ and identifying parameters
 ~$\po_\aR=\Phi_\aR(\dd)$~~,~$\qo_\aR=\Phi_\aR(\ven)$~ we also have 
\bse\begin{eqnarray}
&&\frac{\po_\aR\wh{v}(\aR)-\qo_\aR\wt{v}(\aR)}{\wh{\wt{v}}(\aR)}=\wh{w}-\wt{w}\\ 
&& 
\frac{\po_\aR v(\aR)+\qo_{-\aR}\wh{\wt{v}}(\aR)}{\wt{v}(\aR)}=w-\wh{\wt{w}}  
\end{eqnarray}\ese
from which we get that $v(\aR)$ obeys a quadrilateral equation which generalises the lattice potential 
MKdV equation, namely 
\be\label{eq:potMKdV}
\po_\aR v(\aR)\wh{v}(\aR)+\qo_{-\aR}\wh{v}(\aR)\wh{\wt{v}}(\aR)=\qo_\aR v(\aR)\wt{v}(\aR)+
\po_{-\aR}\wt{v}(\aR)\wh{\wt{v}}(\aR) \ee  
Furthermore, setting ~$s(\aR,\bb)=\Phi_{\aR+\bb}(\xi)$~~, we have 
\be\label{eq:vv} 
\wt{v}(\aR)\,v(\bb)=\po_\aR s(\aR,\bb)-\po_{-\bb}\wt{s}(\aR,\bb)\quad,\quad 
\wh{v}(\aR)\,v(\bb)=\qo_\aR s(\aR,\bb)-\qo_{-\bb}\wh{s}(\aR,\bb) \ee 
from which we can deduce:
\be\label{eq:Q30}
(\po_\aR s-\po_{-\bb}\wt{s})(\po_\bb \wh{s}-\po_{-\aR}\wh{\wt{s}})= 
(\qo_\aR s-\qo_{-\bb}\wh{s})(\qo_\bb \wt{s}-\qo_{-\aR}\wh{\wt{s}}) \ee  
which by a point transformation can be shown to be equivalent to $({\rm Q3})_0$, i.e. the Q3 
equation with parameter $\Delta=0$. 

\subsection{Elliptic (non-germinating) seed solution for Q4}
We have observed that the addition formulae for elliptic functions, and combinations thereof, can be naturally 
interpreted as quadrilateral lattice equations of KdV type. Thus, the elliptic functions constitute elementary 
solutions of these lattice equations. Such solutions can be viewed as \textit{seed solutions} for B\"acklund 
chains yielding more complicated solutions which could be considered to be elliptic analogues of the soliton 
solutions. In the remainder of the paper we will construct entire families of such solutions, not only for lattice equations 
of KdV type, but for all equations in the ABS list of \cite{ABS} with the exception of Q4 which is treated 
elsewhere, \cite{AN2}. Here, however we will restrict ourselves by showing that also Q4 admits elementary 
solutions in terms of elliptic functions, and we will do that by presenting the so-called three-leg formula 
for the equation in the form \eqref{eq:Qeqsd}. This particular form which essentially is the Jacobi form of Q4, first derived in 
\cite{Hiet}, and for which the 3-leg form was given in \cite{BS2}, admits 
a natural parametrization in terms of Weierstrass functions as follows. 

Noting the periodicity properties of the Weierstrass functions, i.e., 
\be\label{eq:period}
\sg(\xi+2\oa)=-e^{2\eta(\xi+\oa)}\sg(\xi)\quad,\quad \zeta(\xi+2\oa)=\zeta(\xi)+2\eta\quad,\quad \wp(\xi+2\oa)=\wp(\xi)\  , 
\ee
in which $\eta=\zeta(\oa)$, and $\oa$ one of the halfperiods, we can introduce the function 
\be\label{eq:W}
W(x)=\Phi_{\oa}(x) e^{-\eta x}\  , 
\ee   
which obeys the relations
\be\label{eq:Wrel}
W(\xi)W(\xi')=W(\xi+\xi')\left[\zeta(\xi)+\zeta(\xi')+\eta-\zeta(\xi+\xi'+\oa)\right]\quad,\quad W(-\xi)=-W(\xi)\  .  
\ee 
Furthermore, we have 
\be\label{eq:WWrel}
W(\xi)W(\xi+\oa)=-\frac{e^{\eta\oa}}{\sg^2(\oa)}\quad,\quad W^2(\xi)=\wp(\xi)-e\   , 
\ee 
and hence as a consequence
\be\label{eq:wPrel}
(\wp(\xi)-e)(\wp(\xi+\oa)-e)=(e'-e)(e''-e)=g\   , 
\ee
where $e'=\wp(\oa')$, $e''=\wp(\oa+\oa')=-e-e'$ are the other branch points of the standard Weierstrass curve ~$y^2=4(x-e)(x-e')(x-e'')$~. 
The crux of the matter is the following statement

\begin{prop}\label{th:prop1}
For arbitrary variables $X$, $Y$, $Z$ we have the following identity 
\begin{eqnarray}\label{eq:3leg}
&& \left(X-W(\xi+\dd)\right)\left(Y-W(\xi-\ven)\right) \left(Z-W(\xi-\dd+\ven)\right) \nn \\ 
&& \quad -t\left(X-W(\xi-\dd)\right)\left(Y-W(\xi+\ven)\right) \left(Z-W(\xi+\dd-\ven)\right)= \nn \\ 
&& = s\left[ \phantom{\frac{a}{b}} W(\dd)\left(Y\,W(\xi)+XZ\right)-W(\ven)\left(X\,W(\xi)+YZ\right) \right. \nn \\ 
&& \qquad \left. +\frac{W(\dd)W(\ven)}{W(\dd-\ven)} 
\left(Z\,W(\xi)+XY\right)-\frac{1}{W(\dd-\ven)}\left( g+XYZ\,W(\xi)\right)\right] \  ,
\end{eqnarray} 
in which 
\be\label{eq:st} t=\frac{\sg(\xi-\dd)\,\sg(\xi+\ven)\,\sg(\xi+\dd-\ven)}{\sg(\xi+\dd)\,\sg(\xi-\ven)\,\sg(\xi-\dd+\ven)}\quad,
\quad s= (t-1)\frac{W(\dd-\ven)}{W(\xi)}\   . \ee
\end{prop}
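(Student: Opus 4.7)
Both sides of \eqref{eq:3leg} are polynomials of degree at most one in each of $X$, $Y$, $Z$ separately; that is, they are multi-affine in the triple $(X,Y,Z)$. Hence the identity is equivalent to matching the coefficients of the eight monomials $1,\;X,\;Y,\;Z,\;XY,\;YZ,\;XZ,\;XYZ$. I would carry out these eight matchings in order of increasing complexity, using only the relations \eqref{eq:3term}, \eqref{eq:Phizeta}, \eqref{eq:wpsg} for the Weierstrass family and the derived relations \eqref{eq:Wrel}, \eqref{eq:WWrel}, \eqref{eq:wPrel} for the function $W$.

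\textbf{Easy coefficients.} The coefficient of $XYZ$ is $1-t$ on the left and $-sW(\xi)/W(\dd-\ven)$ on the right, and the equality is immediate from the definition \eqref{eq:st} of $s$. The three quadratic coefficients (of $XY$, $YZ$, $XZ$) all reduce, after clearing denominators with the formula for $s$, to an identity of the shape
\[
W(\xi)\bigl[tW(\xi+\dd-\ven)-W(\xi-\dd+\ven)\bigr]=(t-1)W(\dd)W(\ven),
\]
together with its two cyclic/permuted partners. Each such identity becomes, upon writing $W$ in terms of $\sg$ via \eqref{eq:W} and multiplying through by the appropriate product of $\sg$'s, a direct consequence of the three-term $\sg$-identity \eqref{eq:3term} with suitably chosen arguments.

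\textbf{Linear and constant coefficients.} The three linear coefficients involve products of the form $W(\xi+\dd)W(\xi-\ven)$, which I would expand using the addition law \eqref{eq:Wrel} into $W(2\xi+\dd-\ven)$ times a $\zeta$-combination; the resulting bracket is again amenable to the three-term identity once combined with the corresponding term in which $(\dd,\ven)\mapsto(-\dd,-\ven)$, weighted by $t$. The constant-term identity is the genuinely new input: one must evaluate
\[
W(\xi+\dd)W(\xi-\ven)W(\xi-\dd+\ven)-tW(\xi-\dd)W(\xi+\ven)W(\xi+\dd-\ven),
\]
which I would do by applying \eqref{eq:Wrel} twice (pairing the first two factors, then multiplying by the third), so that both products share the same argument $W(2\xi)$-piece that cancels, leaving a remainder expressible through $W(\xi)$ and a piece independent of $\xi$. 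The $\xi$-independent piece is where the constant $g=(e'-e)(e''-e)$ enters: it is supplied by \eqref{eq:WWrel} and the elliptic-curve relation \eqref{eq:wPrel}, noting that the combination of $\zeta$-values that appears differs from $\zeta(\xi)+\cdots$ precisely by a $\wp$-difference that one evaluates at half-periods using \eqref{eq:wPrel}.

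\textbf{Main obstacle.} The genuinely delicate step is the constant-term matching, because it is the only place where the quantity $g$ appears and therefore where the elliptic-curve identity \eqref{eq:wPrel} must intervene. All other coefficients live at the level of the ordinary three-term $\sg$-identity; tracking the bookkeeping of signs, arguments, and the exponential factors $e^{-\eta x}$ coming from the definition \eqref{eq:W} (which must cancel cleanly in every coefficient for the overall identity to be free of quasi-periodic exponentials) is the main place where an error could creep in. Once the bookkeeping is organised — for instance by first rewriting everything in terms of $\Phi_\oa$ and factoring out a global exponential, so that only $\sg$-identities remain — the eight coefficient identities follow by repeated application of \eqref{eq:3term}, \eqref{eq:zetasg}, and \eqref{eq:wpsg}.
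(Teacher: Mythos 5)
Your proposal is correct and follows essentially the same route as the paper: the paper also reduces \eqref{eq:3leg} to matching the coefficients of the eight monomials in $X,Y,Z$, observes that all but one of the resulting identities follow from the addition formula \eqref{eq:Wrel}, and singles out the constant term as the only place where $g$ enters. For that last coefficient the paper's realization of your "delicate step" is to apply the three-term $\sg$-identity in the form $\sg(\xi+\dd)\sg(\xi-\ven)\sg(\xi+\ven-\dd)\sg(\xi)-\sg(\xi-\dd)\sg(\xi+\ven)\sg(\xi-\ven+\dd)\sg(\xi)=\sg(2\xi)\sg(\dd-\ven)\sg(\dd)\sg(\ven)$ both at $\xi$ and at $\xi+\oa$, using the quasi-periodicity \eqref{eq:period} and the identification $g=e^{2\eta\oa}/\sg^4(\oa)$ — a slightly more direct bookkeeping than your double application of \eqref{eq:Wrel}, but the same idea.
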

\begin{proof}
Eq. \eqref{eq:3leg} can be established by direct computation through identities that hold for the coefficients of each of the 
monomials $XYZ$, $XY$, $XZ$, $YZ$, $X$, $Y$, $Z$ and 1. The corresponding coefficients boil down to the following 
identities: 
\begin{eqnarray*}
t&=& \frac{W(\xi+\ven-\dd)\,W(\xi)-W(\dd)\,W(\ven)}{W(\xi+\dd-\ven)\,W(\xi)-W(\dd)\,W(\ven)}  \\ 
 &=& \frac{W(\xi-\ven)\,W(\xi)-W(\dd-\ven)\,W(\dd)}{W(\xi+\ven)\,W(\xi)-W(\dd-\ven)\,W(\dd)}  \\ 
 &=& \frac{W(\xi+\ven-\dd)\,W(\xi-\ven)-W(\dd-\ven)\,W(\ven)}{W(\xi+\dd-\ven)\,W(\xi+\ven)-W(\dd-\ven)\,W(\ven)}  \\ 
 &=& \frac{W(\xi+\dd)\,W(\xi-\ven)+W(\dd)\,W(\ven)}{W(\xi-\dd)\,W(\xi+\ven)+W(\dd)\,W(\ven)}  \\ 
 &=& \frac{g+W(\xi+\dd)\,W(\xi-\ven)W(\xi+\ven-\dd)\,W(\xi)}{g+W(\xi-\dd)\,W(\xi+\ven)W(\xi-\ven+\dd)\,W(\xi)}\  , 
\end{eqnarray*}
and which are all equal to the form given in \eqref{eq:st}. All these identities can be proven by using the basic addition formula 
\eqref{eq:Wrel}, the only exception being the last equality which follows from the following version of the three-term relation 
for the $\sg$-function:
$$
\sg(\xi+\dd)\,\sg(\xi-\ven)\,\sg(\xi+\ven-\dd)\,\sg(\xi) 
-\sg(\xi-\dd)\,\sg(\xi+\ven)\,\sg(\xi-\ven+\dd)\,\sg(\xi)= \sg(2\xi)\,\sg(\dd-\ven)\,\sg(\dd)\,\sg(\ven)\   ,  
$$
and a similar formula with $\xi$ replaced by $\xi+\oa$, together with the identification ~$g=e^{2\eta\oa}/\sg^4(\oa)$~ and the periodicity 
property \eqref{eq:period} of the $\sg$-function. 
\end{proof}

Identifying in \eqref{eq:3leg} the parameters 
$$ P=W(\dd)\quad,\quad Q=W(\ven)\quad,\quad p=\zeta(\dd+\oa)-\zeta(\dd)-\eta\quad,\quad q=\zeta(\ven+\oa)-\zeta(\ven)-\eta\  , $$ 
which implies that the points $\ssp=(P,p)$, $\ssq=(Q,q)$ lie on the elliptic curve 
$$ \bar{\Gamma}=\{ (P,p)\in\mathbb{C}^2\,|\, p^2=P^2+3e+gP^{-2}\,\}\  , $$
cf. \cite{NP}, and identifying $X=W(\xi)$, we see that the expression in brackets on the right-hand side of \eqref{eq:3leg} can be written in terms of  the 
following quadrilateral expression
\begin{equation}
\cQ_{\ssp,\ssq}(u,\wt{u},\wh{u},\wh{\wt{u}})\Equiv P(u\wh{u}+\wt{u}\wh{\wt{u}})-Q(u\wt{u}+\wh{u}\wh{\wt{u}})
+\frac{P^2-Q^2}{p+q}\left[\left(\wh{u}\wt{u}+u\wh{\wt{u}}\right)-
\frac{1}{PQ}\left(g+u\wt{u}\wh{u}\wh{\wt{u}}\right)\right]\  .  \label{eq:newQ4}
\end{equation} 
It is not hard to see that the equation ~$\cQ_{\ssp,\ssq}(u,\wt{u},\wh{u},\wh{\wt{u}})=0$~ is, up to some simple scaling transformations, equivalent to the Q4 equation in the form \eqref{eq:Qeqsd}. 
As a direct corollary of Proposition \ref{th:prop1} we have now that ~$u=W(\xi)=W(\xi_0+n\dd+m\ven)$~, ~$X=\wt{u}=W(\wt{\xi})$~, ~$Y=\wh{u}=W(\wh{\xi})$~, 
~$Z=\wh{\wt{u}}=W(\wh{\wt{\xi}})$~, is an elliptic solution of the Q4 equation, 
albeit a trivial one in the sense that as a seed solution of a B\"acklund chain it is \textit{non-germinating}, in the sense of the discussion of \cite{AHN}. 
In the latter paper we have constructed germinating seed solutions for Q4, in a slightly dfferent but equivalent parametrisation, as well 
as 1-soliton solutions. In a recent paper \cite{AN2} we have extended these to $N$-soliton solutions for Q4. 
 

\section{Cauchy Matrix Scheme} 

We will develop now a scheme along the lines of the paper \cite{NAH} for elliptic soliton solutions, based 
on elliptic Cauchy matrices. In this section we derive the basic relations, and in the next section we 
will use these relations to find a general elliptic $N$-soliton solution for Q3. 

\subsection{Basic ingredients}

At this point let us introduce the Lam\'e function 
\be\label{eq:Psi}
\Psi_\xi(\kp)\Equiv  \Phi_\xi(\kp)\,e^{-\zeta(\xi)\kp}\   ,  
\ee
which coincides with \eqref{eq:Phi} up to an exponential factor, breaking the symmetry between 
the argument of the function and the suffix\footnote{Although most of the results of this paper can be obtained in terms of the 
$\Phi$-function alone, working with the function $\Psi$ has certain advantages as the latter is the more natural function 
in connection with associated continuum equations. We note, however, that the inclusion of the exponential factor amounts to a specific gauge 
transformation on the quantities defined later on in the constructions, and hence could be removed without affecting the main results.}. 
The basic identities for the $\Psi$ function are the following:
\bse\label{eq:Psiids}\bea 
  \Psi_\xi(\kp)\Psi_\dd (\ld)&=&e^{\eta_\dd\kp} \Psi_{\xi+\dd}(\kp)\Psi_\dd(\ld-\kp)+ 
e^{\eta_\dd\ld} \Psi_\xi(\kp-\ld)\Psi_{\xi+\dd}(\ld)\   , \label{eq:Psiidsa} \\ 
 \Psi_\xi(\kp)\Psi_\dd(\kp) &=& e^{\eta_\dd\kp}\Psi_{\xi+\dd}(\kp)\,\left[\zeta(\xi)+\zeta(\dd)+\zeta(\kp)-\zeta(\xi+\dd+\kp) \right]\   , \label{eq:Psiidsb}\\ 
\Psi_\xi(\kp)\Psi_\xi(\ld) &=& \Psi_{\xi}(\kp+\ld)\,\left[\zeta(\xi)+\zeta(\kp)+\zeta(\ld)-\zeta(\xi+\kp+\ld) \right]\   , 
\label{eq:Psiidsc}
\eea\ese 
in which we have introduced 
\be\label{eq:etadd}
\eta_\dd=\eta_\dd(\xi)=\zeta(\xi+\dd)-\zeta(\xi)-\zeta(\dd)=\frac{1}{2}\,\frac{\wp'(\xi)-\wp'(\dd)}{\wp(\xi)-\wp(\dd)}\  . 
\ee 
Furthermore, we have the symmetry: ~$\Psi_\dd(-\kp)=-\Psi_{-\dd}(\kp)$~. 

The starting point for our construction is the ``bare'' non-autonomous Cauchy matrix~ 
\be\label{eq:Cauchy} 
\boldsymbol{M^0}=\left(M^0_{i,j}\right)_{i,j=1,\dots,N}\quad ,\quad  M^0_{i,j}(\xi)\Equiv  \Psi_\xi(\kp_i+\kp_j)\  , 
\ee 
depending on a variable $\xi$ which is linear in the independent variables $n$, $m$, namely ~$\xi=\xi_0+n\dd+m\ven$~~, with 
$\dd$, $\ven$ being the corresponding \textit{lattice parameters}. We will assume that the set rapidity parameters $\{\kp_i, i=1,\dots,N\}$
is such that $\kp_i+\kp_j\neq 0$ (modulo the period lattice of the Weierstrass functions) .

Furthermore, we redefine henceforth the lattice parameters $\po_\kp$, $\qo_\kp$ of section 3, to include an exponential factor, and thus 
we define 
\be\label{eq:pq} p_\kp=\Psi_\dd(\kp)\quad,\quad q_\kp=\Psi_\ven(\kp)  \ee
Setting $\kp=\pm\kp_i$, $\kp=\pm\kp_j$, we can derive from the basic addition formula \eqref{eq:Psiidsa} the following dynamical 
properties of the elliptic Cauchy matrix:  
\begin{eqnarray*}
p_{\kp_i}M_{i,j}^0 &=& \Psi_\dd(\kp_i)\Psi_\xi(\kp_i+\kp_j) \\ 
&=& e^{\eta_\dd(\kp_i+\kp_j)}\Psi_{\xi+\dd}(\kp_i+\kp_j)\Psi_\dd(-\kp_j) + 
e^{\eta_\dd\kp_i}\Psi_{\xi+\dd}(\kp_i)\Psi_\xi(\kp_j) \\ 
&=& \wt{M}_{i,j}^0p_{-\kp_j}e^{\eta_\dd(\kp_i+\kp_j)}+e^{\eta_\dd\kp_i}\Psi_{\xi+\dd}(\kp_i)\Psi_\xi(\kp_j)\  . 
\end{eqnarray*}
Similarly, we have 
$$ p_{-\kp_i}\wt{M}_{i,j}^0=M_{i,j}^0p_{\kp_j}e^{-\eta_\dd(\kp_i+\kp_j)}-e^{-\eta_\dd\kp_i}\Psi_\xi(\kp_i)\Psi_{\xi+\dd}(\kp_j)\  .  $$  

We introduce now the plane-wave factors (i.e., discrete exponential functions)
\bea\label{eq:rho}
 \rho(\kp)=\rho_{n,m}(\kp)&=&\left(e^{-2\zeta(\dd)\kp}\,\frac{p_{-\kp}}{p_\kp}\right)^{~n~} 
\left(e^{-2\zeta(\ven)\kp}\,\frac{q_{-\kp}}{q_\kp}\right)^{~m~}e^{2\zeta(\xi)\kp}\rho_{0,0}(\kp) \nn \\ 
&=&\left(\frac{\sg(\kp-\dd)}{\sg(\kp+\dd)}\right)^{~ n~}
\left(\frac{\sg(\kp-\ven)}{\sg(\kp+\ven)}\right)^{~ m~}e^{2\zeta(\xi)\kp}\rho_{0,0}(\kp)  \  ,  \eea 
and for the specific values $\kp_i$ of $\kp$ the quantities ~$\rho_i\Equiv \rho(\kp_i)$~ obeying the shift relations 
\be\label{eq:rhoshifts}
\frac{\wt{\rho}_i}{\rho_i} = e^{2\eta_\dd\kp_i}\frac{p_{-\kp_i}}{p_{\kp_i}}\quad, \quad 
\frac{\wh{\rho}_i}{\rho_i} = e^{2\eta_\ven\kp_i}\frac{q_{-\kp_i}}{q_{\kp_i}}\  , 
\ee 
where the superscripts $\wt{\phantom{a}}$, $\wh{\phantom{a}}$ as before denote the elementary lattice shifts with 
regard to the independent variables $n$ and $m$ respectively. Use has been made of \eqref{eq:etadd}, noting that ~$\wt{\xi}=\xi+\dd$~, 
and a similar relation for $\eta_\ven$ using ~$\wh{\xi}=\xi+\ven$~.

Next we introduce the $N$-component vectors 
\be\label{eq:rs} 
\brr=\left(\rho_i\Psi_\xi(\kp_i)\right)_{i=1,\dots,N}\quad,\quad \bs=\left(c_j\Psi_\xi(\kp_j)\right)_{j=1,\dots,N} \ee 
where $c_j$ are $N$ \textit{constants} w.r.t. the discrete variables $n$, $m$, and ~$\rho_i=\rho_{n,m}(\kp_i)$~~, in terms of which 
we define now the ``dressed'' Cauchy matrix: 
\be\label{eq:MCauchy} \bM=\left(\bM_{i,j}\right)_{i,j=1,\dots,N}\quad, \quad \bM_{i,j}=\rho_i\bM_{i,j}^0 c_j\  . \ee  
As a consequence of the relations given earlier, and employing the definitions of the plane-wave factors, 
we can now describe the discrete dynamics as follows:

\begin{lemma}
The dressed Cauchy matrix $\bM$, defined in \eqref{eq:MCauchy}, obeys the following linear relations under elementary shifts 
of the independent variables $n$ 
\bse\label{eq:qM}\begin{eqnarray}
&& e^{\eta_\dd\bK}p_{-\bK}\bM-\wt{\bM}p_{-\bK}e^{\eta_\dd\bK} = \wt{\brr}\,\bs^T \label{eq:pMa}\\ 
&& \bM p_{\bK}e^{-\eta_\dd\bK}-e^{-\eta_\dd\bK}p_{\bK}\wt{\bM} = \brr\,\wt{\bs}^T\  , \label{eq:pMb}
\end{eqnarray}\ese 
and under shifts of the variable $m$ the similar relations:
\bse\label{eq:pM}\begin{eqnarray}
&& e^{\eta_\ven\bK}q_{-\bK}\bM-\wh{\bM}q_{-\bK}e^{\eta_\ven\bK} = \wh{\brr}\,\bs^T \label{eq:qMa} \\ 
&& \bM q_{\bK}e^{-\eta_\ven\bK}-e^{-\eta_\ven\bK}q_{\bK}\wh{\bM} = \brr\,\wh{\bs}^T\  . \label{eq:qMb} 
\end{eqnarray}\ese
\end{lemma}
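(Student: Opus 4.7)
My plan is to derive the matrix identities componentwise, starting from the two scalar identities for $M^0_{i,j}$ that are already worked out in the preamble to the lemma, and then re-dressing with the plane-wave factors $\rho_i$ and the constants $c_j$.

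First I would observe that the first relation proved just before the lemma,
\[
p_{\kp_i}M_{i,j}^0 = \wt M_{i,j}^0\,p_{-\kp_j}\,e^{\eta_\dd(\kp_i+\kp_j)} + e^{\eta_\dd\kp_i}\,\Psi_{\xi+\dd}(\kp_i)\,\Psi_\xi(\kp_j),
\]
is exactly the scalar form we need for (\ref{eq:pMa}) up to rearrangement. Multiplying through by $\wt{\rho}_i c_j$, substituting $\bM_{i,j}=\rho_i M^0_{i,j}c_j$, $\wt\bM_{i,j}=\wt\rho_i\wt M^0_{i,j}c_j$, and then inserting the plane-wave shift relation $\wt\rho_i/\rho_i = e^{2\eta_\dd\kp_i}\,p_{-\kp_i}/p_{\kp_i}$ from (\ref{eq:rhoshifts}), I expect the prefactor of $\bM_{i,j}$ on the left to collapse neatly to $e^{\eta_\dd\kp_i}p_{-\kp_i}$, and the inhomogeneous term on the right to become $\wt\rho_i\Psi_{\xi+\dd}(\kp_i)\cdot c_j\Psi_\xi(\kp_j)=\wt r_i\,s_j$, which is precisely the $(i,j)$ entry of $\wt{\brr}\,\bs^T$. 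Reading the diagonal scalars $e^{\eta_\dd\kp_i}$ and $p_{-\kp_i}$ as the diagonal matrices $e^{\eta_\dd\bK}$ and $p_{-\bK}$ (evaluated by their diagonal action on the $i$th row and $j$th column) then yields (\ref{eq:pMa}).

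For (\ref{eq:pMb}) I would use the companion scalar identity
\[
p_{-\kp_i}\wt M_{i,j}^0 = M_{i,j}^0\,p_{\kp_j}\,e^{-\eta_\dd(\kp_i+\kp_j)} - e^{-\eta_\dd\kp_i}\,\Psi_\xi(\kp_i)\,\Psi_{\xi+\dd}(\kp_j),
\]
and run the same re-dressing: multiply by $\rho_i c_j$, use $\rho_i/\wt\rho_i = e^{-2\eta_\dd\kp_i}p_{\kp_i}/p_{-\kp_i}$, and observe that the inhomogeneous term becomes $r_i\cdot c_j\Psi_{\xi+\dd}(\kp_j) = r_i\,\wt s_j$, the $(i,j)$ entry of $\brr\,\wt\bs^T$. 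The relations (\ref{eq:qMa})--(\ref{eq:qMb}) for the $m$-shift follow by literally the same argument with $\dd\to\ven$, $p_\kp\to q_\kp$, $\eta_\dd\to\eta_\ven$ and $\wt{\phantom a}\to\wh{\phantom a}$, since the derivation used only the elliptic addition formula (\ref{eq:Psiidsa}), which is symmetric in the roles of $\dd$ and $\ven$.

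The only slightly delicate point, and the one I would treat most carefully, is bookkeeping of the diagonal factors: one must verify that the exponentials $e^{\pm\eta_\dd\kp_i}$ and the $p_{\pm\kp_i}$ combine exactly as they appear flanking $\bM$ and $\wt\bM$ in the lemma, rather than contributing spurious cross factors. This is routine since all the offending scalars are diagonal (hence commute among themselves), but it is where a sign or exponent slip would break the identification with the rank-one terms $\wt{\brr}\bs^T$ and $\brr\wt{\bs}^T$. No further structural input is needed beyond the symmetry $M^0_{i,j}=M^0_{j,i}$ (automatic from $\Psi_\xi(\kp_i+\kp_j)=\Psi_\xi(\kp_j+\kp_i)$), the addition formula (\ref{eq:Psiidsa}), and the definitions of $\rho_i$, $\brr$, $\bs$.
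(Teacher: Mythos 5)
Your proposal is correct and is precisely the argument the paper intends: the lemma is stated as an immediate consequence of the two displayed scalar identities for $M^0_{i,j}$ together with the shift relations \eqref{eq:rhoshifts}, and your componentwise re-dressing by $\rho_i c_j$ (with the common factor $e^{\pm\eta_\dd\kp_i}$ cancelling) reproduces \eqref{eq:pMa}--\eqref{eq:pMb} exactly, with the $m$-shift case following by the stated substitutions. No gaps.
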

Here we have adopted the matrix notation, with ~$\bK={\rm diag}(\kp_1,\dots,\kp_N)$~~ 
denoting the diagonal matrix of the $\kp_i$ parameters. Similarly, the symbols $p_{\pm\bK}$, $q_{\pm\bK}$ denote the 
diagonal matrices with entries $p_{\pm\kp_i}$, $q_{\pm\kp_i}$. 

In what follows we will employ the relations \eqref{eq:qM} and \eqref{eq:pM} to obtain nonlinear discrete equations for 
specific objects that we will subsequently define in terms of the Cauchy matrix $\bM$.

\subsection{The $\tau$-function and related basic objects}  
 
Introduce now the ~$\tau$-function~~: 
\be\label{eq:tau} f=f_{n,m}=\det\left(\boldsymbol{1}+\boldsymbol{M}\right) \  , \ee  
where $\boldsymbol{1}$ is the $N\times N$ unit matrix. 
{}From the relations \eqref{eq:qM} and \eqref{eq:pM} we can derive the discrete dynamics of the $\tau$-function by performing 
the following computation  
\begin{eqnarray*} 
\wt{f} &=& \det\left( \bun+\wt{\bM}\right)=
\det\left\{ \bun+\left[ e^{\eta_\dd\bK}p_{-\bK}\bM -\wt{\brr}\bs^T\right]e^{-\eta_\dd\bK}(p_{-\bK})^{-1} \right\} \\  
&=& \det\left\{ e^{\eta_\dd\bK}p_{-\bK} \left[ \bun + \bM - e^{-\eta_\dd\bK}(p_{-\bK})^{-1}\wt{\brr} \bs^T \right]
e^{-\eta_\dd\bK} (p_{-\bK})^{-1} \right\} \\ 
&=& \det\left\{ (\bun + \bM) \left[ \bun - (\bun+\bM)^{-1} e^{-\eta_\dd\bK}(p_{-\bK})^{-1}\wt{\brr} \bs^T \right]  \right\} \\ 
&=& f\,\det\left\{ \bun - (\bun+ \bM)^{-1} e^{-\eta_\dd\bK}(p_{-\bK})^{-1}\wt{\brr} \bs^T  \right\} \  , 
\end{eqnarray*} 
and similar computations involving the other lattice shift. Noting that 
from \eqref{eq:rhoshifts} it follows that  
\be 
e^{-\eta_\dd\bK}(p_{-\bK})^{-1}\wt{\brr}=e^{\eta_\dd\bK}p_{\bK}^{-1}\frac{\Psi_{\wt{\xi}}(\bK)}{\Psi_\xi(\bK)}\,\brr\  , 
\ee 
we can use \eqref{eq:Psiidsc} and the Weinstein-Aronszajn formula 
$$  \det\left( \bun+\ba\,\bbb^T\right) = 1+\bbb^T\cdot\ba $$  
to write 
\be\label{eq:ff}
\frac{\wt{f}}{f}= 1-\bs^T(\bun+\bM)^{-1}\,\left[\zeta(\xi)+\zeta(\bK)+\zeta(\dd)-\zeta(\xi+\bK+\dd)\right]^{-1}\brr 
= V_\dd~.  \ee 
Here and in what follows the notation $\zeta(\bK)$ denotes the diagonal matrix with entries $\zeta(\kp_i)$, ($i=1,\dots,N$), 
whilst all scalar terms are supposed to be considered as coefficients of the unit matrix. 
In \eqref{eq:ff} we have introduce the quantity: 
\bse\label{eq:UV}
\begin{equation}\label{eq:V}
V_\aR = 1-\bs^T\,(\bun+\bM)^{-1}(\chi_{\aR,~\bK~})^{-1}\brr= 1-\bs^T\,(\chi_{\aR,~\bK~})^{-1}(\bun+\bM)^{-1}\brr \  ,  
\end{equation}
for arbitrary parameter $\aR$. In addition to $V_\aR$ we will also need, for later purpose, the objects  
\begin{equation}
U_{\aR,\bb} = \bs^T\,(\chi_{\bb,~\bK~})^{-1}(\bun+\bM)^{-1}(\chi_{\aR,~\bK~})^{-1}\brr=U_{\bb,\aR} \  , 
\label{eq:U}
\end{equation}\ese  
where both $\aR,\bb~$ are arbitrary (complex) parameters. In \eqref{eq:U} and \eqref{eq:V} we have abbreviated 
\be\label{eq:chi} 
\chi_{\aR,\bb}=\chi_{\aR,\bb}(\xi)\Equiv  \zeta(\aR)+\zeta(\bb)+\zeta(\xi)-\zeta(\xi+\aR+\bb)\  ,  \ee
and where $\chi_{\aR,\bK}$ denotes the diagonal matrix with entries $\chi_{\aR,\kp_i}$.   

\subsection{Basic linear relations}

In order to derive relations for the objects $V_\aR$ and $U_{\aR,\bb}$ we need to introduce the $N$-component 
column- resp. row vectors: 
\bse\label{eq:butbu}\begin{eqnarray}
\bu_\aR &=& (\bun+\bM)^{-1}(\chi_{\aR,\bK})^{-1}\brr \  , \label{eq:bu} \\ 
\tbu_\bb &=& \bs^T\,(\chi_{\bb,\bK})^{-1}(\bun+\bM)^{-1} \  ,   \label{eq:tbu}
\end{eqnarray}\ese 
Performing the following calculation:
\begin{eqnarray*}
&&\wt{\bu}_\aR= (\bun+\wt{\bM})^{-1}(\wt{\chi}_{\aR,\bK})^{-1}\wt{\brr}=(\bun+\wt{\bM})^{-1}(\wt{\chi}_{\aR,\bK})^{-1}
e^{2\eta_\dd\bK}\frac{\Psi_{\wt{\xi}}(\bK)p_{-\bK}}{\Psi_\xi(\bK)p_{\bK}}\brr \\ 
&& \Rightarrow\quad e^{-\eta_\dd\bK}p_{\bK}(\bun+\wt{\bM})\wt{\bu}_\aR= \frac{\Phi_{\wt{\xi}+\aR}(\bK)}
{\Phi_{\wt{\xi}}(\bK)\,\Phi_{\aR}(\bK)}\,
e^{\eta_\dd\bK}\frac{\Psi_\dd(-\bK)\,\Psi_{\wt{\xi}}(\bK)}{\Psi_\xi(\bK)}\,\brr \\ 
 && \Rightarrow\quad \left[(\bun+\bM)p_{\bK}e^{-\eta_\dd\bK}-\brr\,\wt{\bs}^T\right]\,\wt{\bu}_\aR = 
-\frac{\zeta(\bK)+\zeta(\wt{\xi}+\aR)-\zeta(\dd)-\zeta(\bK+\xi+\aR)}{\zeta(\bK)+\zeta(\xi)+\zeta(\aR)-\zeta(\bK+\xi+\aR)}\,\brr \\ 
&& \qquad\qquad = \left(-1+\frac{\zeta(\aR)+\zeta(\xi)-\zeta(\wt{\xi}+\aR)+\zeta(\dd)}
{\zeta(\bK)+\zeta(\xi)+\zeta(\aR)-\zeta(\bK+\xi+\aR)}\right)\,\brr
\end{eqnarray*}
multiplying both sides by ~$(\bun+\bM)^{-1}$~ and introducing the vector
\be\label{eq:bu0} 
\bu^0\Equiv (\bun+\bM)^{-1}\brr\  , 
\ee 
we get the relation
\bse\be\label{eq:burel} e^{-\eta_\dd\bK}p_{\bK}\wt{\bu}_\aR=(\wt{\bs}^T\,\wt{\bu}_\aR)\bu^0+
\chi_{\aR,\dd}\bu_\aR-\bu^0\quad\Rightarrow\quad 
 e^{-\eta_\dd\bK}{p}_{\bK}\wt{\bu}_\aR=-\wt{V}_\aR\,\bu^0+\chi_{\aR,\dd}\bu_\aR\  . \ee  
In a similar way one can derive the relation:\\ 
\be  e^{\eta_\dd\bK}p_{-\bK}\bu_\aR=V_\aR\,\wt{\bu}^0-\wt{\chi}_{\aR,-\dd}\wt{\bu}_\aR\  . \ee \ese  
A similar set of relations can be derived for the adjoint vectors \eqref{eq:tbu} which involves the 
adjoint vector to \eqref{eq:bu0}, namely 
\be\label{eq:tbu0} 
\tbu^0\Equiv \bs^T\,(\bun+\bM)^{-1}\  , 
\ee 
and obviously these relations all have their counterparts involving the other lattice shift related to shifts in 
the discrete independent variable $m$ instead of $n$.   

Summarising the results of these derivations, we have the following statement:
\begin{lemma}
The $N$-component vectors given in \eqref{eq:butbu}, together with the ones defined in \eqref{eq:bu0} and \eqref{eq:tbu0} obey 
the following set of linear difference equations 
\bse\label{eq:shiftrels}\begin{eqnarray}
 && e^{-\eta_\dd\bK}{p}_{\bK}\wt{\bu}_\aR=-\wt{V}_\aR\,\bu_0+\chi_{\aR,\dd}\bu_\aR\  ,  \\ 
 && e^{\eta_\dd\bK}p_{-\bK}\bu_\aR=V_\aR\,\wt{\bu}_0-\wt{\chi}_{\aR,-\dd}\wt{\bu}_\aR\  , \\ 
 && \tbu_\bb\, p_{\bK} e^{-\eta_\dd\bK} = V_{\bb}\,^{t\!}\wt{\bu}_0 - \wt{\chi}_{\bb,-\dd}\,^{t\!}\wt{\bu}_\bb  , \\ 
 && \,^{t\!}\wt{\bu}_\bb\, p_{-\bK} e^{\eta_\dd\bK} = -\wt{V}_\bb \tbu_0 + \chi_{\bb,\dd}\tbu_\bb\   , 
    \end{eqnarray}\ese 
and a similar set of relations involving the shifts in the variable $m$ obtained by replacing $p_{\pm\bK}$ by $q_{\pm\bK}$, 
$\dd$ by $\ven$ and $\wt{\phantom{a}}$ by $\wh{\phantom{a}}$.  
\end{lemma}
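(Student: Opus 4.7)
The plan is to derive each of the four identities by a direct computation parallel to the one sketched in the text for the first relation. Starting from the definitions \eqref{eq:bu}, \eqref{eq:tbu}, I would use the shift relations \eqref{eq:qM} to exchange $(\bun+\wt{\bM})$ with $(\bun+\bM)$ modulo a rank-one correction $\wt\brr\,\bs^T$ or $\brr\,\wt{\bs}^T$, then invoke the $\Psi$-addition formula \eqref{eq:Psiidsc} to collapse ratios of $\Psi$-values into the scalar quantities $\chi_{\aR,\dd}$, $\wt\chi_{\aR,-\dd}$ defined in \eqref{eq:chi}. The four statements correspond to the four choices of left- versus right-multiplication combined with \eqref{eq:pMa} versus \eqref{eq:pMb}. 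The $m$-direction analogues follow verbatim by the substitution $\dd\mapsto\ven$, $p_{\pm\bK}\mapsto q_{\pm\bK}$, $\wt{\phantom{a}}\mapsto\wh{\phantom{a}}$, since \eqref{eq:pM} is structurally identical to \eqref{eq:qM}.

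The first identity is essentially proved in the computation immediately preceding the lemma: applying \eqref{eq:pMb} to $(\bun+\wt\bM)\wt\bu_\aR$, using \eqref{eq:rhoshifts} and \eqref{eq:Psiidsc}, and multiplying by $(\bun+\bM)^{-1}$ yields
\[
e^{-\eta_\dd\bK}p_{\bK}\wt\bu_\aR - (\wt{\bs}^T\wt\bu_\aR)\bu^0 = -\bu^0 + \chi_{\aR,\dd}\bu_\aR .
\]
Since the shifted form of \eqref{eq:V} gives $\wt{\bs}^T\wt\bu_\aR = 1 - \wt V_\aR$, the ``$1$'' here combines with the ``$-1$'' on the right to leave precisely $-\wt V_\aR\,\bu^0$, producing the stated first identity.

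For the second identity I would instead apply \eqref{eq:pMa} in the form $e^{\eta_\dd\bK}p_{-\bK}(\bun+\bM) = (\bun+\wt\bM)p_{-\bK}e^{\eta_\dd\bK} + \wt\brr\,\bs^T$, act on $(\chi_{\aR,\bK})^{-1}\brr$, invert $(\bun+\wt\bM)$, and simplify the resulting $\Psi$-ratio via a shifted application of \eqref{eq:Psiidsc}; the scalar identity $\bs^T\bu_\aR = 1 - V_\aR$ absorbs the inhomogeneous constant and produces the clean coefficient $V_\aR$ of $\wt\bu^0$. The two adjoint identities are the corresponding right-hand versions: one starts from $\tbu_\bb = \bs^T(\chi_{\bb,\bK})^{-1}(\bun+\bM)^{-1}$ and applies \eqref{eq:pMa} or \eqref{eq:pMb} on the right, the rest of the bookkeeping being transposed versions of the above.

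The main obstacle is purely organisational: one must correctly track the mutually commuting diagonal factors $p_{\pm\bK}$, $e^{\pm\eta_\dd\bK}$, $(\chi_{\aR,\bK})^{-1}$, and must recognise that the constant produced by the $\Psi$-addition formula is precisely what combines with the ``$1$'' in \eqref{eq:V} to yield the compact coefficients $\pm V_\aR$ and $\pm\wt V_\aR$ appearing in the final identities. Once this identification is spotted, each relation drops out as a one-line rearrangement.
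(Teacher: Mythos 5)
Your proposal is correct and follows essentially the same route as the paper, which likewise proves only the first relation in detail (via \eqref{eq:pMb}, the shift of $\brr$, the addition formulae \eqref{eq:Psiids}, and the identification $\wt{\bs}^T\wt{\bu}_\aR=1-\wt{V}_\aR$ from \eqref{eq:V}) and obtains the remaining three by the symmetric/adjoint/$m$-shift variants. The only point worth flagging is that in the second and third relations the ``one-line rearrangement'' at the end hides a genuine three-term $\sigma$-function identity (the coefficient $\chi_{\bK,\dd}/\chi_{\aR,\bK}-1$ must be recognised as $-\wt{\chi}_{\aR,-\dd}/\wt{\chi}_{\aR,\bK}$), which follows from \eqref{eq:zetasg} but is not merely bookkeeping.
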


\subsection{Basic nonlinear relations} 

From the basic relations \eqref{eq:shiftrels} we observe that  
\begin{eqnarray*}
&& \bs^T\,(\chi_{\bb,\bK})^{-1}\left(\chi_{\aR,\dd}\bu_\aR-\wt{V}_\aR \bu_0 \right)=
\wt{\bs}^T\,\frac{\Psi_\xi(\bK)}{\Psi_{\wt{\xi}}(\bK)}\,\frac{\Phi_{\xi+\bb}(\bK)}{\Phi_\xi(\bK)\,\Phi_\bb(\bK)}\,e^{-\eta_\dd\bK}p_{\bK}\wt{\bu}_\aR \\ 
&& =\wt{\bs}^T\,\frac{\Phi_{\xi+\bb}(\bK)\Phi_\dd(\bK)}{\Phi_{\wt{\xi}}(\bK)\,\Phi_\bb(\bK)}\,\wt{\bu}_\aR = 
\wt{\bs}^T\,\frac{\zeta(\bK)+\zeta(\dd)+\zeta(\xi+\bb)-\zeta(\wt{\xi}+\bK+\bb)}
{\zeta(\bK)+\zeta(\bb)+\zeta(\wt{\xi})-\zeta(\wt{\xi}+\bK+\bb)}\,\wt{\bu}_\aR \nn \\ 
&& = \wt{\bs}^T\,\left(1+\frac{\zeta(\dd)-\zeta(\bb)+\zeta(\xi+\bb)-\zeta(\xi+\dd)}
{\zeta(\bK)+\zeta(\bb)+\zeta(\wt{\xi})-\zeta(\wt{\xi}+\bK+\bb)}\right)\,\wt{\bu}_\aR=(1-\wt{V}_\aR)-\wt{\chi}_{\bb,-\dd}\wt{U}_{\aR,\bb} \\
&& = \chi_{\aR,\dd}U_{\aR,\bb}-\wt{V}_\aR(1-V_\bb)\  , 
\end{eqnarray*}
from which we get the following relation:
\bse\label{eq:VV} 
\be\label{eq:pVV} \wt{V}_\aR\,V_\bb=1-\wt{\chi}_{\bb,-\dd}\wt{U}_{\aR,\bb}-\chi_{\aR,\dd}U_{\aR,\bb}\   . \ee  
Similarly we have
\be\label{eq:qVV} \wh{V}_\aR\,V_\bb=1-\wh{\chi}_{\bb,-\ven}\wh{U}_{\aR,\bb}-\chi_{\aR,\ven}U_{\aR,\bb}\   . \ee\ese  

These relations can also be rewritten as follows
\be\label{eq:PsiVV} 
\left(\Psi_{\wt{\xi}}(\aR)\,\wt{V}_\aR\right)\left(\Psi_\xi(\bb)\,V_\bb\right)=
p_\aR e^{-\eta_\dd\aR}\left[\Psi_\xi(\aR+\bb)(1-\chi_{\aR,\bb}U_{\aR,\bb})\right]
-p_{-\bb}e^{\eta_\dd\bb}\left[\Psi_{\wt{\xi}}(\aR+\bb)(1-\wt{\chi}_{\aR,\bb}\wt{U}_{\aR,\bb})\right]
\ee   
provided $\aR+\bb\neq 0$ (modulo the period lattice of the Weierstrass functions). 

We now intend to remove the $\xi$-dependence in the coefficients of \eqref{eq:PsiVV} by performing yet another change of variables, namely 
by defining 
\be\label{eq:WS} 
 W_\aR\Equiv \Psi_\xi(\aR)\,V_\aR\quad,\quad S_{\aR,\bb}\Equiv \Psi_\xi(\aR+\bb)(1-\chi_{\aR,\bb}U_{\aR,\bb})=S_{\bb,\aR}\quad,\quad \aR+\bb\neq 0\ 
{\rm (mod.\ \ root\ \ lattice},   
\ee 
We now have the following result:

\begin{lemma}
The following difference relations hold between the quantities defined in \eqref{eq:WS}  
\bse\label{eq:WW}\begin{eqnarray}
&& \wt{W}_\aR\,W_\bb= p_\aR e^{-\eta_\dd\aR}S_{\aR,\bb}-p_{-\bb}e^{\eta_\dd\bb}\wt{S}_{\aR,\bb}\  , \\ 
&& \wh{W}_\aR\,W_\bb= q_\aR e^{-\eta_\ven\aR}S_{\aR,\bb}-q_{-\bb}e^{\eta_\ven\bb}\wh{S}_{\aR,\bb}\  , 
\end{eqnarray}\ese 
where ~~ and $\aR+\bb\neq 0$ (mod. period lattice). Furthermore, in the latter case the relations \eqref{eq:WW} are 
replaced by the following ones:  
\bse\label{eq:WWs}\begin{eqnarray}
&& \wt{W}_\aR W_{-\aR}=p_\aR e^{-\eta_\dd\aR}\left\{ [\zeta(\xi)-\zeta(\wt{\xi})-\zeta(\aR)+\zeta(\aR+\dd)] 
+[\wp(\wt{\xi})-\wp(\aR)]\wt{U}_{\aR,-\aR}-[\wp(\xi)-\wp(\aR)]U_{\aR,-\aR}\right\}\  , \nn \\ 
&& \label{eq:pWW} \\ 
&& \wh{W}_\aR W_{-\aR}=q_\aR e^{-\eta_\ven\aR}\left\{ [\zeta(\xi)-\zeta(\wh{\xi})-\zeta(\aR)+\zeta(\aR+\ven)] 
+[\wp(\wh{\xi})-\wp(\aR)]\wh{U}_{\aR,-\aR}-[\wp(\xi)-\wp(\aR)]U_{\aR,-\aR}\right\}\  . \nn \\ 
&& \label{eq:qWW} 
\end{eqnarray}\ese 
\end{lemma}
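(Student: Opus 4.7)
The plan is to treat the two groups of identities separately. For the regular relations \eqref{eq:WW}, the strategy is an immediate substitution: the already-established identity \eqref{eq:PsiVV} combined with the definitions in \eqref{eq:WS} gives the first line of \eqref{eq:WW} on the nose, once one recognises $\Psi_{\wt{\xi}}(\aR)\wt{V}_\aR=\wt{W}_\aR$, $\Psi_\xi(\bb)V_\bb=W_\bb$, and $\Psi_\xi(\aR+\bb)(1-\chi_{\aR,\bb}U_{\aR,\bb})=S_{\aR,\bb}$ (with the analogous reading for the tilded quantities). The $m$-shift line then follows verbatim by the $\dd\leftrightarrow\ven$, $p_{\pm\bK}\leftrightarrow q_{\pm\bK}$, $\wt{\phantom a}\leftrightarrow\wh{\phantom a}$ version of \eqref{eq:PsiVV}, whose derivation rests on \eqref{eq:qMa}--\eqref{eq:qMb} and the corresponding shift relations in Lemma 4.2.

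For the singular relations \eqref{eq:WWs}, where the naive replacement $\bb=-\aR$ forces $\Psi_\xi(0)$ into the formula, the idea is to treat \eqref{eq:PsiVV} as an identity of meromorphic functions of $\bb$ and take the limit $\bb\to -\aR$. Two local expansions are needed: (i) $\chi_{\aR,\bb}(\xi)=(\wp(\xi)-\wp(\aR))(\aR+\bb)+O((\aR+\bb)^2)$, which follows directly from \eqref{eq:chi} together with $\zeta'=-\wp$ and the oddness of $\zeta$; and (ii) $\Psi_\xi(\epsilon)=\epsilon^{-1}-\tfrac12\wp(\xi)\epsilon+O(\epsilon^2)$, which follows from the Laurent expansions of $\sigma$ and $\zeta$ about the origin, or equivalently from \eqref{eq:Phiwp} after accounting for the $e^{-\zeta(\xi)\epsilon}$ factor in $\Psi$. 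Both $\Psi_\xi(\aR+\bb)$ and $\Psi_{\wt{\xi}}(\aR+\bb)$ then contribute simple poles at $\bb=-\aR$ of residue $1$, and the prefactors satisfy $p_\aR e^{-\eta_\dd\aR}=p_{-\bb}e^{\eta_\dd\bb}\big|_{\bb=-\aR}$, so the poles cancel exactly.

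Collecting the finite part is the main bookkeeping step. The $O(\epsilon^0)$ piece of $\Psi_\xi(\aR+\bb)(1-\chi_{\aR,\bb}U_{\aR,\bb})$ equals $-(\wp(\xi)-\wp(\aR))U_{\aR,-\aR}$, and that of the tilded factor equals $-(\wp(\wt{\xi})-\wp(\aR))\wt{U}_{\aR,-\aR}$; the remaining finite contribution comes from the $O(\epsilon)$ correction to $\Psi_{\wt\xi}(\epsilon)$, the Taylor expansion of $p_{-\bb}$ about $\bb=-\aR$, and the Taylor expansion of $e^{\eta_\dd\bb}$. After factoring out $p_\aR e^{-\eta_\dd\aR}$, this extra scalar assembles to $\partial_\aR\ln\Psi_\dd(\aR)-\eta_\dd(\xi)$. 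Using $\partial_\aR\ln\Psi_\dd(\aR)=\zeta(\aR+\dd)-\zeta(\aR)-\zeta(\dd)$ together with $\eta_\dd(\xi)=\zeta(\wt{\xi})-\zeta(\xi)-\zeta(\dd)$ from \eqref{eq:etadd}, this scalar becomes $\zeta(\xi)-\zeta(\wt{\xi})-\zeta(\aR)+\zeta(\aR+\dd)$, which is exactly the bracket in \eqref{eq:pWW}; the companion identity \eqref{eq:qWW} then follows from the analogous $m$-shift computation.

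The main obstacle I anticipate is the bookkeeping: one must track $O(\epsilon)$ contributions from three separate expansions (the $\Psi$ pole, the zero of $\chi$, and the prefactor $p_{-\bb}e^{\eta_\dd\bb}$) and verify pole cancellation before the constant term can be recognised in its clean form. A cleaner alternative, which would avoid doubling the work, is to redo the derivation of \eqref{eq:pVV}--\eqref{eq:PsiVV} keeping $\bb$ as a free off-shell parameter throughout, multiply through by $\Psi_{\wt{\xi}}(\aR)\Psi_\xi(\bb)$, and take the limit only at the very end; this symmetrises the pole structure on both sides and isolates the $\wp$-contributions automatically via the expansion (i) above.
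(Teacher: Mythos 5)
Your proposal is correct and follows the same route the paper (implicitly) takes: \eqref{eq:WW} is just \eqref{eq:PsiVV} rewritten via the definitions \eqref{eq:WS}, and \eqref{eq:WWs} is the limit $\bb\to-\aR$ of that identity, with the poles of $\Psi_\xi(\aR+\bb)$ and $\Psi_{\wt\xi}(\aR+\bb)$ cancelling against the simple zero of $\chi_{\aR,\bb}$ and the matching prefactors, and with $\partial_\aR\ln\Psi_\dd(\aR)-\eta_\dd=\zeta(\xi)-\zeta(\wt\xi)-\zeta(\aR)+\zeta(\aR+\dd)$ supplying the $\zeta$-bracket. One cosmetic remark: the $O(\epsilon)$ term $-\tfrac12\wp(\wt\xi)\epsilon$ of $\Psi_{\wt\xi}(\epsilon)$ does not actually contribute to the finite part (it enters only at $O(\epsilon)$); the finite part of the tilded term comes solely from the $\epsilon^{-1}$ pole multiplied against the first-order Taylor corrections of $p_{-\bb}$, $e^{\eta_\dd\bb}$ and $(1-\wt\chi_{\aR,\bb}\wt U_{\aR,\bb})$, exactly as your final assembled expression reflects.
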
 
As a consequence of the relation \eqref{eq:WWs}, taking the limit $\aR\rightarrow-\dd$ and $\aR\rightarrow-\dd$ respectively, 
we obtain the inversion relations 
\be\label{eq:WWinv} 
\wt{W}_{-\dd} W_{\dd}=-\frac{e^{[\zeta(\wt{\xi})-\zeta(\xi)]\dd}}{\sg^2(\dd)}\quad,\quad 
\wh{W}_{-\ven} W_{\ven}=-\frac{e^{[\zeta(\wh{\xi})-\zeta(\xi)]\ven}}{\sg^2(\ven)}\  .  
 \ee 

\subsection{NQC equation} 
We now present the nonlinear P$\Delta$E for the variables $S_{\aR,\bb}$ which follow directly from the relations 
\eqref{eq:WWs}, in fact by simple elimination of the  $W$ variables. 

\begin{prop}\label{eq:prop2} 
The variable $S_{\aR,\bb}$, defined in \eqref{eq:WS}, for fixed parameters $\aR$, $\bb$ obeys the following nonlinear partial difference 
equation 
\bea\label{eq:NQC} 
&& ( p_\aR e^{-\eta_\dd\aR}S_{\aR,\bb}-p_{-\bb}e^{\eta_\dd\bb}\wt{S}_{\aR,\bb})
\,(p_\bb e^{-\wh{\eta}_\dd\bb}\wh{S}_{\aR,\bb}-p_{-\aR}e^{\wh{\eta}_\dd\aR}\wh{\wt{S}}_{\aR,\bb} ) \nn \\  
&& =( q_\aR e^{-\eta_\ven\aR}S_{\aR,\bb}-q_{-\bb}e^{\eta_\ven\bb}\wh{S}_{\aR,\bb})
\,( q_\bb e^{-\wt{\eta}_\ven\bb}\wt{S}_{\aR,\bb}-q_{-\aR}e^{\wt{\eta}_\ven\aR}\wh{\wt{S}}_{\aR,\bb})\  .  \eea  
Furthermore, the following relation holds 
\bea\label{eq:modulNQC} 
&& ( p_\aR e^{-\eta_\dd\aR} S_{\aR,\bb}-p_{-\bb}e^{\eta_\dd\bb}\wt{S}_{\aR,\bb})
\,(p_{\aR'} e^{-\wh{\eta}_\dd\aR'} \wh{S}_{\aR',\bb'}-p_{-\bb'}e^{\wh{\eta}_\dd\bb'}\wh{\wt{S}}_{\aR',\bb'} ) \nn \\  
&& =( q_{\bb'}e^{-\eta_\ven\bb'} S_{\bb,\bb'}-q_{-\bb}e^{\eta_\ven\bb}\wh{S}_{\bb,\bb'})
\,( q_{\aR'}e^{-\wt{\eta}_\ven\aR'} \wt{S}_{\aR,\aR'}-q_{-\aR}e^{\wt{\eta}_\ven\aR}\wh{\wt{S}}_{\aR,\aR'}) \eea  
between these solutions of the partial difference equation \eqref{eq:NQC} with different fixed parameters $\aR,\bb,\aR',\bb'$. 
\end{prop}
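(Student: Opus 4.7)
The plan is to recognize that (\ref{eq:NQC}) and (\ref{eq:modulNQC}) are not genuinely new identities but simple bilinear rearrangements of the two relations (\ref{eq:WW}) of the preceding Lemma, combined with the symmetry $S_{\aR,\bb}=S_{\bb,\aR}$ built into the definition (\ref{eq:WS}). The strategy is to rewrite each of the four bracketed expressions on either side of (\ref{eq:NQC}) as a product of two $W$-factors by invoking (\ref{eq:WW}) directly for the unshifted brackets and an appropriate lattice shift of (\ref{eq:WW}) for the once-shifted brackets; both sides then collapse to the \emph{same} commutative four-fold product of $W$-quantities.

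In more detail, for (\ref{eq:NQC}) the first factor on the left-hand side equals $\wt{W}_\aR W_\bb$ by the first line of (\ref{eq:WW}). Applying the $\wh{\phantom{a}}$-shift to that same line, after interchanging $\aR\leftrightarrow\bb$ (legitimate by the symmetry of $S$), identifies the second bracket on the left with $\wh{\wt{W}}_\bb\,\wh{W}_\aR$. Exactly the same operations applied to the second line of (\ref{eq:WW}) — unshifted for the first bracket on the right, and $\wt{\phantom{a}}$-shifted together with $\aR\leftrightarrow\bb$ for the second — convert the right-hand side into $\wh{W}_\aR W_\bb\cdot\wh{\wt{W}}_\bb\,\wt{W}_\aR$. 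Both products coincide with $W_\bb\,\wt{W}_\aR\,\wh{W}_\aR\,\wh{\wt{W}}_\bb$, which establishes (\ref{eq:NQC}). The one bookkeeping point to check is that the $\eta_\dd$- and $\eta_\ven$-exponents attached to the prefactors $p_{\pm\kp}$, $q_{\pm\kp}$ shift correctly to $\wh{\eta}_\dd$ and $\wt{\eta}_\ven$ under the respective lattice shifts, which is immediate from (\ref{eq:etadd}) upon replacing $\xi$ by $\xi+\ven$ or $\xi+\dd$.

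Equation (\ref{eq:modulNQC}) is proved identically, now tracking four independent parameters. The two factors on the left evaluate via the first line of (\ref{eq:WW}), first with $(\aR,\bb)$ and then — after a $\wh{\phantom{a}}$-shift — with $(\aR',\bb')$, to $\wt{W}_\aR W_\bb$ and $\wh{\wt{W}}_{\aR'}\,\wh{W}_{\bb'}$ respectively. The two factors on the right evaluate via the second line of (\ref{eq:WW}): first applied with the pair $(\bb',\bb)$ and using $S_{\bb',\bb}=S_{\bb,\bb'}$ to give $\wh{W}_{\bb'} W_\bb$, and then $\wt{\phantom{a}}$-shifted and applied with $(\aR',\aR)$ to give $\wh{\wt{W}}_{\aR'}\,\wt{W}_\aR$. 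Both sides consequently reduce to $W_\bb\,\wt{W}_\aR\,\wh{W}_{\bb'}\,\wh{\wt{W}}_{\aR'}$. The main (indeed, only) obstacle is the algebraic housekeeping — tracking which parameter moves under which shift and verifying that the exponential prefactors line up with the displayed form after each relabelling — a task rendered routine by the systematic notation $\eta_\dd$, $\wh{\eta}_\dd$, $\wt{\eta}_\ven$ already in place. Throughout, one tacitly assumes that none of the relevant sums $\aR+\bb$, $\aR+\bb'$, $\aR'+\bb$, $\aR'+\bb'$ lie in the period lattice, so that (\ref{eq:WW}) rather than the degenerate (\ref{eq:WWs}) applies.
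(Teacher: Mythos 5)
Your proposal is correct and is precisely the argument the paper intends: each bracket in \eqref{eq:NQC} and \eqref{eq:modulNQC} is identified with a product of two $W$-factors via \eqref{eq:WW} (suitably shifted and with $\aR\leftrightarrow\bb$ interchanged using the symmetry of $S$), after which both sides collapse to the same four-fold product of $W$'s — this is exactly the ``simple elimination of the $W$ variables'' the paper invokes. Your bookkeeping of the $\xi$-dependent exponents $\eta_\dd$, $\wh{\eta}_\dd$, $\wt{\eta}_\ven$ under the lattice shifts is also correct, since $p_{\pm\kp}=\Psi_\dd(\pm\kp)$ and $q_{\pm\kp}=\Psi_\ven(\pm\kp)$ are $\xi$-independent while $\eta_\dd=\eta_\dd(\xi)$ shifts covariantly.
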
 
Eq. \eqref{eq:NQC} first appeared in a slightly different form in \cite{NQC}. It should be noted that 
the latter parameters $\aR,\bb,\aR',\bb'$ are to be distinguished from the lattice parameters $\dd$ and $\epsilon$ which are 
associated with the lattice shifts. 

Eq. \eqref{eq:NQC} can be cast in a more universal form by setting ~$u_{\aR,\bb}=\:\rho^{1/2}(\aR)\rho^{1/2}(\bb)S_{\aR,\bb}$~, 
which leads to an equation of the form: 
\be\label{eq:Q30parm}  
P(u_{\aR,\bb}\wh{u}_{\aR,\bb}+\wt{u}_{\aR,\bb}\wh{\wt{u}}_{\aR,\bb})-
Q(u_{\aR,\bb}\wt{u}_{\aR,\bb}+\wh{u}_{\aR,\bb}\wh{\wt{u}}_{\aR,\bb})=
(p^2-q^2)(\wh{u}_{\aR,\bb}\wt{u}_{\aR,\bb}+u_{\aR,\bb}\wh{\wt{u}}_{\aR,\bb})\  , \ee 
i.e. the Q3 equation for $\Delta=0$. Here we have used the identities 
~$p_\aR p_{-\aR}-q_\aR q_{-\aR}=p_\bb p_{-\bb}-q_\bb q_{-\bb}=p^2-q^2=\wp(\dd)-\wp(\ven)$~ and we have introduced 
the parameters $P$, $Q$ obeying 
\bse\label{eq:ellcurves} \bea\label{eq:Ppcurve} 
&&P^2=p_\aR p_\bb p_{-\aR}p_{-\bb}=(p^2-a^2)(p^2-b^2) \  , \\   
&&Q^2=q_\aR q_\bb q_{-\aR}q_{-\bb}=(q^2-a^2)(q^2-b^2)\  ,  \eea\ese 
i.e. $\ssp=(p,P)$, $\ssq=(q,Q)$ are points on a (Jacobi) elliptic curve, with moduli ~$\pm a$, $\pm b$~ which 
are given by ~$a^2=\wp(\aR)-e$~, ~$b^2=\wp(\bb)-e$~. From \eqref{eq:modulNQC} it follows also that 
any four solutions $u_{\aR,\bb}$, $u_{\aR',\bb'}$, $u_{\aR,\aR'}$, $u_{\bb,\bb'}$ are connected via the 
relation 
\bea\label{eq:modulQ30}
&& P_{\aR,\aR'} \wh{u}_{\aR,\bb}u_{\aR',\bb'}-P_{\aR',\bb}\wh{\wt{u}}_{\aR,\bb}u_{\aR',\bb'}
-P_{\aR,\bb'}\wh{u}_{\aR,\bb}\wt{u}_{\aR',\bb'}+P_{\bb,\bb'}\wh{\wt{u}}_{\aR,\bb}\wt{u}_{\aR',\bb'} = \nn \\
&& = Q_{\aR,\bb}\wt{u}_{\aR,\aR'}u_{\bb,\bb'}-Q_{\aR',\bb}\wh{\wt{u}}_{\aR,\aR'}u_{\bb,\bb'}
-Q_{\aR,\bb'} \wt{u}_{\aR,\aR'}\wh{u}_{\bb,\bb'}+Q_{\aR',\bb'}\wh{\wt{u}}_{\aR,\aR'}\wh{u}_{\bb,\bb'}\   , 
\eea
in which the parameters $P_{\aR_1,\aR_2}$, etc. are lattice parameters associated with different elliptic curves, i.e. 
they obey relations of the type 
$$ P_{\aR_1,\aR_2}^2=(p^2-a_1^2)(p^2-a_2^2)\quad,\quad Q_{\aR_1,\aR_2}^2=(q^2-a_1^2)(q^2-a_2^2)  $$ 
with branch points $\pm a_1$, $\pm a_2$ which are associated with uniformising parameters $\aR_1$, $\aR_2$ (taken in 
the set ~$\{ \aR,\aR',\bb,\bb'\}$~). The various lattice parameters are not independent, but are related through 
the quadric relations 
$$ P_{\aR,\aR'}^2-P_{\aR',\bb}^2-P_{\aR,\bb'}^2+P_{\bb,\bb'}^2=(a^2-b^2)(a'^2-b'^2)\  , $$ 
and a similar relation for the parameters $Q$.

\subsection{Miura relations}

In addition to the relations between the objects $W_\aR$ and $S_{\aR,\bb}$
we need another set of relations involving the object 
\be\label{eq:wnew}  w\Equiv  \zeta(\xi)-n\zeta(\dd)-m\zeta(\ven)-\bs^T\bu_0~~,  \ee
which, as we shall see, solves the H1 equation.    

\begin{lemma}
Between the objects $W_\aR$ and $w$ the following relations hold:
\bse\label{eq:wWMiura}\begin{eqnarray}
\wh{w}-\wt{w} &=& \frac{p_\aR e^{-\wh{\eta}_\dd\aR}\wh{W}_\aR-q_\aR e^{-\wt{\eta}_\ven\aR}\wt{W}_\aR}{\wh{\wt{W}}_\aR}=
\frac{p_{-\aR} e^{\eta_\dd\aR}\wt{W}_\aR-q_{-\aR}e^{\eta_\ven\aR}\wh{W}_\aR}{W_\aR} \label{eq:wWMiuraa} \\ 
w-\wh{\wt{w}} &=& \frac{p_\aR e^{-\eta_\dd\aR} W_\aR+q_{-\aR} e^{\wt{\eta}_\ven\aR}\wh{\wt{W}}_\aR}{\wt{W}_\aR}=
\frac{p_{-\aR} e^{\wh{\eta}_\dd\aR}\wh{\wt{W}}_\aR+q_{\aR} e^{-\eta_\ven\aR} W_\aR}{\wh{W}_\aR}\  , \label{eq:wWMiurab} 
\end{eqnarray} \ese 
and, furthermore, $w$ obeys the equation H1 in the following form:
\be\label{eq:H1new} (\wh{w}-\wt{w})(w-\wh{\wt{w}})=p^2-q^2 \quad,\quad  p^2=\wp(\dd)-e, \quad q^2=\wp(\ven)-e~\  .  \ee 
\end{lemma}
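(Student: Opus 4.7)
Plan. The statement consists of the Miura relations (a) and (b) between $W_\aR$ and $w$, together with the assertion that $w$ solves H1. I will derive (a) and (b) from the $S$-relations of Lemma 4.3 by taking a $\bb\to 0$ limit in suitably chosen combinations, and then deduce H1 by a purely algebraic combination of the two forms of (a) and (b).

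For (a) and (b), the key observation is that the potential $w$ arises as the $O(1)$ coefficient of $W_\bb$ as $\bb\to 0$. Using the standard Laurent expansions at the origin of the Weierstrass $\sg$ and $\zeta$ functions, we obtain
\begin{align*}
\Psi_\xi(\bb) &= \tfrac{1}{\bb} + O(\bb), & V_\bb &= 1 - \bb\,\bs^T\bu^0 + O(\bb^2),\\
p_\bb e^{-\eta_\dd\bb} &= \tfrac{1}{\bb} - \eta_\dd + O(\bb), & p_{-\bb}e^{\eta_\dd\bb} &= -\tfrac{1}{\bb} - \eta_\dd + O(\bb),
\end{align*}
and the analogous expansions with $\eta_\ven$ in place of $\eta_\dd$. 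Together with $S_{\aR,\bb} = W_\aR + O(\bb)$, these yield $W_\bb = \tfrac{1}{\bb} - \bs^T\bu^0 + O(\bb)$, whose $O(1)$ coefficient equals $w - \zeta(\xi) + n\zeta(\dd) + m\zeta(\ven)$ by \eqref{eq:wnew}. The first form of (a) is then obtained by subtracting the $\wt{\ }$-shift of \eqref{eq:WW}b from the $\wh{\ }$-shift of \eqref{eq:WW}a and extracting the $\bb\to 0$ limit: the $1/\bb$ poles on each side cancel already at leading order, and the cocycle identity $\eta_\dd - \eta_\ven = \wh\eta_\dd - \wt\eta_\ven$, immediate from \eqref{eq:etadd} and the commutativity of the lattice shifts acting on $\zeta$, eliminates the remaining $\eta$-dependent contributions, leaving the clean form $\wh{\wt W}_\aR(\wh w - \wt w) = p_\aR e^{-\wh\eta_\dd\aR}\wh W_\aR - q_\aR e^{-\wt\eta_\ven\aR}\wt W_\aR$. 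The second form of (a) follows by the same argument applied to \eqref{eq:WW}a minus \eqref{eq:WW}b after the interchange $\aR\leftrightarrow\bb$ (under which $S_{\aR,\bb}$ is symmetric). The two forms of (b) are derived analogously, using in addition the elementary vanishing $\chi_{\dd,\ven} + \eta_\dd + \wt\eta_\ven = 0$ (a direct telescoping of $\zeta$-values) to absorb the spurious constant terms.

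For H1 I multiply the first form of (a) by the second form of (b),
\begin{align*}
\wh{\wt W}_\aR(\wh w - \wt w) &= p_\aR e^{-\wh\eta_\dd\aR}\wh W_\aR - q_\aR e^{-\wt\eta_\ven\aR}\wt W_\aR,\\
\wh W_\aR(w - \wh{\wt w}) &= p_{-\aR}e^{\wh\eta_\dd\aR}\wh{\wt W}_\aR + q_\aR e^{-\eta_\ven\aR} W_\aR.
\end{align*}
After expanding the product and using the second form of (a), $W_\aR(\wh w - \wt w) = p_{-\aR}e^{\eta_\dd\aR}\wt W_\aR - q_{-\aR}e^{\eta_\ven\aR}\wh W_\aR$, to eliminate the mixed term $q_\aR e^{-\eta_\ven\aR}W_\aR(\wh w-\wt w)$, the common factor $\wh W_\aR$ drops out and the result reduces to
\begin{equation*}
(\wh w - \wt w)(w - \wh{\wt w}) = p_\aR p_{-\aR} - q_\aR q_{-\aR} + q_\aR p_{-\aR}\bigl[e^{(\eta_\dd - \eta_\ven)\aR} - e^{(\wh\eta_\dd - \wt\eta_\ven)\aR}\bigr]\frac{\wt W_\aR}{\wh W_\aR}.
\end{equation*}
The bracketed factor vanishes by the same cocycle identity used in Step 1, and \eqref{eq:Phiwp} gives $p_\aR p_{-\aR} = \wp(\dd) - \wp(\aR) = p^2 - a^2$ and $q_\aR q_{-\aR} = q^2 - a^2$, so the right-hand side reduces to $p^2 - q^2$, which is H1. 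The main obstacle is the careful bookkeeping of the $\bb\to 0$ limits in Step 1; this turns out to be quite mild, because the parallel $1/\bb$ singular structures of $W_\bb$ and of $p_{\mp\bb}e^{\pm\eta_\dd\bb}$ cancel at leading order, so only $O(1)$ quantities —and no subleading corrections of $S_{\aR,\bb}$— are required to close the final Miura identities.
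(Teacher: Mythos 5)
Your proof is correct, but it takes a genuinely different route from the paper's. For the Miura relations \eqref{eq:wWMiura} the paper works at the level of the $N$-component vectors: it multiplies the linear shift relations \eqref{eq:shiftrels} for $\bu_\aR$ by $\bs^T$, eliminates the $\bK$-dependent terms between pairs of the resulting scalar identities to obtain the intermediate relations \eqref{eq:wVMiura} between $w$ and $V_\aR$, and only then converts to the $W_\aR$ form via \eqref{eq:WS}. You instead obtain the $W$-form directly by sending $\bb\to 0$ in the already-established relations \eqref{eq:WW}, using $W_\bb=\bb^{-1}-\bs^T\bu^0+\mathcal{O}(\bb)$ together with the matching $\bb^{-1}$ poles of $p_{\pm\bb}e^{\mp\eta_\dd\bb}$; this is precisely the degeneration technique the paper itself deploys in Section 7 (where \eqref{eq:wvMiura} is extracted from \eqref{eq:wuUMiura} by the same limit), so it is fully consonant with the paper's machinery, and your two auxiliary identities $\eta_\dd-\eta_\ven=\wh{\eta}_\dd-\wt{\eta}_\ven$ and $\chi_{\dd,\ven}+\eta_\dd+\wt{\eta}_\ven=0$ are both correct telescoping consequences of \eqref{eq:etadd} and \eqref{eq:chi}, and are exactly what is needed to absorb the finite parts left over after the poles cancel. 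For H1 the paper's argument is shorter: it specialises $\aR=\dd$, so that $p_{-\dd}=0$ collapses each Miura quotient to a single term and the product is immediately $-q_\dd q_{-\dd}=\wp(\dd)-\wp(\ven)=p^2-q^2$. Your generic-$\aR$ elimination costs a little more algebra but buys a small dividend, namely an explicit check of the mutual consistency of all four Miura forms for arbitrary $\aR$; your final identification $p_\aR p_{-\aR}-q_\aR q_{-\aR}=\wp(\dd)-\wp(\ven)$ via \eqref{eq:Phiwp} is right, since the exponential gauge factors in $\Psi_\dd(\pm\aR)$ cancel in the product.
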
 
\begin{proof}
From the relation 
\begin{eqnarray*}
 e^{-\eta_\dd\bK}p_{\bK}\wt{\bu}_\aR=\chi_{\aR,\dd}\bu_\aR-\wt{V}_\aR\bu_0 &\Rightarrow&  
\bs^T\,e^{-\eta_\dd\bK}p_{\bK}\wt{\bu}_\aR=\wt{\bs}^T\,\frac{\Psi_\xi(\bK)}{\Psi_{\wt{\xi}}(\bK)}\Psi_\dd(\bK)e^{-\eta_\dd\bK}\wt{\bu}_\aR \\ 
&& = \chi_{\aR,\dd}\bs^T\bu_{\aR}-\wt{V}_\aR \bs^T\bu_0 
\end{eqnarray*}
we get 
$$ \wt{\bs}^T\left[ \zeta(\bK)+\zeta(\xi)+\zeta(\dd)-\zeta(\bK+\wt{\xi})\right]\wt{\bu}_\aR = (1-V_\aR) \chi_{\aR,\dd} 
-w_0\wt{V}_\aR \  , $$ 
where ~$w_0=\bs^T\bu_0$~~. Similarly, from
~$p_{-\bK}\bu_\aR=V_\aR\wt{\bu}_0-\wt{\chi}_{\aR,-\dd}\wt{\bu}_\aR$~~ we can derive the relation:
$$ -\bs^T\left[ \zeta(\bK)+\zeta(\wt{\xi})-\zeta(\dd)-\zeta(\bK+\xi)\right]\bu_\aR = V_\aR\wt{w}_0 -(1-\wt{V}_\aR)
\wt{\chi}_{\aR,-\dd}\  . $$  
Combining these with the analogous relations involving the other lattice shift, i.e. 
\begin{eqnarray*}
&& \wh{\bs}^T\left[ \zeta(\bK)+\zeta(\xi)+\zeta(\ven)-\zeta(\bK+\wh{\xi})\right]\wh{\bu}_\aR = (1-V_\aR) \chi_{\aR,\ven} -w_0\wh{V}_\aR \  , \\
&& -\bs^T\left[ \zeta(\bK)+\zeta(\wh{\xi})-\zeta(\ven)-\zeta(\bK+\xi)\right]\bu_\aR = V_\aR\wh{w}_0 -(1-\wh{V}_\aR)
\wh{\chi}_{\aR,-\dd}\  , 
\end{eqnarray*} 
to eliminate the terms involving $\bK$ on the left-hand sides, we obtain the following set of relations:
\bse\label{eq:wVMiura}\begin{eqnarray}
&& \wh{w}-\wt{w}=\frac{\wh{\chi}_{\aR,\dd}\wh{V}_\aR-\wt{\chi}_{\aR,\ven}\wt{V}_\aR}{\wh{\wt{V}}_\aR} 
=\frac{\wh{\chi}_{\aR,-\ven}\wh{V}_\aR-\wt{\chi}_{\aR,-\dd}\wt{V}_\aR}{V_\aR} \label{eq:wVMiuraa} \\ 
&& w-\wh{\wt{w}}=\frac{\chi_{\aR,\dd}V_\aR-\wh{\wt{\chi}}_{\aR,-\ven}\wh{\wt{V}}_\aR}{\wt{V}_\aR} 
=\frac{\chi_{\aR,\ven}V_\aR-\wh{\wt{\chi}}_{\aR,-\dd}\wh{\wt{V}}_\aR}{\wh{V}_\aR}\  . \label{eq:wVMiurab} 
\end{eqnarray}\ese 
The relations \eqref{eq:wWMiura} are obtained from \eqref{eq:wVMiura} by substitution for $V_{\aR}$ using \eqref{eq:WS}, 
and by using the expressions for $\chi_{\aR,\pm\dd}$ 
and $\chi_{\aR,\pm\ven}$ in terms of the $\Psi$-function in the form 
$$ \chi_{\aR,\dd}= e^{-\eta_\dd\aR}\frac{p_\aR \Psi_\xi(\aR)}{\Psi_{\wt{\xi}}(\aR)}\quad,\quad  
\wt{\chi}_{\aR,-\dd}=-e^{\eta_\dd\aR} \frac{p_{-\aR} \Psi_{\wt{\xi}}(\aR)}{\Psi_\xi(\aR)}\  , $$ 
and similar expressions with $\dd$ replaced by $\ven$. The latter relations follow directly from the addition formulae
\eqref{eq:Psiids}.  
Finally, setting ~$\aR=\dd\Rightarrow p_{-\aR}=p_{-\dd}=0$~~, the second forms of the expressions in the lemma 
yield:
$$ \wh{w}-\wt{w}=-q_{-\dd} e^{\eta_\ven\dd}\frac{\wh{W}_\dd}{W_\dd}\quad {\rm and}\quad w-\wh{\wt{w}}=q_\dd e^{-\eta_\ven\dd}\frac{W_\dd}{\wh{W}_\dd} $$ 
which by multiplication, using ~$q_\dd q_{-\dd}=\wp(\ven)-\wp(\dd)$~~ gives us the H1 equation for $w$. 
\end{proof}    



\section{Elliptic $N$-soliton solution of Q3}

\subsection{$N$-soliton formula for Q3} 
We are now ready to formulate the main result of this paper, which comprises a 
generalization of the rational $N$-soliton solution for Q3 as obtained in \cite{AHN,NAH} to the 
elliptic case. We use the following expression to abbreviate the quadrilateral form of the equation: 
\be\label{eq:Q3quad} 
\cQ^\Delta_{\ssp,\ssq}(u,\wt{u},\wh{u},\wh{\wt{u}})\Equiv  P(u\wh{u}+\wt{u}\wh{\wt{u}})-Q(u\wt{u}+\wh{u}\wh{\wt{u}})-(p^2-q^2)\left((\wh{u}\wt{u}+u\wh{\wt{u}})
+~\frac{\Delta}{PQ}~\,\right)
\ee  
in which the lattice parameters $\ssp$, $\ssq$ are points $(p,P)$, $(q,Q)$ on the elliptic curve:
\be\label{eq:ellcurve} 
\Gamma_{a,b}=\{ (x,X)\,|\,X^2=(x^2-a^2)(x^2-b^2)\,\} \  , 
\ee 
with moduli $\pm a$, $\pm b$, which we refer to as the \textit{parameter curve}. 

\begin{theorem}\label{th:theorem1}
The following formula 
\begin{eqnarray} \label{eq:Nsol} 
u^{(N)}&=& A\rho^{1/2}(\aR)\rho^{1/2}(\bb)S_{\aR,\bb}+B\rho^{1/2}(\aR)\rho^{1/2}(-\bb)S_{\aR,-\bb} \nn \\ 
&& +C\rho^{1/2}(-\aR)\rho^{1/2}(\bb)S_{-\aR,\bb}+D\rho^{1/2}(-\aR)\rho^{1/2}(-\bb)S_{-\aR,-\bb}\  , 
\end{eqnarray}  
for each fixed positive integer $N$, and parameters and variables given in the previous section, provide solutions of the 
quadrilateral equation 
\be\label{eq:Q3new} 
 \cQ^\Delta_{\ssp,\ssq}(u^{(N)},\wt{u}^{(N)},\wh{u}^{(N)},\wh{\wt{u}}^{(N)})=0 
\ee 
with arbitrary constant coefficients $A$,$B$,$C$,$D$, where $\Delta$ is given by 
\be\label{eq:Delta} 
\Delta=\wp'(\aR)\,\wp'(\bb)\det\left(\begin{array}{cc} A & B\\ C & D\end{array}\right)\  . \ee  
Furthermore, the associated function to $u^{(N)}$:
\begin{eqnarray} \label{eq:assNsol}  
U^{(N)}&=& A\rho^{1/2}(\aR)\rho^{1/2}(\bb)W_\aR W_\bb+B\rho^{1/2}(\aR)\rho^{1/2}(-\bb)W_{\aR} W_{-\bb} \nn \\ 
&& +C\rho^{1/2}(-\aR)\rho^{1/2}(\bb)W_{-\aR}W_\bb+D\rho^{1/2}(-\aR)\rho^{1/2}(-\bb)W_{-\aR}W_{-\bb}\  , 
\end{eqnarray}
factorises the corresponding biquadratic to the quadrilateral, namely it obeys
\be\label{eq:Hfact} 
\cH_\ssp(u^{(N)},\wt{u}^{(N)})=U^{(N)}\wt{U}^{(N)}\quad,\quad 
\cH_\ssq(u^{(N)},\wh{u}^{(N)})=U^{(N)}\wh{U}^{(N)} \ee 
where 
\bse\label{eq:Hbiquad}\bea 
\cH_\ssp(u,\wt{u})&\Equiv &P(u^2+\wt{u}^2)-(2p^2-a^2-b^2)u\wt{u}+\frac{\Delta}{P} \\ 
\cH_\ssq(u,\wt{u})&=&Q(u^2+\wh{u}^2)-(2q^2-a^2-b^2)u\wh{u}+\frac{\Delta}{Q}\  .  
\eea\ese  
\end{theorem}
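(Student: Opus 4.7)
The plan is to establish Theorem 5.1 in two stages: first prove the biquadratic factorisation (equations \eqref{eq:Hfact}--\eqref{eq:Hbiquad}), then derive the Q3 quadrilateral equation \eqref{eq:Q3new} from it. The essential tools are the quadratic shift relations of Lemma~4.3, namely $\wt{W}_\aR W_\bb=p_\aR e^{-\eta_\dd\aR}S_{\aR,\bb}-p_{-\bb}e^{\eta_\dd\bb}\wt{S}_{\aR,\bb}$ and its $\ssq$-analogue, together with the exceptional opposite-sign relations \eqref{eq:pWW}--\eqref{eq:qWW} in which the $\zeta$- and $\wp$-terms carrying $U_{\aR,-\aR}$ and $U_{\bb,-\bb}$ appear; the NQC-type identities of Proposition~4.5 governing $S_{\aR,\bb}$ at different sign choices of the parameters play a complementary role.

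To prove the factorisation $\cH_\ssp(u^{(N)},\wt{u}^{(N)})=U^{(N)}\wt{U}^{(N)}$, I would expand $U^{(N)}\wt{U}^{(N)}$ as a double sum over the four sign-pairs $(\epsilon_1,\epsilon_2)\in\{\pm\}^2$ from \eqref{eq:assNsol} crossed with the corresponding signs from $\wt{U}^{(N)}$. Each summand factors as a product $[W_{\epsilon_1\aR}\wt{W}_{\epsilon'_1\aR}]\cdot[W_{\epsilon_2\bb}\wt{W}_{\epsilon'_2\bb}]$. Same-sign pairs are rewritten via \eqref{eq:WW} as differences of $S$ and $\wt{S}$; opposite-sign pairs use \eqref{eq:pWW}, introducing the exceptional $\wp$- and $\zeta$-terms. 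In parallel, I would expand $u^{(N)}\wt{u}^{(N)}$ and $(u^{(N)})^2+(\wt{u}^{(N)})^2$ as bilinear combinations of $S_{\epsilon_1\aR,\epsilon_2\bb}$ and $\wt{S}_{\epsilon'_1\aR,\epsilon'_2\bb}$ through \eqref{eq:Nsol}. Matching coefficients of each $S\wt{S}$-monomial uses the curve identities $P^2=p_\aR p_{-\aR}p_\bb p_{-\bb}=(p^2-a^2)(p^2-b^2)$ with $a^2=\wp(\aR)-e$, $b^2=\wp(\bb)-e$, together with the NQC exchange relation \eqref{eq:NQC} to swap $S\leftrightarrow\wt{S}$ where needed. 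The $U_{\aR,-\aR}$ and $U_{\bb,-\bb}$ contributions from the opposite-sign pairs must cancel across the four summands, leaving the pure constant $\Delta/P$ with $\Delta=\wp'(\aR)\wp'(\bb)(AD-BC)$.

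With the biquadratic factorisations in hand for both the $\ssp$- and $\ssq$-shifts, the quadrilateral equation \eqref{eq:Q3new} follows from the compatibility
$$\cH_\ssp(u^{(N)},\wt{u}^{(N)})\,\cH_\ssp(\wh{u}^{(N)},\wh{\wt{u}}^{(N)}) \;=\; U^{(N)}\wt{U}^{(N)}\wh{U}^{(N)}\wh{\wt{U}}^{(N)} \;=\; \cH_\ssq(u^{(N)},\wh{u}^{(N)})\,\cH_\ssq(\wt{u}^{(N)},\wh{\wt{u}}^{(N)}),$$
which is the standard discriminant-factorisation characterisation of Q3, modulo a branch choice of the square root linking $\wt{u},\wh{u}$ given $u,\wh{\wt{u}}$. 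The correct branch is fixed by checking a single degenerate configuration, for instance the $N=1$ case where \eqref{eq:Nsol}--\eqref{eq:assNsol} reduce to explicit Weierstrass expressions whose Q3-compatibility is the elliptic addition identity \eqref{eq:3term}.

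The main obstacle is expected to be the coefficient verification for the constant term $\Delta/P$. The $\zeta(\xi)$- and $\wp(\xi)$-dependence inside the exceptional contributions $\wt{W}_\aR W_{-\aR}$ and $\wt{W}_\bb W_{-\bb}$ must disappear from the final answer, since $\cH_\ssp$ is polynomial in $u^{(N)},\wt{u}^{(N)}$ only; this forces a delicate cancellation across the four sign-pair summands, which isolates exactly the antisymmetric combination $AD-BC$. Identifying the remaining factor as $\wp'(\aR)\wp'(\bb)$ relies on the elliptic identity $(\wp'(\aR))^2=4(\wp(\aR)-e)(\wp(\aR)-e')(\wp(\aR)-e'')$, which ties the derivatives of $\wp$ at the uniformising parameters $\aR,\bb$ to the branch point structure of the parameter curve $\Gamma_{a,b}$, and it is this identity that produces the clean closed-form expression for $\Delta$ claimed in \eqref{eq:Delta}.
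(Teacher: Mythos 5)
Your stage 1, while organized differently, is in substance the computation the paper performs: the paper packages the four sign-pair expansion of $U^{(N)}\wt{U}^{(N)}$ into the determinant of a $2\times2$ Lax-type matrix $\bL$ whose entries are the bilinear forms $\brr^T_\aR\cA\brr_\bb$, $\brr^T_\aR\cA\wt{\brr}_\bb$, etc., and evaluates $\det(\bL)$ in two ways. The multiplicativity of the determinant then performs automatically the cancellation of the $\xi$-dependent pieces that you correctly single out as the main obstacle, reducing everything to the product $\det(\cA)\det(\brr_\aR,\wt{\brr}_\aR)\det(\brr_\bb,\wt{\brr}_\bb)$; each of the last two factors is evaluated via \eqref{eq:WWs} and the $\zeta$-addition formula \eqref{eq:zetawp} to the constant $\wp'(\aR)/\sqrt{p^2-a^2}$ (not via the cubic for $(\wp'(\aR))^2$, though that is cosmetic). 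So your term-by-term matching should close, but without the determinant device the bookkeeping of the exceptional $U_{\aR,-\aR}$, $U_{\bb,-\bb}$ contributions is considerably heavier, and I would recommend adopting it.

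Stage 2 contains a genuine gap. The relation $\cH_\ssp(u,\wt{u})\,\cH_\ssp(\wh{u},\wh{\wt{u}})=\cH_\ssq(u,\wh{u})\,\cH_\ssq(\wt{u},\wh{\wt{u}})$ is a consequence of \eqref{eq:Hfact} but is strictly weaker than Q3: the difference of the two sides is a multi-affine polynomial of degree one in each vertex variable times $\cQ^\Delta_{\ssp,\ssq}$, and that cofactor is a conjugate copy of the equation (associated with the point $(q,-Q)$ on the parameter curve), so vanishing of the difference only says that at each quadrilateral \emph{one} of two conjugate equations holds. Checking a single degenerate configuration does not fix the branch: you must exclude that it switches from lattice site to lattice site, between different choices of $A,B,C,D$, $\kp_i$, $c_i$, $\xi_0$, and between different $N$ (in which there is no continuity). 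Completing your route would require proving the factorisation of the difference of biquadratic products, an analyticity and irreducibility argument in the continuous parameters, and a degeneration (e.g.\ $c_i\to 0$ to the seed) for each fixed $N$ --- none of which appears in the proposal. The paper avoids the ambiguity altogether: it first derives the Miura relations \eqref{eq:wuUMiura}, which express $\wh{w}-\wt{w}$ and $w-\wh{\wt{w}}$ as ratios of linear combinations of the shifts of $u^{(N)}$ over shifts of $U^{(N)}$, and then substitutes these into the \emph{exact} H1 equation $(w-\wh{\wt{w}})(\wh{w}-\wt{w})=p^2-q^2$; combined with \eqref{eq:Hfact} this yields Q3 with the correct sign at every site. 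You should either supply the missing branch argument or route the final step through H1 as the paper does.
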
 
\paragraph{\it Proof:} The proof is obtained through direct computation, using the relations 
between the various objects established earlier. It follows roughly the same lines of 
computation as in the rational case \cite{NAH}, but only differs in a few 
subtleties regarding the elliptic representation. It  breaks down into three steps.

\noindent 
\underline{\it step \# 1:} Using the basic relations established between the objects $W_\aR$, $W_\bb$ 
and $S_{\aR,\bb}$ one can obtain relations between the objects $u^{(N)}$, $U^{(N)}$ 
and the H1 object $w$, as given in the following lemma. These constitute what in effect is a Miura transformation 
betweem Q3 and H1. 

\begin{lemma} The following relations hold between the elliptic $N$-soliton solution $u=u^{(N)}$ of Q3, together with 
its associated object $U=U^{(N)}$, and the elliptic $N$-soliton solution $w$ of H1: 
\bse\label{eq:wuUMiura}\begin{eqnarray}
\wh{w}-\wt{w}&=&\frac{P\wh{u}-Q\wt{u}-(p^2-q^2)\wh{\wt{u}}}{\wh{\wt{U}}}
 =-\frac{P\wt{u}-Q\wh{u}-(p^2-q^2)u}{U} \label{eq:wuUMiuraa} \\ 
w-\wh{\wt{w}}&=&\frac{Pu-Q\wh{\wt{u}}-(p^2-q^2)\wt{u}}{\wt{U}}
 =-\frac{P\wh{\wt{u}}-Qu-(p^2-q^2)\wh{u}}{\wh{U}} \label{eq:wuUMiurab} 
\end{eqnarray}\ese  
\end{lemma}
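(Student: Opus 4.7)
The plan is to reduce each of the four identities in \eqref{eq:wuUMiura} to the single-rapidity Miura relations \eqref{eq:wWMiura} paired with the bilinearisation formulas \eqref{eq:WW}. Since both $u^{(N)}$ and $U^{(N)}$ are linear combinations, with the \emph{same} dressed coefficients
$A_{\aR,\bb}:=A\rho^{1/2}(\aR)\rho^{1/2}(\bb)$, $A_{\aR,-\bb}:=B\rho^{1/2}(\aR)\rho^{1/2}(-\bb)$, $A_{-\aR,\bb}$, $A_{-\aR,-\bb}$, of the four objects $S_{\pm\aR,\pm\bb}$ respectively $W_{\pm\aR}W_{\pm\bb}$, it suffices to verify each identity one sign choice at a time; the remaining three terms in the sum are structurally identical.

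Concretely, for the first equality in \eqref{eq:wuUMiuraa} I would take the $A$-term only. Multiplying the first form of \eqref{eq:wWMiuraa} by $\wh{\wt W}_\bb$ and rewriting the products $\wh W_\aR\wh{\wt W}_\bb$ and $\wt W_\aR\wh{\wt W}_\bb$ by means of suitably shifted versions of \eqref{eq:WW} (with the labels swapped, which is allowed since $S_{\aR,\bb}=S_{\bb,\aR}$), and then simplifying the coefficient of $\wh{\wt S}_{\aR,\bb}$ via \eqref{eq:Phiwp} in the form $p_\aR p_{-\aR}=\wp(\dd)-\wp(\aR)$ and $q_\aR q_{-\aR}=\wp(\ven)-\wp(\aR)$ (so that it becomes $-(p^2-q^2)$), one obtains
\begin{equation*}
(\wh w-\wt w)\wh{\wt W}_\aR\wh{\wt W}_\bb = p_\aR p_\bb\,e^{-\wh\eta_\dd(\aR+\bb)}\wh S_{\aR,\bb} - q_\aR q_\bb\,e^{-\wt\eta_\ven(\aR+\bb)}\wt S_{\aR,\bb} - (p^2-q^2)\wh{\wt S}_{\aR,\bb}.
\end{equation*}
Multiplying through by $\wh{\wt A}_{\aR,\bb}$ and applying the two coefficient identities
\begin{equation*}
\wh{\wt A}_{\aR,\bb}\,p_\aR p_\bb\,e^{-\wh\eta_\dd(\aR+\bb)} = P\,\wh A_{\aR,\bb},\qquad \wh{\wt A}_{\aR,\bb}\,q_\aR q_\bb\,e^{-\wt\eta_\ven(\aR+\bb)} = Q\,\wt A_{\aR,\bb},
\end{equation*}
which are direct consequences of the shift rules \eqref{eq:rhoshifts} for $\rho$ and the elliptic-curve relations \eqref{eq:ellcurves}, converts the right-hand side into $P\,\wh A_{\aR,\bb}\wh S_{\aR,\bb}-Q\,\wt A_{\aR,\bb}\wt S_{\aR,\bb}-(p^2-q^2)\wh{\wt A}_{\aR,\bb}\wh{\wt S}_{\aR,\bb}$. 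Summing over the four sign choices $(\pm\aR,\pm\bb)$ then produces precisely $(\wh w-\wt w)\wh{\wt U}^{(N)}=P\wh u^{(N)}-Q\wt u^{(N)}-(p^2-q^2)\wh{\wt u}^{(N)}$, which is the first equality of \eqref{eq:wuUMiuraa}.

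The remaining three identities follow by the same pattern. For the second equality in \eqref{eq:wuUMiuraa} one starts from the second form of \eqref{eq:wWMiuraa}, multiplies by $W_\bb$, and applies the \emph{unshifted} \eqref{eq:WW} together with the dual coefficient identities $A_{\aR,\bb}\,p_{-\aR}p_{-\bb}e^{\eta_\dd(\aR+\bb)}=P\,\wt A_{\aR,\bb}$ and $A_{\aR,\bb}\,q_{-\aR}q_{-\bb}e^{\eta_\ven(\aR+\bb)}=Q\,\wh A_{\aR,\bb}$ (again immediate from \eqref{eq:rhoshifts} and \eqref{eq:ellcurves}). The two equalities of \eqref{eq:wuUMiurab} are derived verbatim from the two forms of \eqref{eq:wWMiurab}. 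The only real obstacle is the bookkeeping of the various exponential factors $e^{\pm\eta_\dd\cdot}$, $e^{\pm\eta_\ven\cdot}$ (both shifted and unshifted) against the shift rules \eqref{eq:rhoshifts}, paying attention to the signs imposed by $\Psi_\dd(-\kp)=-\Psi_{-\dd}(\kp)$ and to the fact that the elliptic-curve roots $P,Q$ must be taken with a fixed branch so that $P^2=p_\aR p_\bb p_{-\aR}p_{-\bb}$ and $Q^2=q_\aR q_\bb q_{-\aR}q_{-\bb}$. Once these are packaged into the coefficient identities above, the verification is entirely algebraic.
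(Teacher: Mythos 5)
Your proposal is correct and follows essentially the same route as the paper: the paper's proof also reduces each term of $P\wh{u}-Q\wt{u}-(p^2-q^2)\wh{\wt{u}}$ (and its three analogues) to the single-parameter relations \eqref{eq:WW} combined with \eqref{eq:wWMiura}, using the shift rules \eqref{eq:rhoshifts} and the curve relations \eqref{eq:ellcurves} to absorb the plane-wave prefactors into $P$ and $Q$ exactly as in your coefficient identities. The only cosmetic difference is the direction of the computation (you start from the $W$-Miura relation multiplied by $\wh{\wt W}_\bb$ and build up to the $S$-form, whereas the paper starts from the $u$-combination and peels it down), and the fact that the paper groups the $\wh{\wt S}$ coefficient as $p_\bb p_{-\bb}-q_\bb q_{-\bb}$ rather than $p_\aR p_{-\aR}-q_\aR q_{-\aR}$; both equal $p^2-q^2$.
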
 
\begin{proof}
These relations, which can be obtained by direct computation term by term in the expressions \eqref{eq:Nsol} and \eqref{eq:assNsol} 
(after multiplying out the denominators), namely as follows: 
\begin{eqnarray*}
&& P\wh{u}-Q\wt{u}-(p^2-q^2)\wh{\wt{u}} = \\ 
&& \qquad = A\wh{\wt{\rho}}^{1/2}(\aR)\wh{\wt{\rho}}^{1/2}(\bb)
\left[ P\left(e^{2\wh{\zeta}(\dd)\aR}\frac{p_\aR}{p_{-\aR}}\right)^{1/2}
\left(e^{2\wh{\zeta}(\dd)\bb}\frac{p_\bb}{p_{-\bb}}\right)^{1/2}e^{[-\zeta(\wh{\xi}+\dd)+\zeta(\wh{\xi})](\aR+\bb)}\wh{S}_{\aR,\bb}\right. \\ 
&& \qquad\qquad\qquad 
\left. -Q\left(e^{2\wt{\zeta}(\ven)\aR}\frac{q_\aR}{q_{-\aR}}\right)^{1/2}
\left(e^{2\wt{\zeta}(\ven)\bb}\frac{q_\bb}{q_{-\bb}}\right)^{1/2}e^{[-\zeta(\wt{\xi}+\ven)+\zeta(\wt{\xi})](\aR+\bb)}\wt{S}_{\aR,\bb}
-(p^2-q^2) \wh{\wt{S}}_{\aR,\bb} \right] + \cdots \\ 
&& \qquad = A\wh{\wt{\rho}}^{1/2}(\aR)\wh{\wt{\rho}}^{1/2}(\bb)
\left[ p_\aR p_\bb e^{-\wh{\eta}_\dd(\aR+\bb)} \wh{S}_{\aR,\bb}- q_\aR q_\bb e^{-\wt{\eta}_\ven(\aR+\bb)} \wt{S}_{\aR,\bb} 
-(p^2-q^2) \wh{\wt{S}}_{\aR,\bb}\right] + \cdots \\ 
&& \qquad = A\wh{\wt{\rho}}^{1/2}(\aR)\wh{\wt{\rho}}^{1/2}(\bb)\left[ 
p_\bb e^{-\wh{\eta}_\dd\bb} \left( p_\aR e^{-\wh{\eta}_\dd\aR} \wh{S}_{\aR,\bb}- p_{-\bb}e^{\wh{\eta}_\dd\bb}\wh{\wt{S}}_{\aR,\bb}\right)
-q_\bb e^{-\wt{\eta}_\ven\bb} \left( q_\aR e^{-\wt{\eta}_\ven\aR} \wt{S}_{\aR,\bb}- q_{-\bb}e^{\wt{\eta}_\ven\bb}\wh{\wt{S}}_{\aR,\bb}\right)\right] + \cdots \\ 
&& \qquad = A\wh{\wt{\rho}}^{1/2}(\aR)\wh{\wt{\rho}}^{1/2}(\bb)\left[ 
p_\bb e^{-\wh{\eta}_\dd\bb}\wh{\wt{W}}_\aR \wh{W}_\bb- q_\bb e^{-\wt{\eta}_\ven\bb}\wh{\wt{W}}_\aR \wt{W}_\bb \right] +\cdots \\ 
&& \qquad = A\wh{\wt{\rho}}^{1/2}(\aR)\wh{\wt{\rho}}^{1/2}(\bb)(\wh{w}-\wt{w})\wh{\wt{W}}_\aR\wh{\wt{W}}_\bb +\cdots 
\end{eqnarray*}
by using in the last steps the basic relation \eqref{eq:WW} in combination with \eqref{eq:wWMiura}, and where the $\cdots$ stand for similar 
expressions for the remaining terms with coefficients $B$, $C$, $D$ instead of $A$, and with $(\aR,\bb)$ replaced by $(\aR,-\bb)$, $(-\aR,\bb)$ and 
$(-\aR,-\bb)$ respectively. Thus, identifying the right hand side, up to a common factor $(\wh{w}-\wt{w})$ with the variable $\wh{\wt{U}}$, 
we obtain the first relation in \eqref{eq:wuUMiuraa}. The other relations in \eqref{eq:wuUMiura} follow by similar computations. 
\end{proof}

\noindent
\underline{\it step \# 2:} We next establish the main  
identities between ~$u=u^{(N)}_{n,m}$, and $U=U^{(N)}_{n,m}$. 

\begin{lemma} the following factorisation formulae for the biquadratics associated with the Q3 quadrilateral hold:
\bse\label{eq:QrMiuras}\begin{eqnarray}
U\wt{U}-P(u^2+\wt{u}^2)+(2p^2-a^2-b^2)u\wt{u} &=&\frac{\wp'(\aR)\,\wp'(\bb)}{P}\det(\cA)\  , \label{eq:Q3Miuraa} \\ 
U\wh{U}-Q(u^2+\wh{u}^2)+(2q^2-a^2-b^2)u\wh{u} &=&\frac{\wp'(\aR)\,\wp'(\bb)}{Q}\det(\cA)\  , \label{eq:Q3Miurab} 
\end{eqnarray}\ese 
in which the $2\times 2$ matrix $\cA$ is given by 
\begin{equation}\label{eq:A} 
\cA=\left(\begin{array}{cc} A & B \\ C & D \end{array}\right)\   . 
\end{equation}
\end{lemma}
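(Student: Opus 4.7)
The plan is to prove \eqref{eq:Q3Miuraa} directly by expansion; since \eqref{eq:Q3Miurab} has identical structure with the replacements $\dd\to\ven$, $(p,P)\to(q,Q)$, $\wt{\phantom{a}}\to\wh{\phantom{a}}$ and uses the $m$-shift analogues of \eqref{eq:WW} and \eqref{eq:pWW}, it will follow by exactly the same argument. Writing
\begin{equation*}
u^{(N)}=\sum_{\epsilon_1,\epsilon_2\in\{\pm\}}c_{\epsilon_1\epsilon_2}\,\rho^{1/2}(\epsilon_1\aR)\rho^{1/2}(\epsilon_2\bb)\,S_{\epsilon_1\aR,\epsilon_2\bb},\qquad U^{(N)}=\sum_{\epsilon_1,\epsilon_2\in\{\pm\}}c_{\epsilon_1\epsilon_2}\,\rho^{1/2}(\epsilon_1\aR)\rho^{1/2}(\epsilon_2\bb)\,W_{\epsilon_1\aR}W_{\epsilon_2\bb}
\end{equation*}
with $(c_{++},c_{+-},c_{-+},c_{--})=(A,B,C,D)$, both sides of \eqref{eq:Q3Miuraa} become double sums over sixteen pairs of mode indices, each weighted by a bilinear in these coefficients. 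The key manipulation is to regroup every four-$W$ product occurring in $U\wt{U}$ as $(\wt{W}_{\epsilon_1'\aR}W_{\epsilon_1\aR})(\wt{W}_{\epsilon_2'\bb}W_{\epsilon_2\bb})$, isolating a pure-$\aR$ factor and a pure-$\bb$ factor. The generic relation \eqref{eq:WW} applies to each factor when the two suffices carry equal signs (so the sum is $\pm2\aR$ or $\pm2\bb$), while the degenerate relation \eqref{eq:pWW} applies when the signs are opposite and the sum vanishes. After these substitutions, the combination $U\wt{U}-P(u^2+\wt{u}^2)+(2p^2-a^2-b^2)u\wt{u}$ is a quadratic form in $(S,\wt{S},U_{\aR,-\aR},U_{\bb,-\bb})$ whose coefficients are polynomials in $p_{\pm\aR}$, $p_{\pm\bb}$ dressed by $\rho^{1/2}$-weights and elliptic phases $e^{\pm\eta_\dd\kp}$.

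Matching these coefficients is the combinatorial core. The $\mathbb{Z}_2\times\mathbb{Z}_2$ symmetry of the mode sum splits the sixteen bilinears into four blocks according to whether the $\aR$- and $\bb$-index pairs are generic (equal signs) or degenerate (opposite signs), and, by the product form of \eqref{eq:Nsol}--\eqref{eq:assNsol}, each block coefficient factorises into an $\aR$-part and a $\bb$-part. The elliptic curve identifications $p_\aR p_{-\aR}=\wp(\dd)-\wp(\aR)=p^2-a^2$, $p_\bb p_{-\bb}=p^2-b^2$ (so that $p_\aR p_{-\aR}+p_\bb p_{-\bb}=2p^2-a^2-b^2$ and $p_\aR p_{-\aR}p_\bb p_{-\bb}=P^2$) are precisely what is needed for the $S$-bilinears from $U\wt{U}$ in the fully generic block to cancel against those from $P(u^2+\wt{u}^2)-(2p^2-a^2-b^2)u\wt{u}$, while in the two mixed blocks the $U_{\aR,-\aR}$- and $U_{\bb,-\bb}$-linear pieces cancel using the same $(p^2-a^2)$ and $(p^2-b^2)$ weights.

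The only surviving contribution is then a pure constant, coming from the doubly-degenerate block where \eqref{eq:pWW} is applied twice. Each such application contributes the combination $\eta_\dd(\aR)-\eta_\dd(\xi)$ (respectively its $\bb$-analogue), which by \eqref{eq:zetawp} equals $\tfrac12(\wp'(\aR)-\wp'(\dd))/(\wp(\aR)-\wp(\dd))$; the sign-flipped counterpart contributes $-2\zeta(\aR)+\zeta(\aR+\dd)+\zeta(\aR-\dd)=\wp'(\aR)/(\wp(\aR)-\wp(\dd))$, which is precisely where the factor $\wp'(\aR)$ in \eqref{eq:Delta} originates. Collecting the $\rho^{1/2}$ and exponential prefactors and using $P=\sqrt{(p^2-a^2)(p^2-b^2)}$, the constant collapses to $\wp'(\aR)\wp'(\bb)(AD-BC)/P=\Delta/P$; the determinantal combination $AD-BC$ (rather than $AD+BC$) arises because the $(+,-)\leftrightarrow(-,+)$ pairings enter with opposite sign to the $(+,+)\leftrightarrow(-,-)$ pairings by the oddness of $\wp'$. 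The main obstacle throughout is the bookkeeping of the exponential and $\rho$-weighted phase factors across the four blocks; this is controlled by the fact that \eqref{eq:WW} and \eqref{eq:pWW} are stated in precisely the phase-symmetric form needed for the block-by-block matching, and the elliptic addition identities of Section~3 are calibrated so that all non-constant terms close, in direct parallel with the rational calculation of \cite{NAH}.
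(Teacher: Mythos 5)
Your proposal assembles the right ingredients (the curve identities $p_\aR p_{-\aR}=p^2-a^2$, $p_\bb p_{-\bb}=p^2-b^2$, the degenerate relation \eqref{eq:WWs} as the source of $\wp'(\aR)\wp'(\bb)$, and the sign structure producing $AD-BC$), but the central combinatorial step fails as stated, so the argument does not close. You regroup each four-$W$ product in $U\wt{U}$ as $(\wt{W}_{\epsilon_1'\aR}W_{\epsilon_1\aR})(\wt{W}_{\epsilon_2'\bb}W_{\epsilon_2\bb})$, i.e.\ a pure-$\aR$ factor times a pure-$\bb$ factor. Applying \eqref{eq:WW} to such a pair with equal signs produces $S_{\pm\aR,\pm\aR}$ and $S_{\pm\bb,\pm\bb}$ (suffix sums $\pm2\aR$, $\pm2\bb$), whereas $u^2$, $u\wt{u}$, $\wt{u}^2$ are bilinear in $S_{\pm\aR,\pm\bb}$ and their shifts. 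These are genuinely different objects: in $\tau$-function terms $S_{\aR,\aR}S_{\bb,\bb}\propto (T_\aR^2 f)(T_\bb^2 f)/f^2$ while $S_{\aR,\bb}^2\propto (T_\aR T_\bb f)^2/f^2$, so no formal cancellation between your ``fully generic block'' and $P(u^2+\wt{u}^2)-(2p^2-a^2-b^2)u\wt{u}$ can occur. A direct expansion can be made to work, but only with the mixed pairing $(\wt{W}_{\epsilon_1'\aR}W_{\epsilon_2\bb})(\wt{W}_{\epsilon_2'\bb}W_{\epsilon_1\aR})$, for which \eqref{eq:WW} returns exactly the bilinears $S_{\pm\aR,\pm\bb}$, $\wt{S}_{\pm\aR,\pm\bb}$ appearing on the left-hand side; the degenerate relation then enters only through the $AD$ and $BC$ cross terms, where the antisymmetric combination $\rho^{1/2}(\aR)\wt{\rho}^{1/2}(-\aR)W_\aR\wt{W}_{-\aR}-\wt{\rho}^{1/2}(\aR)\rho^{1/2}(-\aR)\wt{W}_\aR W_{-\aR}$ kills the $U_{\aR,-\aR}$ pieces and leaves the constant $\wp'(\aR)/\sqrt{p^2-a^2}$.

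For comparison, the paper avoids this sixteen-term bookkeeping altogether: it forms the $2\times2$ matrix $\bL$ of \eqref{eq:L}, shows via \eqref{eq:WW} that each entry is a bilinear form in the $2$-vectors $\brr_\aR$, $\brr_\bb$ dressed by $\cA$ (e.g.\ $P\wt{u}-(p^2-b^2)u=-\sqrt{p_\bb p_{-\bb}}\,\brr^T_\aR\cA\wt{\brr}_\bb$), and then evaluates $\det\bL$ twice: once through the rank-two factorisation, giving $-(p^2-b^2)\det(\cA)\det(\brr_\aR,\wt{\brr}_\aR)\det(\brr_\bb,\wt{\brr}_\bb)$ with the $2\times2$ determinants evaluated by \eqref{eq:WWs}, and once directly from the entries, giving $(p^2-b^2)\left[P(u^2+\wt{u}^2)-(2p^2-a^2-b^2)u\wt{u}-U\wt{U}\right]$. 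If you wish to retain a direct expansion you must redo it with the mixed pairing; as written, the proof has a gap at its main step.
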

\begin{proof}
This can asserted by computing the (Lax type) matrices 
\bse\label{eq:LM}\bea 
\bL&\Equiv &\left(\begin{array}{ccc} P\wt{u}-(p^2-b^2)u &,& e^{\eta_\dd\bb} p_{-\bb}U \\ 
e^{-\eta_\dd\bb} p_\bb\wt{U} &,& -Pu+(p^2-b^2)\wt{u}\end{array}\right)\  , \label{eq:L} \\ 
\bM&\Equiv &\left(\begin{array}{ccc} Q\wh{u}-(q^2-b^2)u &,& e^{\eta_\ven\bb} q_{-\bb}U \\ 
e^{-\eta_\ven\bb} q_\bb\wh{U} &,& -Qu+(q^2-b^2)\wh{u}\end{array}\right)\   . \label{eq:M}  
\eea\ese 
in two different ways: one directly, and the other by expressing the entries in terms of our basic objects. 
In fact, introducing the 2-component vectors: 
\be\label{eq:rr} 
\brr^T_\aR=\left( \rho^{1/2}(\aR)\,W_\aR,\rho^{1/2}(-\aR)\,W_{-\aR}\right)\quad,\quad 
\brr_\bb=\left( \begin{array}{c} \rho^{1/2}(\bb)\,W_\bb \\ \rho^{1/2}(-\bb)\,W_{-\bb}\end{array} \right)\  , 
\ee 
we can express $\bL$ as:
\be\label{eq:LL} \bL=\left(\begin{array}{ccc} -\sqrt{p_\bb p_{-\bb}}\,\brr^T_\aR\cA\wt{\brr}_\bb &,& e^{\eta_\dd\bb} p_{-\bb}\brr^T_\aR\cA\brr_\bb\\ 
e^{-\eta_\dd\bb} p_\bb\wt{\brr}^T_\aR\cA\wt{\brr}_\bb &,& -\sqrt{p_\bb p_{-\bb}}\,\wt{\brr}^T_\aR\cA\brr_\bb \end{array}\right)\  , 
\ee 
and a similar expression for $\bM$. In fact, 
\begin{eqnarray*}
&& P\wt{u}-(p^2-b^2)u = \nn \\ 
&=& A\rho^{1/2}(\aR)\rho^{1/2}(\bb)\left[ p_{-\aR}p_{-\bb} e^{\eta_\dd(\aR+\bb)} \wt{S}_{\aR,\bb}-p_\bb p_{-\bb}S_{\aR,\bb}\right]+ \dots \nn \\ 
&=& A\rho^{1/2}(\aR)\rho^{1/2}(\bb) e^{\eta_\dd\bb} p_{-\bb}\left(p_{-\aR} e^{\eta_\dd\aR} \wt{S}_{\aR,\bb}-p_\bb e^{-\eta_\dd\bb} S_{\aR,\bb}\right) + \dots \nn \\ 
&=& -A\rho^{1/2}(\aR)\rho^{1/2}(\bb)e^{\eta_\dd\bb} p_{-\bb}\wt{W}_\bb W_\aR+ \dots \nn \\ 
&=& -\sqrt{p_\bb p_{-\bb}}\left( A\rho^{1/2}(\aR)\wt{\rho}^{1/2}(\bb)W_\aR \wt{W}_\bb +\dots \right) = 
-\sqrt{p_\bb p_{-\bb}}\,\brr^T_\aR\cA\wt{\brr}_\bb\  , 
\end{eqnarray*} 
and similarly, 
$$
Pu-(p^2-b^2)\wt{u}=\sqrt{p_\bb p_{-\bb}}\,\wt{\brr}^T_\aR\cA\brr_\bb\  . 
$$ 
Evaluating the determinant of $\bL$ using the general identity
$$  \det\left(\sum_{j=1}^r\,\mbx_j\mby_j^T\right)=\det\left( (\mby_i^T\cdot\mbx_j)_{i,j=1,\cdots,r}\right)  $$ 
holding for any $2r$ $r$-component vectors $\mbx$, $\mby$, we now get: 
\begin{eqnarray*}
\det(\bL)&=&p_\bb p_{-\bb}\det\left(\cA\wt{\brr}_\bb\,\brr^T_\aR+\cA\brr_\bb\,\wt{\brr}^T_\aR\right) \\ 
&=&(p^2-b^2)\det(\cA) \det\left(\wt{\brr}_\bb\,\brr^T_\aR+\brr_\bb\,\wt{\brr}^T_\aR\right)\\ 
&=& (p^2-b^2)\det(\cA)\det\left\{\left(\wt{\brr}_\bb,\brr_\bb\right)\,\left(\begin{array}{c}\brr^T_\aR\\
 \wt{\brr}^T_\aR\end{array}\right)\right\} \\ 
&=& -(p^2-b^2)\det(\cA)\det\left(\brr_\aR,\wt{\brr}_\aR\right)\,\det\left(\brr_\bb,\wt{\brr}_\bb\right)\  . 
\end{eqnarray*}

It remains to compute the determinant of the matrix $\left(\brr_\aR,\wt{\brr}_\aR\right)$~ whose columns are the 
2-component vectors $\brr_\aR$ and $\wt{\brr}_\aR$. This is done by using  \eqref{eq:WWs} as follows:
\begin{eqnarray*}
\det\left(\brr(a),\wt{\brr}(a)\right)&=& \rho^{1/2}(\aR)\wt{\rho}^{1/2}(-\aR)W_\aR\wt{W}_{-\aR}
-\wt{\rho}^{1/2}(\aR)\rho^{1/2}(-\aR)\wt{W}_\aR\,W_{-\aR} \\ 
&=& \sqrt{\frac{p_\aR}{p_{-\aR}}}\,e^{-\eta_\dd\aR} W_{\aR}\,\wt{W}_{-\aR}-\sqrt{\frac{p_{-\aR}}{p_\aR}}\,e^{\eta_\dd\aR} 
\wt{W}_\aR\,W_{-\aR}\\
&=& \sqrt{p_\aR p_{-\aR}}\,[2\zeta(\aR)+\zeta(\dd-\aR)-\zeta(\dd+\aR)]=\frac{\wp'(\aR)}{\sqrt{p^2-a^2}} \   .  
\end{eqnarray*}

Thus, putting everything together we obtain the result:
$$
\det(\bL)=-(p^2-b^2)\det(\cA)\,\frac{\wp'(\aR)\,\wp'(\bb)}{\sqrt{p^2-a^2}\,\sqrt{p^2-b^2}}\  . 
$$
On the other hand a direct computation of the determinant gives:
\begin{eqnarray*}\label{eq:detLL}
\det(\bL)&=& -[P\wt{u}-(p^2-b^2)u] [Pu-(p^2-b^2)\wt{u}]-(p^2-b^2)U\wt{U}  \\ 
&=& (p^2-b^2)\left[ P(u^2+\wt{u}^2)-(2p^2-a^2-b^2)u\wt{u}-U\wt{U}\right]\  . 
\end{eqnarray*}
which yields the first of \eqref{eq:Hfact}, the second part of which follows by a similar computation of $\det(\bM)$, 
which is obviously obtained from the $\det(\bL)$ by replacing $\dd$ by $\ven$ and $\wt{\phantom{a}}$ by $\wh{\phantom{a}}$. 
\end{proof}

\underline{\it step \# 3:} The final step is to arrive at Q3  by combining the results of step \# 1 and \# 2  for the solution 
$u^{(N)}$ from the Miura relations \eqref{eq:wuUMiura}, together with the relations \eqref{eq:Hfact}. Thus, since $w$ obeys the H1 
equation, we have:  
\begin{eqnarray*}  
p^2-q^2 &=& (w-\wh{\wt{w}})(\wh{w}-\wt{w}) \\ 
&=& \frac{1}{\wt{U}\wh{\wt{U}}}\left[ P\wh{u}-Q\wt{u}-(p^2-q^2)\wh{\wt{u}}\right]\left[ Pu-Q\wh{\wt{u}}-(p^2-q^2)\wt{u}\right] \\ 
\Rightarrow  &&(p^2-q^2)\left[ Q(\wt{u}^2+\wh{\wt{u}}^2)-(2q^2-a^2-b^2)\wt{u}\wh{\wt{u}}-\frac{\Delta}{Q}\right] \\ 
&=& P^2(u\wh{u}+\wt{u}\wh{\wt{u}})+(Q^2-P^2)(\wt{u}\wh{\wt{u}}) +(p^2-q^2)^2\wt{u}\wh{\wt{u}} -PQ(u\wt{u}+\wh{u}\wh{\wt{u}}) + \\
&&\qquad -(p^2-q^2)P(\wh{u}\wt{u}+u\wh{\wt{u}}) +(p^2-q^2)Q(\wt{u}^2+\wh{\wt{u}}^2)\  
\end{eqnarray*}
which after some cancelations leads to Q3 in the form \eqref{eq:Q3new} and, hence, concludes the proof of the Theorem. 
\DONE

\subsection{\bf Hirota form of the elliptic $N$-soliton solution} 

The $\tau$-function can be explicitly computed by using the expansion 
$$
f=\det\left(\boldsymbol{1}+\boldsymbol{M}\right)= 1+ \sum_{i=1}^N \left|M_{i,i}\right| 
+\sum_{i<j}\left|\begin{array}{cc} M_{i,i} & M_{i,j} \\ M_{j,i} & M_{j,j} \end{array}\right| + \cdots + \det(\boldsymbol{M})\  .  
$$ 
and using the Frobenius formula for the relevant elliptic Cauchy determinants, cf. \cite{Frob}. Thus, from 
\be\label{eq:Frob}
\det\left(\rho_i\Psi_{\kp_i+\kp_j}(\xi)c_j\right)=\left(\prod_i\frac{\rho_i c_i}{\sg(2\kp_i)}\right)
\frac{\sg(\xi+2\sum_i\kp_i)}{\sg(\xi)}\,e^{-2\zeta(\xi)\sum_i \kp_i}\prod_{i<j}\left(\frac{\sg(\kp_i-\kp_j)}{\sg(\kp_i+\kp_j)}\right)^2\  .  
\ee
Introducing the notations 
\[ e^{A_{i,j}}\Equiv  \left(\frac{\sg(\kp_i-\kp_j)}{\sg(\kp_i+\kp_j)}\right)^2\quad,\quad e^{\theta_i}= \frac{\rho_i c_i}{\sg(2\kp_i)}e^{-2\zeta(\xi)\kp_i}\  , \] 
the Hirota formula for the $\tau$-function thus takes the form: 
\begin{eqnarray} 
f&=& 1+\sum_{i=1}^N \frac{\sg(\xi+2\kp_i)}{\sg(\xi)}e^{\theta_j}+\sum_{i<j=1}^N\, \frac{\sg(\xi+2\kp_i+2\kp_j)}{\sg(\xi)}e^{\theta_i+\theta_j+A_{i,j}}+ \\ 
&&+\sum_{i<j<k=1}^N\, \frac{\sg(\xi+2\kp_i+2\kp_j+2\kp_k)}{\sg(\xi)}e^{\theta_i+\theta_j+\theta_k+A_{i,j}+A_{i,k}+A_{j,k}}+\cdots\  . \nn \\
&& \label{eq:Hirform} 
\end{eqnarray}
The expression \eqref{eq:Hirform} for $\tau$-function $f$ enters solution of Q3 through the  main quantities $S_{\aR,\bb}$ and $W_\aR$ in the 
following way. First, we note that $V_\dd$, and hence $W_\dd$ can be expressed in terms of $f$ via \eqref{eq:ff}. Second, by setting $\aR=\ven$, 
$\bb=\dd$ in \eqref{eq:pVV}, using also \eqref{eq:ff}, we obtain 
\be\label{eq:Sddven}
 S_{\dd,\ven}=\Psi_\xi(\dd+\ven)\,\frac{\wh{\wt{f}}}{f}\  .   \ee
Thus for special parameters $\aR,\bb$ associated with the lattice parameters, we have an explicit expression for these quantities in terms of 
the $\tau$-function. To get a similar expression for arbitrary $\aR$, $\bb$ we just need to extend the two-dimensional lattice in terms of the 
variables $n$ and $m$ (associated with lattice parameters $\dd$ and $\ven$ respectively) to a four-dimensional one, containing additional 
lattice directions associated with the parameters $\aR$ and $\bb$. The discrete variables associated with these parameters we will denote by 
$h$ and $l$ respectively, and hence the extension amounts to including in the plane-wave factors $\rho(\kp)$ a dependence on these variables 
as follows:   
\be\label{eq:rhoextend}
 \rho(\kp)= \left(e^{-2\zeta(\dd)\kp}\frac{p_{-\kp}}{p_\kp}\right)^n\left(e^{-2\zeta(\ven)\kp}\frac{q_{-\kp}}{q_\kp}\right)^m
\left(e^{-2\zeta(\aR)\kp}\frac{a_{-\kp}}{a_\kp}\right)^h\left(e^{-2\zeta(\bb)\kp}\frac{b_{-\kp}}{b_\kp}\right)^l e^{2\zeta(\xi)\kp}
\rho_{0,0,0,0}(\kp)\   ,  \ee 
whilst extending at the same time the variable $\xi$ covariantly as ~$\xi=\xi_0+n\dd+m\ven+h\aR+l\bb$~~. In \eqref{eq:rhoextend}  
we have set $a_\kp=\Psi_\aR(\kp)$ and $b_\kp=\Psi_\bb(\kp)$. In this extended lattice the elementary lattice shifts $T_\aR$, $T_\bb$ 
associated with these new lattice directions can be used to express $S_{\aR,\bb}$ and $W_\aR$ in terms of the $\tau$-function, in a way 
similar to \eqref{eq:Sddven} and \eqref{eq:ff}, leading to the expressions 
\be\label{eq:Stautau} 
S_{\aR,\bb}=\Psi_\xi(\aR+\bb) \frac{T_\aR T_\bb f}{f}\quad,\quad W_\aR=\Psi_\xi(\aR) \frac{T_\aR f}{f}\   ,   \ee  
and where the $T_\aR$, $T_\bb$ denote elementary lattice shifts in the directions associated with the parameters 
$\aR$, $\bb$. 

Combining now the expression \eqref{eq:Stautau} with the explicit form of the the solution \eqref{eq:Nsol} and of its 
associated variable \eqref{eq:assNsol} we obtain the following result. 
\begin{theorem}\label{th:theoremf}
The elliptic $N$-soliton solution \eqref{eq:Nsol} of Q3 can be written as ~$u^{(N)}=\left({\mathcal T}_{\cA}f\right)/f$~, 
where the shift operator $\mathcal{T}_{\cA}$ is given by 
\be\label{eq:Q3Nsol} 
\mathcal{T}_{\cA}f\Equiv  {\rm Tr}\left\{ \brr^0_\bb\,(\brr^0_\aR)^T 
\left(\begin{array}{ccc} A\Psi_\xi(\aR+\bb)T_\aR T_\bb &,& B\Psi_\xi(\aR-\bb)T_\aR T_{\bb}^{-1} \\ 
C\Psi_\xi(-\aR+\bb)T_\aR^{-1}T_\bb &,& D\Psi_\xi(-\aR-\bb)T_\aR^{-1}T_\bb^{-1} \end{array}\right)f\right\}
\ee 
whereas the associated solution \eqref{eq:assNsol} can be written as ~$U^{(N)}=\left(\mathcal{D}_{\cA}f\cdot f\right)/f^2$~ 
where the bilinear operator $\mathcal{D}_{\cA}$ is given by
\be
\mathcal{D}_{\cA} f\cdot g \Equiv  {\sf m}\left[{\rm Tr}\left\{ \brr^0_\bb\,(\brr^0_\aR)^T 
\left(\begin{array}{ccc} A\Psi_\xi(\aR)\Psi_\xi(\bb)T_\aR\otimes T_\bb &,& B\Psi_\xi(\aR)\Psi_\xi(-\bb)T_\aR\otimes T_{\bb}^{-1} \\ 
C\Psi_\xi(-\aR)\Psi_\xi(\bb)T_\aR^{-1}\otimes T_\bb &,& D\Psi_\xi(-\aR)\Psi_\xi(-\bb)T_\aR^{-1}\otimes T_\bb^{-1}\end{array} \right) 
\right\}(f\otimes g)\right] 
\ee 
with ${\sf m}$ denoting the pointwise product in the space of functions of the lattice sites,  
making use of the notation \eqref{eq:rr}, \eqref{eq:A} as well as 
\be\label{eq:rr0} 
(\brr^0_\aR)^T=\left( \rho^{1/2}(\aR),\rho^{1/2}(-\aR)\right)\quad,\quad 
\brr_\bb^0=\left( \begin{array}{c} \rho^{1/2}(\bb) \\ \rho^{1/2}(-\bb)\end{array} \right)\  . 
\ee 
\end{theorem}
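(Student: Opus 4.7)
The plan is to reduce formulae \eqref{eq:Q3Nsol} and the subsequent bilinear expression to the substitution rules \eqref{eq:Stautau}, regarding the four sign-choices $(\pm\aR,\pm\bb)$ appearing in \eqref{eq:Nsol} and \eqref{eq:assNsol} as forward and backward shifts in the auxiliary lattice directions $h$ and $l$. Beyond what was established in Theorem \ref{th:theorem1}, the content of Theorem \ref{th:theoremf} is a bookkeeping identity rather than a new analytic fact; the difficulty is therefore notational: one has to show that the four scalar summands assemble into a single trace resp. bilinear expression involving $2\times 2$ matrices of shift operators.

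First I would make precise the extended lattice in which the plane-wave factor $\rho(\kp)$ depends on four discrete variables $n,m,h,l$ via \eqref{eq:rhoextend}, and the $\tau$-function $f=\det(\bun+\bM)$ inherits the corresponding four-variable dependence through the shifts of $\brr$. The relations \eqref{eq:Stautau} were already derived in this setting and give $W_{\aR}=\Psi_\xi(\aR)(T_\aR f)/f$ and $S_{\aR,\bb}=\Psi_\xi(\aR+\bb)(T_\aR T_\bb f)/f$. The key observation is that replacing $\aR$ by $-\aR$ in the plane-wave factor \eqref{eq:rhoextend}, via the antisymmetries $\Psi_\aR(-\kp)=-\Psi_{-\aR}(\kp)$ and $\zeta(-\aR)=-\zeta(\aR)$, simply inverts the $h$-shift; hence
\begin{equation*}
S_{-\aR,\bb}=\Psi_\xi(-\aR+\bb)\,\frac{T_\aR^{-1}T_\bb f}{f},\qquad S_{-\aR,-\bb}=\Psi_\xi(-\aR-\bb)\,\frac{T_\aR^{-1}T_\bb^{-1}f}{f},
\end{equation*}
and analogously for $S_{\aR,-\bb}$, $W_{-\aR}$ and $W_{-\bb}$. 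The same sign convention sends $\rho^{1/2}(\aR)\to\rho^{1/2}(-\aR)$, producing precisely the entries of the boundary vectors $\brr^0_\aR$, $\brr^0_\bb$ in \eqref{eq:rr0}.

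Substituting these representations of $S_{\pm\aR,\pm\bb}$ into \eqref{eq:Nsol}, each of the four summands takes the form $[\brr^0_\bb(\brr^0_\aR)^T]_{ij}\,N_{ji}f/f$, where $N$ is the $2\times 2$ operator matrix appearing in \eqref{eq:Q3Nsol}. Summing over $(i,j)$ and using $\sum_{i,j}M_{ij}N_{ji}=\mathrm{Tr}(MN)$ yields at once $u^{(N)}=(\mathcal{T}_{\cA}f)/f$. For the factorised form of $U^{(N)}$ the argument is parallel but lives on a tensor product: using $W_{\pm\aR}W_{\pm\bb}=\Psi_\xi(\pm\aR)\Psi_\xi(\pm\bb)(T_{\pm\aR}f)(T_{\pm\bb}f)/f^2$, each of the four terms of \eqref{eq:assNsol} is a pointwise product of two independently shifted copies of $f$; formally this is $(T_{\pm\aR}\otimes T_{\pm\bb})(f\otimes f)$ followed by the pointwise-multiplication map ${\sf m}$. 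The same $2\times 2$ trace device then collects the four sign choices into the single bilinear operator $\mathcal{D}_{\cA}$.

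The only genuine obstacle is combinatorial bookkeeping: one must verify that the sign of $\aR$ (respectively $\bb$) appearing inside $\Psi_\xi$, inside $\rho^{1/2}$, and in the exponent of $T_\aR$ (respectively $T_\bb$) track consistently through the extended lattice, for each of the four sign-pairs. Once those identifications are checked term by term using the antisymmetries of $\Psi$ and of $\rho$, no further analytic input is needed beyond Theorem \ref{th:theorem1}, and both the scalar and the bilinear formulae of Theorem \ref{th:theoremf} follow by direct comparison.
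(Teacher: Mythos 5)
Your proposal is correct and follows essentially the same route as the paper, which presents Theorem \ref{th:theoremf} as a direct consequence of substituting \eqref{eq:Stautau} into \eqref{eq:Nsol} and \eqref{eq:assNsol}; your additional bookkeeping — checking that the $-\aR$, $-\bb$ shifts are the inverses $T_\aR^{-1}$, $T_\bb^{-1}$ via the antisymmetry of $\Psi$ and $\zeta$, and verifying the trace index convention $\sum_{i,j}M_{ij}N_{ji}$ reproduces the four summands — is exactly the content the paper leaves implicit.
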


\subsection{Linear scheme for elliptic soliton solutions} 

We will now show how the corresponding solutions constructed in section 4 arise naturally from a scheme of linear 
equations, which in turn leads to a novel Lax representation for Q3. These also explain the emergence of the 
Lax type matrices that were used in the proof of Theorem \ref{th:theorem1}. 
 

The objects \eqref{eq:butbu} form the basis of the linear structures underlying the lattice equations that we have studied in the previous 
sections. In fact, they form the main ingredients, on the basis of which we can derive Lax pairs for those lattice equations\footnote{An alternative 
approach being the one using the multidimensional consistency property, cf. \cite{Nij,BS,NW} to derive Lax pairs from the equations themselves, but 
the resulting Lax pairs often require awkward prefactors containing square roots of the relevant biquadratics.}.   

Associated with the Cauchy matrix $\bM$ we introduce the following parameter-dependent $2N$-component eigenfunctions;
\bse\label{eq:BuTbu}\begin{eqnarray}\label{eq:Bu} 
\bu_{\bK}(\bb) &\Equiv& (\bun+\bM)^{-1} \left(\begin{array}{cc} \Psi_\xi(\bb+\bK) &0\\ 
                                        0 & \Psi_\xi(-\bb+\bK) 
                                       \end{array}\right) \left(\begin{array}{c}
                                                                 \rho^{1/2}(\bb) \\ \rho^{1/2}(-\bb)
                                                                \end{array}\right) \brr^0 \\ 
\tbu_{\bK}(\aR) &\Equiv& (\bs^0)^T \left( \rho^{1/2}(\aR), \rho^{1/2}(-\aR)\right) 
                                  \left(\begin{array}{cc} \Psi_\xi(\aR+\bK) &0\\ 
                                        0 & \Psi_\xi(-\aR+\bK) 
                                       \end{array}\right) (\bun+\bM)^{-1} \nn \\ 
\label{eq:Tbu} 
\end{eqnarray}\ese
where the vectors $\brr^0$ and $\tbs^0$ are given by 
$$ (\brr^0)^T=(\rho_1,\cdots,\rho_N)\quad,\quad (\bs^0)^T=(c_1,\cdots,c_N)\   . $$ 

Note that these vectors \eqref{eq:BuTbu} effectively live in the direct product space of the 2-component vectors labelled by the parameters $\pm\aR$, $\pm\bb$ and 
$N$-component vectors associated with the soliton parameters $\kp_i$, ($i=1,\dots,N$). The dynamical properties of these vectors are described by the 
following set of relations:

\begin{prop}
The following set of relations describe the behaviour of the vectors given in \eqref{eq:BuTbu} under the elementary shift in the coordinate $n$ of the multidimensional 
lattice
\bse\label{eq:puapub}\begin{eqnarray}
e^{\eta_\dd\bK}p_{-\bK}\bu_{\bK}(\bb) &=& \sqrt{p_{\bb} p_{-\bb}}\, \wt{\bu}_{\bK}(\bb) + \brr_\bb \wt{\bu}^0_{\bK}\   ,\label{eq:pub} \\ 
e^{-\eta_\dd\bK}p_{\bK}\wt{\bu}_{\bK}(\bb)&=&  \sqrt{p_\bb p_{-\bb}}\, \bu_{\bK}(\bb) -\wt{\brr}_\bb \bu^0_{\bK}\  , \label{eq:upb} \\ 
\, ^{t\!}\wt{\bu}_{\bK}(\aR)p_{-\bK}e^{\eta_\dd\bK} &=& \sqrt{p_\aR p_{-\aR}}\, \tbu_{\bK}(\aR) - \tbu^0_{\bK} \wt{\brr}^T_\aR \  , \label{eq:pua} \\ 
\tbu_{\bK}(\aR) p_{\bK}e^{-\eta_\dd\bK} &=& \sqrt{p_\aR p_{-\aR}}\, ^{t\!}\wt{\bu}_{\bK}(\aR) +  \, ^{t\!}\wt{\bu}^0_{\bK}\brr^T_\aR\   . \label{eq:upa} 
\end{eqnarray}\ese 
where the 2-component vectors $\brr^T_\aR$ and $\brr_\bb$ are given in \eqref{eq:rr}, and where the $N$-component vectors $\bu_{\bK}^0$ and $\tbu_{\bK}^0$ are given by 
\be\label{eq:ukp0}  \bu_{\bK}^0=(\bun+\bM)^{-1}\brr\quad,\quad \tbu^0_{\bK}=\bs^T(\bun+\bM)^{-1} \  , \ee 
recalling the vectors $\brr$ and $\bs$ as defined in \eqref{eq:rs}. 
The shifts in the discrete variable $m$ (and other lattice directions) are described by similar relations like \eqref{eq:puapub}, by replacing 
$\dd$ is replaced by $\ven$ (i.e. $p_{\pm\bK}$ replaced by $q_{\pm\bK}$), and $\wt{\phantom{a}}$ by $\wh{\phantom{a}}$. 
\end{prop}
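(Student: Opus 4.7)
The plan is to derive each of the four identities (\ref{eq:pub})--(\ref{eq:upa}) as a direct consequence of three facts already established in the paper: the Cauchy-matrix intertwining identities (\ref{eq:qM}), which describe how the diagonal operator $e^{\pm\eta_\dd\bK}p_{\mp\bK}$ slides past $\bun+\bM$ at the price of a rank-one tensor $\wt{\brr}\,\bs^T$ or $\brr\,\wt{\bs}^T$; the shift identities (\ref{eq:rhoshifts}) for the plane-wave factors $\rho(\pm\bb)$; and the Lam\'e addition formula (\ref{eq:Psiidsa}) in which $\kp$ and $\ld$ are specialised to combinations of the spectral variable $\pm\bb$ and the soliton parameters $\kp_i$. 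Since the manipulations are structurally identical for the four identities, it suffices to carry out the computation in detail for (\ref{eq:pub}), the remaining three following by transposition and sign flips.

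\textbf{Step 1: the forward relation (\ref{eq:pub}).} Writing the source in (\ref{eq:Bu}) as $\bV_\bb$ so that $\bu_{\bK}(\bb)=(\bun+\bM)^{-1}\bV_\bb$, I multiply the proposed identity on the left by $\bun+\wt{\bM}$ and use (\ref{eq:pMa}) in the equivalent form $(\bun+\wt{\bM})\,e^{\eta_\dd\bK}p_{-\bK}=e^{\eta_\dd\bK}p_{-\bK}(\bun+\bM)-\wt{\brr}\,\bs^T$ (valid because $p_{-\bK}$ and $e^{\eta_\dd\bK}$ commute as diagonal matrices). This reduces the claim to a pure algebraic identity at the level of $2N$-component vectors with no remaining matrix inversion. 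The $(+\bb)$-block is then handled by the addition formula (\ref{eq:Psiidsa}) with $\kp=\bb+\kp_i$, $\ld=\bb$, which splits $\Psi_\xi(\bb+\kp_i)\,p_\bb$ into two pieces: after rescaling plane-wave factors via (\ref{eq:rhoshifts}), one piece combines into the shifted source $\wt{\bV}_\bb$ (producing the $\sqrt{p_\bb p_{-\bb}}\,\wt{\bu}_{\bK}(\bb)$ term), and the other feeds into the rank-one correction. The $(-\bb)$-block is processed identically with $\bb\to-\bb$. The rank-one correction is finally identified with $\brr_\bb\,\wt{\bu}^0_{\bK}$ by using the fact that $\rho^{1/2}(\bb)W_\bb+[\bs^T\bu_{\bK}(\bb)]^{(+)}=\rho^{1/2}(\bb)\Psi_\xi(\bb)$, which is an immediate consequence of $\bs^T\bu_\bb=1-V_\bb$ (from (\ref{eq:V})) and $W_\bb=\Psi_\xi(\bb)V_\bb$ (from (\ref{eq:WS})), together with its $-\bb$-analogue.

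\textbf{Step 2: the remaining relations.} Relation (\ref{eq:upb}) follows by the same mechanism using (\ref{eq:pMb}) and reading (\ref{eq:Psiidsa}) in the opposite direction $\xi\leftrightarrow\wt\xi$; the sign flip in the correction $-\wt{\brr}_\bb\,\bu^0_{\bK}$ arises from this reversal. The adjoint identities (\ref{eq:pua}) and (\ref{eq:upa}) are obtained by the transposed computation in which $\bs^T(\bun+\bM)^{-1}$ acts from the left and the intertwiners (\ref{eq:pMa})--(\ref{eq:pMb}) are applied on the right, with the Lam\'e addition formula specialised to $\kp=\aR+\kp_j$, $\ld=\aR$. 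The corresponding $m$-shift versions are formally identical after the substitutions $\dd\to\ven$, $p_{\pm\bK}\to q_{\pm\bK}$, $\wt{\phantom{a}}\to\wh{\phantom{a}}$, since (\ref{eq:pM}) has exactly the same structure as (\ref{eq:qM}).

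\textbf{Main obstacle.} The delicate point is the bookkeeping of the exponential factors $e^{\pm\eta_\dd\kp_i}$, $e^{\pm\eta_\dd\bb}$ together with the rapidity factors $p_{\pm\bb}$, $p_{\pm\kp_i}$ that appear separately in (\ref{eq:Psiidsa}) and (\ref{eq:rhoshifts}). They must conspire to produce the \emph{symmetric} coefficient $\sqrt{p_\bb p_{-\bb}}$ in front of $\wt{\bu}_{\bK}(\bb)$ --- which is the raison d'\^etre for inserting $\rho^{1/2}$ (rather than $\rho$) in the definition (\ref{eq:Bu}) --- while the remaining combinations must reorganise cleanly into the rank-one correction $\brr_\bb\,\wt{\bu}^0_{\bK}$ built from the $W_{\pm\bb}$ defined in (\ref{eq:WS}). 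This symmetric packaging is precisely what makes the resulting system (\ref{eq:puapub}) into a Lax-type representation for Q3.
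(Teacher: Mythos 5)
Your proposal follows the paper's own proof essentially verbatim: multiply by $\bun+\wt{\bM}$, invoke \eqref{eq:pMa}, split the source term using the Lam\'e addition formula together with the $\rho$-shift relations \eqref{eq:rhoshifts}, and identify the rank-one remainder through $\bs^T\bu_\bb=1-V_\bb$ and $W_\bb=\Psi_\xi(\bb)V_\bb$, with the other three relations following by the transposed/reversed computation. One small correction of detail: the product actually split by \eqref{eq:Psiidsa} is $\Psi_\dd(\kp_i)\,\Psi_\xi(\pm\bb+\kp_i)$, i.e.\ $\kp=\pm\bb+\kp_i$ and $\ld=\kp_i$ (the factor $p_{\kp_i}=\Psi_\dd(\kp_i)$ arising from the $\rho_i$-shift, while the $p_{\pm\bb}$'s enter only through the $\rho^{1/2}(\pm\bb)$-shift and the $\Psi_\dd(\mp\bb)$ produced by the split); with your stated specialisation $\ld=\bb$ the second term of the split would carry $\Psi_\xi(\kp_i)\Psi_{\wt{\xi}}(\bb)$ instead of $\Psi_\xi(\bb)\Psi_{\wt{\xi}}(\kp_i)$, which builds $\wt{\brr}_\bb\,\bu^0_{\bK}$ rather than the unshifted $\brr_\bb\,\wt{\bu}^0_{\bK}$ required in \eqref{eq:pub}.
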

\begin{proof}
The proof of these relations follow similar derivations as those in section 3, i.e. making use of the basic relations \eqref{eq:pM} (and \eqref{eq:qM}), for 
the elliptic Cauchy matrix, as well as the definitions of the various quantities such the plane-wave factors $\rho$, \eqref{eq:rho} and the objects 
\eqref{eq:V}. For instance, to prove \eqref{eq:pub} we proceed as follows:
\begin{eqnarray*}
&& (\bun+\wt{\bM}) e^{\eta_\dd\bK}p_{-\bK}\bu_{\bK}(\bb)= \left[e^{\eta_\dd\bK}p_{-\bK}(\bun+\bM)-\wt{\brr}\,\bs^T \right]\bu_{\bK}(\bb) \\ 
&& = - (\bs^T\bu_{\bK}(\bb))\wt{\brr} + \,e^{\eta_\dd\bK}p_{-\bK}\left(\begin{array}{cc} \Psi_\xi(\bb+\bK) &0\\ 
                                        0 & \Psi_\xi(-\bb+\bK) 
                                       \end{array}\right) \left(\begin{array}{c}
                                                                 \rho^{1/2}(\bb) \\ \rho^{1/2}(-\bb)
                                                                \end{array}\right) \brr^0  
\end{eqnarray*}
in which the entries in the matrix term on the right hand side are computed as follows 
\begin{eqnarray*}
 && e^{\eta_\dd\bK}p_{-\bK}\Psi_\xi(\pm\bb+\bK) \rho^{1/2}(\pm\bb)\brr^0= e^{\eta_\dd\bK}p_{-\bK}\Psi_\xi(\pm\bb+\bK)\rho^{1/2}(\pm\bb) 
    e^{-2\eta_\dd\bK}\frac{p_{\bK}}{p_{-\bK}}\wt{\brr}^0 \\ 
&& =  e^{-\eta_\dd\bK}p_{\bK}\Psi_\xi(\pm\bb+\bK) \rho^{1/2}(\pm\bb)\wt{\brr}^0 = e^{\eta_\dd\bK}\Psi_\dd(\bK)\Psi_\xi(\pm\bb+\bK)\wt{\rho}^{1/2}(\pm\bb) 
    e^{\mp\eta_\dd\bb}\left(\frac{p_{\pm\bb}}{p_{\mp\bb}}\right)^{1/2}\wt{\brr}^0 \\
&& =  e^{-\eta_\dd\bK}\left[e^{\eta_\dd\bK}\Psi_{\xi+\dd}(\bK)\Psi_\xi(\pm\bb)+e^{\eta_\dd(\pm\bb+\bK)}\Psi_\dd(\mp\bb)\Psi_{\xi+\dd}(\pm\bb+\bK)\right]
     \wt{\rho}^{1/2}(\pm\bb) e^{\mp\eta_\dd\bb}\left(\frac{p_{\pm\bb}}{p_{\mp\bb}}\right)^{1/2}\wt{\brr}^0 \\ 
&& = \Psi_{\xi+\dd}(\bK)\Psi_\xi(\pm\bb)\rho^{1/2}(\pm\bb)\wt{\brr}^0 + \sqrt{p_\bb p_{-\bb}}\,\wt{\rho}^{1/2}(\pm\bb)\Psi_{\xi+\dd}(\pm\bb+\bK) \wt{\brr}^0 
\end{eqnarray*}
where use has been made of the addition formula \eqref{eq:Psiidsa} and the shift relations for the plane wave factors $\rho(\pm\bb)$ and in the components of
the vector $\brr^0$. Inserting these expressions in the former relation, and multiplying both sides by ~$(\bun+\wt{\bM})^{-1}$~, whilst taking note of the fact that 
~$\Psi_{\xi+\dd}(\bK)\wt{\brr}^0=\brr$~ according to the definitions, we obtain \eqref{eq:pub}, by identifying

\begin{eqnarray*}
 \bs^T\,\bu_{\bK}(\bb)&=&\bs^T\,(\bun+\bM)^{-1}\left(\begin{array}{cc}
                                                    \Psi_\xi(\bb)\chi_{\bb,\bK}^{-1} & 0 \\ 
                                                    0 & \Psi_\xi(-\bb)\chi_{-\bb,\bK}^{-1}
                                                   \end{array}\right) \Psi_\xi(\bK)\left(\begin{array}{c}
                                                                 \rho^{1/2}(\bb) \\ \rho^{1/2}(-\bb)
                                                                \end{array}\right) \brr^0 \\ 
&=& \left(\begin{array}{c}\rho^{1/2}(\bb)\Psi_\xi(\bb)(1-V_\bb) \\ \rho^{1/2}(-\bb)\Psi_\xi(-\bb)(1-V_{-\bb}))
                                                                \end{array}\right)   
\end{eqnarray*}
and rewriting these 2-vectors in terms of $W_{\pm\bb}$, using the definitions \eqref{eq:WS} and \eqref{eq:rr}. Similar computations yield the other 
relations in \eqref{eq:BuTbu}, where in the latter two we need to indentify the combination
$$ \tbu_{\bK}(\aR)\,\brr= \left(\rho^{1/2}(\aR)\Psi_\xi(\aR)(1-V_\aR)\,,\,\rho^{1/2}(-\aR)\Psi_\xi(-\aR)(1-V_{-\aR})\right)\  . $$  
\end{proof}

\subsection{Lax representation} 

Starting from the basic relations \eqref{eq:puapub} for the 2N-component vectors \eqref{eq:BuTbu} we will now define first $N$-component vectors by projecting in the 
2-component vector space by making use of the arbitrary coefficient matrix $\cA$ of \eqref{eq:A}.  Thus, multiplying \eqref{eq:pub} and \eqref{eq:upb} from the left 
by the 2-component row vector $\brr_{\aR}^T\cA$ we obtain respectively 
\bse\label{eq:rAu}\begin{eqnarray} 
e^{\eta_\dd\bK}p_{-\bK} \brr^T_{\aR}\cA\bu_{\bK}(\bb) &=& \sqrt{p_\bb p_{-\bb}}\,\brr^T_\aR\cA\wt{u}_{\bK}(\bb) + U^{(N)} \wt{\bu}^0_{\bK}\   \label{eq:rAua} \\ 
e^{-\eta_\dd\bK}p_{\bK} \brr^T_\aR\cA\wt{\bu}_{\bK}(\bb) &=& \sqrt{p_\bb p_{-\bb}}\,\brr^T_\aR\cA\bu_{\bK}(\bb) -
\left[\sqrt{p_\bb p_{-\bb}}\,u^{(N)}-\sqrt{p_{\aR} p_{-\aR}}\,\wt{u}^{(N)}\right]\bu^0_{\bK}\     \label{eq:rAub} 
\end{eqnarray}
where we have used the identities relating the expressions in the entries of the matrix \eqref{eq:L} and \eqref{eq:LL}. Similarly, by multiplying \eqref{eq:pub} and \eqref{eq:upb} 
from the left by the 2-component row vector $\wt{\brr}_\aR^T\cA$ we obtain respectively 
\begin{eqnarray}
e^{\eta_\dd\bK}p_{-\bK} \wt{\brr}^T_\aR\cA\bu_{\bK}(\bb) &=& \sqrt{p_\bb p_{-\bb}}\,\wt{\brr}^T_\aR\cA\wt{u}_{\bK}(\bb) + \left[\sqrt{p_\aR p_{-\aR}}\,u^{(N)}-\sqrt{p_\bb p_{-\bb}}\,\wt{u}^{(N)}\right]
\wt{\bu}^0_{\bK}\   \label{eq:rAuc} \\ 
e^{-\eta_\dd\bK}p_{\bK} \wt{\brr}^T_\aR\cA\wt{\bu}_{\bK}(\bb) &=& \sqrt{p_\bb p_{-\bb}}\,\wt{\brr}^T_{\aR}\cA\bu_{\bK}(\bb) -\wt{U}^{(N)}\bu^0_{\bK}\   .  
\label{eq:rAud} 
\end{eqnarray}\ese
Introducing now the new object 
\be\label{eq:ukab}
\bu_{\bK}(\aR,\bb)=\brr^T_{\aR}\cA\bu_{\bK}(\bb)\   . 
\ee 
which is a $N$-component vector, and eliminating the  quantity ~$\brr^T_\aR\cA\wt{\bu}_{\bK}(\bb)$~ from the pair of relations \eqref{eq:rAua} and \eqref{eq:rAub} we obtain  
\bse\label{eq:uAu}\be\label{uAua}
 (b^2-\bk^2) \bu_{\bK}(\aR,\bb) = p_{\bK}e^{-\eta_\dd\bK}U^{(N)}\wt{\bu}_{\bK}^0 + \left[ P\wt{u}^{(N)}-(p^2-b^2)u^{(N)} \right] \bu_{\bK}^0 
\ee
whilst from \eqref{eq:rAuc} and \eqref{eq:rAud} we get  
\be \label{eq:uAub}
 (b^2-\bk^2) \wt{\bu}_{\bK}(\aR,\bb) = -p_{-\bK}e^{\eta_\dd\bK}\wt{U}^{(N)}\bu_{\bK}^0 + \left[Pu^{(N)}-(p^2-b^2)\wt{u}^{(N)}  \right] \wt{\bu}_{\bK}^0\   .  
\ee\ese 
These relations can be used to constitute one part of a Lax pair, whilst the other part follows by replacing $\dd$ by $\ven$ (i.e. $p_{\pm\bK}$ replaced by $q_{\pm\bK}$), and 
$\wt{\phantom{a}}$ by $\wh{\phantom{a}}$. Thus, in terms of the 2N-component vector  
\be\label{eq:bphi}
\bphi_{\bK}(\aR,\bb)=\left( \begin{array}{c}
                            \bu_{\bK}^0 \\ \bu_{\bK}(\aR,\bb) 
                            \end{array}\right) 
\ee 
the following new Lax representation for Q3 is obtained
\bse\label{eq:newLax}\bea
p_{\bK}e^{-\eta_\dd\bK}\wt{\bphi}_{\bK}(\aR,\bb)&=&\frac{1}{U^{(N)}}\,\left( 
\begin{array}{ccc}
-P\wt{u}^{(N)}+(p^2-b^2)u^{(N)} &,& b^2-\bk^2  \\ 
\frac{b^2-p^2}{b^2-\bk^2} \frac{\Delta}{P} - U^{(N)}\wt{U}^{(N)} &,& Pu^{(N)}-(p^2-b^2) \wt{u}^{(N)} 
\end{array}\right)\, \bphi_{\bK}(\aR,\bb) \nn \\ 
&& \label{eq:newL}  \\
q_{\bK}e^{-\eta_\ven\bK}\wh{\bphi}_{\bK}(\aR,\bb)&=&\frac{1}{U^{(N)}}\,\left( 
\begin{array}{ccc}
-Q\wh{u}^{(N)}+(q^2-b^2)u^{(N)} &,& b^2-\bk^2  \\ 
\frac{b^2-q^2}{b^2-\bk^2} \frac{\Delta}{Q} - U^{(N)}\wh{U}^{(N)} &,& Qu^{(N)}-(q^2-b^2) \wh{u}^{(N)} 
\end{array}\right)\, \bphi_{\bK}(\aR,\bb) \nn \\ 
&& \label{eq:newM}\  . 
\eea\ese 

In a similar way we can use the relationns \eqref{eq:pua} and \eqref{eq:upa} to derive an ``adjoint'' Lax pair for the row vectors defined by 
\be\label{eq:tbphi} 
\tbphi_{\bK}(\aR,\bb)=\left(\tbu^0_{\bK}\,,\,\tbu_{\bK}(\aR,\bb)\right)\quad,\quad \tbu_{\bK}(\aR,\bb)=\tbu_{\bK}(\aR)\cA\brr_\bb\  , 
\ee  
for the entries of which one can derive the following coupled set of equations:
\bse\begin{eqnarray}
\tbu_{\bK}(\aR,\bb)(a^2-\bk^2) &=&\,^{t\!}\wt{\bu}^0_{\bK}\, p_{-\bK}e^{\eta_\dd\bK} U^{(N)} +\tbu^0_{\bK}\,\left[ P\wt{u}^{(N)}-(p^2-a^2)u^{(N)}\right]\  , \\ 
 \,^{t\!}\wt{\bu}_{\bK}(\aR,\bb)(a^2-\bk^2) &=& \,^{t\!}\wt{\bu}^0_{\bK}\,\left[ Pu^{(N)}-(p^2-a^2)\wt{u}^{(N)}\right]-\tbu^0_{\bK}\, p_{\bK} e^{-\eta_\dd\bK}
\wt{U}^{(N)}\   . 
\end{eqnarray}\ese
In terms of \eqref{eq:bphi} one can derive a similar Lax pair, namely
\bse\label{eq:newLaxadj}\bea
\,^{t\!}\wt{\bphi}_{\bK}(\aR,\bb)\,p_{-\bK} e^{\eta_\dd\bK} &=& \tbphi_{\bK}\,\frac{1}{U^{(N)}}\,\left( 
\begin{array}{ccc}
-P\wt{u}^{(N)}+(p^2-a^2)u^{(N)} &,& \frac{a^2-p^2}{a^2-\bk^2} \frac{\Delta}{P} - U^{(N)}\wt{U}^{(N)} \\ 
a^2-\bk^2 &,& Pu^{(N)}-(p^2-a^2) \wt{u}^{(N)} 
\end{array}\right) \nn \\ 
&& \label{eq:newLadj}  \\
\,^{t\!}\wh{\bphi}_{\bK}(\aR,\bb)\,q_{-\bK} e^{\eta_\ven\bK} &=& \tbphi_{\bK}\,\frac{1}{U^{(N)}}\,\left( 
\begin{array}{ccc}
-Q\wh{u}^{(N)}+(q^2-a^2)u^{(N)} &,& \frac{a^2-q^2}{a^2-\bk^2} \frac{\Delta}{Q} - U^{(N)}\wh{U}^{(N)} \\ 
a^2-\bk^2 &,& Qu^{(N)}-(q^2-a^2) \wh{u}^{(N)} 
\end{array}\right) \nn \\ 
&& \label{eq:newMadj}\  . 
\eea\ese

In \eqref{eq:newLax} the structure derived from the elliptic $N$-soliton solutions, following the computations in this section, leads to  2N$\times$2N  Lax 
matrices with an N$\times$N block structure, each block of which is labelled by the soliton parameters $\kp_i$. However, having derived the Lax pair, one can 
now consider the variables $\bk$ (which were diagonal matrices) as general spectral parameters, and assess the compatibility of the system \eqref{eq:newLax} 
on its own merit. This leads to the following statement:

\begin{theorem}
The compatibility conditions for the two linear equations \eqref{eq:newL} and \eqref{eq:newM} are satisfied if and only if 
$u^{(N)}$ obeys Q3 in the form \eqref{eq:Q3new}, making use of the biquadratic identities \eqref{eq:Hfact} together with 
\eqref{eq:Hbiquad}. The same holds true for the two linear equations \eqref{eq:newLadj} and \eqref{eq:newMadj}. 
\end{theorem}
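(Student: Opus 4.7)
The plan is to reduce the compatibility condition to a pure matrix identity, expand it in the free spectral parameter, and then identify each entry with either Q3 or the biquadratic factorisation. First I would write the Lax equations \eqref{eq:newL}--\eqref{eq:newM} in the normal forms
\begin{equation*}
\wt{\bphi} = \alpha_k^{-1}L\,\bphi,\qquad \wh{\bphi} = \beta_k^{-1}M\,\bphi,
\end{equation*}
with scalar prefactors $\alpha_k = p_k e^{-\eta_\dd k}$ and $\beta_k = q_k e^{-\eta_\ven k}$. Since $p_k = \Psi_\dd(k)$ and $q_k=\Psi_\ven(k)$ are $\xi$-independent, and since \eqref{eq:etadd} yields the telescoping identity
\begin{equation*}
\wh{\eta_\dd}+\eta_\ven \;=\; \wt{\eta_\ven}+\eta_\dd \;=\; \zeta(\xi+\dd+\ven)-\zeta(\xi)-\zeta(\dd)-\zeta(\ven),
\end{equation*}
the two prefactors $\wh{\alpha_k}\beta_k$ and $\wt{\beta_k}\alpha_k$ coincide. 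Consequently the consistency $\wh{\wt{\bphi}}=\wt{\wh{\bphi}}$ collapses to the zero-curvature relation $\wh{L}\,M = \wt{M}\,L$.

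Next I would factor $L=L_0/U$ and $M = M_0/U$, so that $L_0,M_0$ are the $2\times 2$ matrices appearing in \eqref{eq:newL}--\eqref{eq:newM} stripped of the overall $1/U^{(N)}$, and rewrite the matrix compatibility as
\begin{equation*}
\wt{U}\,\wh{L_0}\,M_0 \;=\; \wh{U}\,\wt{M_0}\,L_0.
\end{equation*}
Each of the four entries is rational in the spectral parameter $k$; after clearing a factor $b^2-k^2$ the result is a polynomial of degree at most one in $k^2$, so compatibility holds identically in $k$ if and only if the coefficients of $1$ and of $k^2$ in each entry vanish separately.

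The central step is to match these identities with the known ones. A direct expansion shows that the $(1,2)$ entry, after division by $b^2-k^2$, reduces to
\begin{equation*}
\wt{U}\bigl[\,Qu-P\wh{\wt{u}}+(p^2-q^2)\wh{u}\,\bigr] \;=\; \wh{U}\bigl[\,Pu-Q\wh{\wt{u}}-(p^2-q^2)\wt{u}\,\bigr],
\end{equation*}
which is exactly the Miura relation \eqref{eq:wuUMiurab}. The $(2,1)$ entry similarly reproduces \eqref{eq:wuUMiuraa}, with the $\Delta/P,\,\Delta/Q$ corrections in the $(2,1)$ positions of $L_0,M_0$ precisely generating the $\Delta/(PQ)$ inhomogeneity of \eqref{eq:Q3new}. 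The diagonal entries reduce to trivial consequences of the determinant identity
\begin{equation*}
\det L_0 \;=\; (p^2-k^2)\,U\wt{U},\qquad \det M_0 \;=\; (q^2-k^2)\,U\wh{U},
\end{equation*}
which I would first record as a short lemma by expanding the determinants and substituting the biquadratic factorisation \eqref{eq:Hfact} together with \eqref{eq:Hbiquad} and the purely algebraic identity $P^2+(p^2-b^2)^2=(p^2-b^2)(2p^2-a^2-b^2)$. Combining step 1 of the proof of Theorem \ref{th:theorem1}, which showed that \eqref{eq:wuUMiura} together with the biquadratic factorisation implies Q3, and running the implications in reverse using the spectral freedom in $k$, yields both directions of the iff.

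For the adjoint pair \eqref{eq:newLadj}--\eqref{eq:newMadj} the argument is identical after transposition: the prefactors cancel by the same $\eta$-identity, the zero-curvature relation obtained from $\wh{\wt{\tbphi}}=\wt{\wh{\tbphi}}$ is the transpose of the previous one with the roles of the offsets $b^2-k^2$ and $a^2-k^2$ dualised, reflecting the swap between $\brr_\aR$ in \eqref{eq:tbphi} and $\brr_\bb$ in \eqref{eq:bphi}. The main obstacle is algebraic bookkeeping rather than any new ingredient: the four matrix entries mix Q3-type polynomial combinations with bilinear pieces $U\wt U$ and $U\wh U$, and one must systematically substitute via \eqref{eq:Hfact} to separate the Q3 content from the biquadratic content. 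Once this organisation is in place, no further elliptic identities beyond those already used in the proof of Theorem \ref{th:theorem1} are required.
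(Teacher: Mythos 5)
Your overall strategy --- cancel the scalar prefactors via the $\eta$-identity, reduce the consistency $\wh{\wt{\bphi}}=\wt{\wh{\bphi}}$ to the zero-curvature relation $\wt{U}\,\wh{L_0}M_0=\wh{U}\,\wt{M_0}L_0$, separate the powers of $b^2-k^2$ in each entry, and match the resulting identities against \eqref{eq:wuUMiura} and \eqref{eq:Hfact} --- is the right one and is essentially what the paper's (very terse) proof does. Your reduction of the off-diagonal entries to the Miura relations is correct, as are the determinant evaluations $\det L_0=(p^2-k^2)\,U\wt{U}$ and $\det M_0=(q^2-k^2)\,U\wh{U}$ (which indeed require \eqref{eq:Hfact} together with $P^2+(p^2-b^2)^2=(p^2-b^2)(2p^2-a^2-b^2)$).

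The gap is your treatment of the diagonal entries. They are \emph{not} trivial consequences of the determinant identity: they are precisely where Q3 itself enters, and discarding them destroys the ``compatibility $\Rightarrow$ Q3'' direction. The off-diagonal conditions are only the \emph{ratio} identities, e.g. $\wh{U}\,[Pu-Q\wh{\wt{u}}-(p^2-q^2)\wt{u}]=-\wt{U}\,[P\wh{\wt{u}}-Qu-(p^2-q^2)\wh{u}]$ and its companion; these say that the two expressions for $\wh{w}-\wt{w}$ and for $w-\wh{\wt{w}}$ in \eqref{eq:wuUMiura} are each well defined, but they impose no constraint on the \emph{product} $(w-\wh{\wt{w}})(\wh{w}-\wt{w})$. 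It is exactly the product relation $(w-\wh{\wt{w}})(\wh{w}-\wt{w})=p^2-q^2$ which, after inserting $\wt{U}\wh{\wt{U}}=\cH_{\ssq}(\wt{u},\wh{\wt{u}})$, is equivalent to $\cQ^{\Delta}_{\ssp,\ssq}(u,\wt{u},\wh{u},\wh{\wt{u}})=0$ --- this is step \#~3 of the proof of Theorem \ref{th:theorem1}, not step \#~1 as you cite; step \#~1 alone does not give ``Miura $+$ biquadratics $\Rightarrow$ Q3''. Algebraically the point is that equal off-diagonal entries together with equal determinants fix the diagonal entries only up to a swap; the trace condition $\mathrm{tr}(\wt{U}\,\wh{L_0}M_0)=\mathrm{tr}(\wh{U}\,\wt{M_0}L_0)$ is an independent, genuinely quadratic-in-$u$ condition, and it is this condition that reduces to Q3 once \eqref{eq:Hfact} and \eqref{eq:Hbiquad} are substituted (consistently with the paper's remark that two of the four entries ``yield the equation''). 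You need to carry out this diagonal computation explicitly; with it in place both directions of the equivalence follow, and the rest of your argument, including the transposed treatment of \eqref{eq:newLadj}--\eqref{eq:newMadj}, goes through.
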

\begin{proof}
The proof is by direct computation, where the (1,1) and (2,2) entries yield the equalities at the left-hand sides of \eqref{eq:wuUMiura}, 
whilst the (1,2) and (2,1) entries yield the equation inserting the explicit expressions for the biquadratics $\cH_\ssp$ and $\cH_\ssq$  
of \eqref{eq:Hbiquad}. 
\end{proof}

\section{B\"acklund scheme for the elliptic soliton solutions} 

In \cite{NAH} we showed by explicit computation that the recursive structure of the $N$-soliton solutions for Q3 coincides with the action of the 
B\"acklund transformation. Although this is to be expected, the mechanism of the proof reveals the role of the relations between 
the various quantities obeying different equations in the ABS list, in particular the Miura type transformations between them. 
In Theorem \ref{th:theorem2} we make the statement as in \cite{NAH}, but now for the elliptic soliton solutions, and uncover the various aspects of the proof 
which have to be adapted for the elliptic case. 

\subsection{B\"acklund transformation from the $N$- to the $(N+1)$-soliton solution} 

The following theorem relates the elliptic $(N+1)$-soliton solution of the form given in \eqref{eq:Nsol} to the corresponding 
elliptic $N$-soliton solution with one less soliton parameter, thus establishing the recursive structure between them. In the proof of this theorem an important role
is played by the {\it covariant extension} (introduced in \cite{AHN}) of the $N$-soliton solution into a new lattice direction associated with lattice parameter $\ssl=\ssk_{N+1}$. Thus, as in \cite{NAH}, the shift $\rho_i\rightarrow\ol{\rho}_i$ denotes in what follows the covariant 
extension of the lattice variables into a direction given by this new parameter, thereby exploiting the multidimensional consistency of Q3. 
Equivalently this can be described by the introduction of a new 
integer lattice variable $h$ associated with the parameter $k_{N+1}$, on which the plane wave-factors \eqref{eq:rho} depend 
through the intial values, in the usual fashion, i.e. including a new factors with teh lattice parameter $\dd$ replaced by $\kp_{N+1}$ and 
the lattice variable $n$ replaced by $h$. 
Having this in mind the following theorem describes the recursive structure between the elliptic $N$- and $(N+1)$-soliton solutions: 

\begin{theorem}\label{th:theorem2} 
Let $u^{(N)}$ be as defined in (\ref{eq:Nsol}) and let $\wb{u}^{(N+1)}$ be equal to $u^{(N+1)}$ as defined in (\ref{eq:Nsol}), 
depending on additional parameters $\kp_{N+1}$, $c_{N+1}$ and additional plane-wave factor $\rho_{N+1}$, and where all but the latter  
plane-wave factors $\rho_i$, $i=1,\dots,N$, as well as the discrete exponentials $\rho(\aR)$, $\rho(\bb)$ are replaced by  
\be\label{eq:extendrho} 
\ol{\rho}_i=e^{2\eta_{\kp_{N+1}}\kp_i}\frac{\sg(\kp_i-\kp_{N+1})}{\sg(\kp_i+\kp_{N+1})}\,\rho_i\quad,\quad (i\neq N+1)\quad,\quad 
\wb{\rho}(\kp)=e^{2\eta_{\kp_{N+1}}\kp}\frac{\sg(\kp-\kp_{N+1})}{\sg(\kp+\kp_{N+1})}\,\rho(\kp)\  ,  \ee
with $\kp=\pm\aR,\pm\bb$ respectively. Then $u^{(N)}$ is related to $\ol{u}^{(N+1)}$ by the B\"acklund transformation with 
B\"acklund parameter $\ssk_{N+1}\in\Gamma$,
that is the following equations hold
\begin{equation}\label{eq:Q3solBT} 
\cQ_{\ssp,\ssk_{N+1}}^\Delta(u^{(N)},\wt{u}^{(N)},\ol{u}^{(N+1)},\wt{\ol{u}}^{(N+1)})=0,\qquad
\cQ_{\ssq,\ssk_{N+1}}^\Delta(u^{(N)},\wh{u}^{(N)},\ol{u}^{(N+1)},\wh{\ol{u}}^{(N+1)})=0,
\end{equation}
in which $\cQ_{\ssp,\ssk_{N+1}}$, $\cQ_{\ssq,\ssk_{N+1}}$ are the quadrilateral expressions of the form given in 
\eqref{eq:Q3quad}
\end{theorem}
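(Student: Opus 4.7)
The plan is to deduce Theorem \ref{th:theorem2} from Theorem \ref{th:theorem1} by means of a \emph{covariant extension} of the Cauchy-matrix construction to a three-dimensional lattice, exploiting the multidimensional consistency manifest in the formalism of Section 4. Concretely, I would introduce an auxiliary discrete variable $h$ associated with the B\"acklund parameter $\ssk_{N+1}$, extend the bare variable to $\xi=\xi_{0}+n\dd+m\ven+h\kp_{N+1}$, and equip every plane-wave factor $\rho(\kp)$ with a further multiplicative $h$-dependence built from $\Psi_{\kp_{N+1}}(\pm\kp)$ along the pattern of \eqref{eq:rhoextend}. All the basic identities established in Section 4 for the elliptic Cauchy matrix $\bM$, the vectors $\bu_\aR$, $\tbu_\bb$, and the objects $V_\aR$, $W_\aR$, $U_{\aR,\bb}$, $S_{\aR,\bb}$ then reproduce verbatim in the new direction, with $\dd$ replaced by $\kp_{N+1}$ and the tilde-shift replaced by the $h$-shift operator $T_{N+1}$. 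Consequently the reasoning of Theorem \ref{th:theorem1} applied in each of the pairs $(n,h)$ and $(m,h)$ certifies that the $(N{+}1)$-soliton object $u^{(N+1)}$, viewed on the enlarged lattice, satisfies the quadrilateral equations $\cQ^{\Delta}_{\ssp,\ssk_{N+1}}=0$ and $\cQ^{\Delta}_{\ssq,\ssk_{N+1}}=0$ on the corresponding plaquettes.

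The crucial degeneracy is that the prescribed $h$-factor attached to the extra plane-wave $\rho_{N+1}$ is proportional to $\Psi_{\kp_{N+1}}(-\kp_{N+1})$, which vanishes because $\sg(0)=0$. Hence the $h$-shifted value of $\rho_{N+1}$ is zero, so at $h=1$ the $(N{+}1)$-th row and column of the bordered Cauchy matrix $\bM^{(N+1)}$ contribute trivially, and $u^{(N+1)}|_{h=1}$ collapses to the $N$-soliton expression built from the $N\times N$ principal submatrix. After reconciling the residual exponential factors of the form $e^{2\zeta(\kp_{N+1})\kp_{i}}$ that arise from the identity $\zeta(\xi+\kp_{N+1})-\zeta(\xi)=\eta_{\kp_{N+1}}+\zeta(\kp_{N+1})$ (cf.~\eqref{eq:etadd}) in the $e^{2\zeta(\xi)\kp}$ part of the plane-wave factor, and absorbing them into a rescaling of the initial values $\rho_{i}(0,0,0)$ as well as the constants $A,B,C,D$, this reduction matches $u^{(N)}$ on the nose. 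The rescaling is antisymmetric between the diagonal pair $(A,D)$ and the off-diagonal pair $(B,C)$, so the combination $\det\cA=AD-BC$, and hence $\Delta=\wp'(\aR)\wp'(\bb)\det\cA$, is preserved. At the opposite end, $h=0$, the plane-wave factors are precisely those prescribed in \eqref{eq:extendrho}, so $u^{(N+1)}|_{h=0}=\wb{u}^{(N+1)}$.

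Combining the two identifications $u^{(N+1)}|_{h=0}=\wb{u}^{(N+1)}$ and $u^{(N+1)}|_{h=1}=u^{(N)}$ with the Q3 equations satisfied by $u^{(N+1)}$ in the $(n,h)$ and $(m,h)$ pairs yields the desired B\"acklund relations \eqref{eq:Q3solBT} immediately. The main obstacle I anticipate is the bookkeeping in the second step: one needs to verify carefully that the vanishing of $\rho_{N+1}$ at $h=1$ genuinely kills the contribution of the $(N{+}1)$-th soliton throughout the formula \eqref{eq:Nsol} (not merely at the Cauchy-matrix level), and that the residual exponential factors are absorbed consistently into the initial data of the $N$-soliton object and its constants while keeping $\Delta$ invariant. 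Once this accounting is in place, the B\"acklund property is a direct corollary of Theorem \ref{th:theorem1} applied on the enlarged lattice, making manifest that adding one soliton to $u^{(N)}$ and performing one BT step with parameter $\ssk_{N+1}$ are the same operation.
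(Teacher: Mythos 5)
Your strategy --- realise the B\"acklund shift as one step of the covariant extension in a third lattice direction carrying the parameter $\kp_{N+1}$, and use the vanishing of $\Psi_{\kp_{N+1}}(-\kp_{N+1})$ to collapse the $(N{+}1)$-soliton solution to the $N$-soliton one after that step --- is genuinely different from the paper's proof, which instead (i) borders the Cauchy matrix to obtain the explicit recursions \eqref{eq:UVsolrecurs} between the $N$- and $(N{+}1)$-soliton objects with $s=f^{(N+1)}/f^{(N)}$, (ii) solves the BT Riccati system for $u^{(N)}$ explicitly as the interpolation \eqref{eq:interpol} between $\ol{u}^{(N)}$ and $\underline{u}^{(N)}$ governed by the function $\veta$, and (iii) matches the two expressions by fixing the integration constant as in \eqref{eq:t0id}. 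Your observation that the $h$-shift annihilates $\rho_{N+1}$, kills the $(N{+}1)$-st row of $\bM^{(N+1)}$, and collapses all the bilinear objects to their $N$-soliton values is correct and is indeed the structural reason the theorem is true.

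However, there are two genuine gaps. First, the ``absorption of residual exponential factors into the initial values and the constants $A,B,C,D$'' does not work as stated: the shift factor $e^{2\eta_{\kp_{N+1}}\kp_i}$ contains $\eta_{\kp_{N+1}}=\eta_{\kp_{N+1}}(\xi)$, which depends on $(n,m)$ through $\xi$, so it is not a constant rescaling. What the collapse at $h=1$ actually produces is the covariantly shifted solution $\ol{u}^{(N)}$ (with $\xi\to\xi+\kp_{N+1}$ everywhere, including inside $\Psi_\xi$ and the kernel $\bM^0$), not $u^{(N)}$. Consequently the quadrilateral you obtain in the $(n,h)$-plane pairs the \emph{unshifted} $(N{+}1)$-soliton solution at $h=0$ with the \emph{shifted} $N$-soliton solution at $h=1$; converting this into the pairing asserted in the theorem --- unshifted $u^{(N)}$ with $\wb{u}^{(N+1)}$ in which everything \emph{except} $\rho_{N+1}$ is up-shifted --- is not a relabelling of constants (the would-be relabelling forces $\rho=\ol{\ol{\rho}}$, and the inverse shift of $\rho_{N+1}$ is singular since $\ell_{-\kp_{N+1}}=0$). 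Identifying which member of the one-parameter family of BT images of $u^{(N)}$ the $(N{+}1)$-soliton formula actually is, i.e. pinning down the role of the free constant in $\rho_{N+1}c_{N+1}$, is exactly the content of the paper's Step 3 and cannot be waved through. Second, invoking ``Theorem \ref{th:theorem1} applied in the $(n,h)$ and $(m,h)$ planes'' is not automatic: the third direction has lattice parameter equal to one of the soliton parameters, a degenerate configuration in which several intermediate identities of Sections 4--5 (e.g.\ the $\tau$-function shift computation, which involves $(p_{-\bK})^{-1}$) become singular precisely because $\ell_{-\kp_{N+1}}=0$. The paper sidesteps this by working with a generic B\"acklund parameter $\ld$ throughout Steps 2--3 and only setting $\ld=\kp_{N+1}$ at the very end; your argument needs an analogous limiting justification before the multidimensional-consistency step can legitimately be applied in the degenerate plane.
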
 
\paragraph{\it Proof:} The proof of Theorem \ref{th:theorem2} follows a similar pattern as the corresponding proof in \cite{NAH}, apart from some refinements which involve the 
non-autonomous elliptic Cauchy kernel and the structure of the elliptic identities involved in the computations. In the first step we break down the $(N+1)$-soliton 
expression into components associated with the $N$-soliton solution. In the second step we apply the BT to the $N$-soliton solution and in the final 
step we compare the expressions obtained in the two previous steps. 

\paragraph{}{\it Step \# 1.} We first establish a recursive structure between the basic objects like 
$S_{\aR,\bb}$ and $V_{\aR}$ between the $N$- and $(N+1)$-soliton solutions. This uses the breakdown of 
the Cauchy matrix as it occurs as the kernel $(\bun+\bM)^{-1}$ in the various objects.  First, for the sake 
of self-containedness, we repeat the following lemma: 

\begin{lemma}
The following identity holds for the inverse of a $(N+1)\times(N+1)$ block-matrix 
\be\label{eq:matinv} \left(\begin{array}{c|c} \bA & \bbb \\ \hline  
\bc^T & d\end{array} \right)^{-1} = \left(\begin{array}{c|c} \bA^{-1}(\bun+\frac{1}{s}\bbb\,\bc^T\bA^{-1}) & -\frac{1}{s}\bA^{-1}\bbb \\ 
\hline  -\frac{1}{s}\bc^T\bA^{-1} & \frac{1}{s}\end{array} \right)\  ,   \ee 
in which $\bA$ is an invertible $N\times N$ matrix $\bbb$ and $\bc^T$ are $N$-component vector column- and row-vector respectively , 
and $d$ is a nonzero scalar, where the scalar quantity $s$, given by
\be\label{eq:sinv} 
 s=d-\bc^T\bA^{-1}\bbb=\frac{1}{\det(\bA)}\,\left|\begin{array}{c|c} \bA & \bbb \\ \hline  
\bc^T & d\end{array} \right|\  , 
\ee 
is assumed to be nonzero.  
\end{lemma}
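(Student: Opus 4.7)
The plan is to prove the block-inversion identity by recognising it as the standard Schur-complement factorisation and then inverting factor by factor. Concretely, under the assumption that $\bA$ is invertible, I would first verify (by direct block multiplication) the LDU-type factorisation
\begin{equation*}
\left(\begin{array}{c|c} \bA & \bbb \\ \hline \bc^T & d\end{array}\right)
= \left(\begin{array}{c|c} \bun & \bze \\ \hline \bc^T\bA^{-1} & 1\end{array}\right)
  \left(\begin{array}{c|c} \bA & \bze \\ \hline \bze^T & s\end{array}\right)
  \left(\begin{array}{c|c} \bun & \bA^{-1}\bbb \\ \hline \bze^T & 1\end{array}\right),
\end{equation*}
where the scalar $s=d-\bc^T\bA^{-1}\bbb$ is precisely what appears in the bottom-right slot after clearing the off-diagonal blocks. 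This factorisation is the substance of the lemma: once it is in hand the remaining work is automatic.

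Next I would invert each of the three factors on the right. The two unipotent triangular factors are inverted by simply negating their off-diagonal block, and the block-diagonal middle factor is inverted by replacing $\bA$ by $\bA^{-1}$ and $s$ by $1/s$ (this is where the hypothesis $s\neq 0$ is used to guarantee invertibility of the full matrix). Writing out the product of the three inverted factors in reversed order,
\begin{equation*}
\left(\begin{array}{c|c} \bun & -\bA^{-1}\bbb \\ \hline \bze^T & 1\end{array}\right)
\left(\begin{array}{c|c} \bA^{-1} & \bze \\ \hline \bze^T & 1/s\end{array}\right)
\left(\begin{array}{c|c} \bun & \bze \\ \hline -\bc^T\bA^{-1} & 1\end{array}\right),
\end{equation*}
and executing the two block multiplications (right-to-left, say), reproduces the four block entries stated in \eqref{eq:matinv}; in particular the $(1,1)$ block is $\bA^{-1}+\frac{1}{s}\bA^{-1}\bbb\,\bc^T\bA^{-1}$, exactly as claimed.

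Finally, for the determinant identity \eqref{eq:sinv} I would take determinants in the factorisation above. The two unipotent triangular factors have determinant $1$, while the block-diagonal factor has determinant $\det(\bA)\cdot s$; hence the determinant of the whole block matrix equals $\det(\bA)\cdot s$, which upon dividing by $\det(\bA)$ gives the stated formula for $s$. There is no essential obstacle here — the computation is standard Schur-complement bookkeeping; the only care needed is to carry out the block products tidily and to keep track of the non-commutativity between $\bA^{-1}$ and the vectors $\bbb$, $\bc^T$, so that the $(1,1)$ block is written in the specific factored form $\bA^{-1}(\bun+\tfrac{1}{s}\bbb\,\bc^T\bA^{-1})$ used by the authors in the sequel.
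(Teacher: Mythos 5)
Your proposal is correct. The paper itself disposes of the lemma with a one-line verification ("by direct multiplication", i.e.\ checking that the displayed candidate inverse times the original block matrix is the identity), whereas you derive the inverse from the Schur-complement factorisation
$\left(\begin{smallmatrix} \bA & \bbb \\ \bc^T & d\end{smallmatrix}\right)=L\,D\,U$ with unipotent triangular $L$, $U$ and block-diagonal $D=\mathrm{diag}(\bA,s)$, and then invert factor by factor. Both routes are standard and equally rigorous; the main thing your version buys is that the determinant identity \eqref{eq:sinv} falls out immediately from $\det L=\det U=1$ and $\det D=s\det(\bA)$, a point the paper's proof leaves implicit, and it makes transparent exactly where the hypotheses enter ($\bA$ invertible to form the factorisation, $s\neq 0$ to invert $D$). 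Your block computations check out: $U^{-1}D^{-1}L^{-1}$ does reproduce all four stated blocks, including the $(1,1)$ block in the factored form $\bA^{-1}(\bun+\tfrac{1}{s}\bbb\,\bc^T\bA^{-1})$. One cosmetic remark: the hypothesis that $d$ be nonzero is not actually needed by either argument — only invertibility of $\bA$ and $s\neq 0$ are used.
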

\begin{proof}
By direct multiplication, noting that the matrix is invertible if $s$ is nonzero. 
\end{proof}  

Let $\bM^{(N+1)}$ be the $(N+1)\times(N+1)$ Cauchy matrix with parameters $\kp_1,\dots,\kp_{N+1}$ as defined in a similar way as 
in \eqref{eq:Cauchy}. Applying the Lemma to compute the inverse of the matrix ~$\bun+\bM^{(N+1)}$~, which can be decomposed 
as above by setting
\begin{eqnarray*}
&& \bA=\bun+\bM^{(N)}\quad,\quad \bbb=c_{N+1}\Psi_\xi(\kp_{N+1})\chi_{\kp_{N+1},\bK}^{-1}\brr\quad ,  \\ 
&& \bc^T=\bs^T\chi_{\bK,\kp_{N+1}}^{-1}\Psi_\xi(\kp_{N+1})\rho_{N+1}\quad,\quad d=1+\rho_{N+1}c_{N+1}\Psi_\xi(2\kp_{N+1})\   ,  
\end{eqnarray*}  
we have 
\begin{eqnarray*}
&& (\bun+\bM^{(N+1)})^{-1}=    \\   
&& =\left(\begin{array}{c|c} 
(\bun+\bM^{(N)})^{-1}+s^{-1}c_{N+1}\rho_{N+1}\Psi_\xi^2(\kp_{N+1})\bu_{\kp_{N+1}}\,\tbu_{\kp_{N+1}} & 
-s^{-1}c_{N+1}\Psi_\xi(\kp_{N+1})\bu_{\kp_{N+1}} \\ 
\hline  -s^{-1}\rho_{N+1}\Psi_\xi(\kp_{N+1})\tbu_{\kp_{N+1}} & s^{-1}\end{array} \right) 
\end{eqnarray*} 
where $\bu(\cdot)$ and $\tbu(\cdot)$ are given in \eqref{eq:butbu}. Applying the explicit form of $\bun+\bM^{(N+1)})^{-1}$ to the definitions 
\eqref{eq:UV} to compute the elliptic $(N+1)$-soliton formulae $U^{(N+1)}_{\aR,\bb}$ and $V^{(N+1)}_\aR$ in terms of the $N$-soliton ones, we obtain 
the expressions
\bse\label{eq:UVs}\bea
U^{(N+1)}_{\aR,\bb}&=& U^{(N)}_{\aR,\bb}+\frac{1}{s}\rho_{N+1}c_{N+1}\Psi_\xi^2(\kp_{N+1}) 
\left( U^{(N)}_{\aR,\kp_{N+1}}-\chi_{\aR,\kp_{N+1}}^{-1}\right) \left( U^{(N)}_{\kp_{N+1},\bb}-\chi_{\kp_{N+1},\bb}^{-1}\right)\  , \\  
V^{(N+1)}_\aR &=& V^{(N)}_\aR +\frac{1}{s}\rho_{N+1}c_{N+1}\Psi_\xi^2(\kp_{N+1}) 
\left( U^{(N)}_{\aR,\kp_{N+1}}-\chi_{\aR,\kp_{N+1}}^{-1}\right) V^{(N)}_{\kp_{N+1}}\  ,
\eea\ese 
and subsequently using the definitions \eqref{eq:WS} we find the following recurrence relations between the $N$ and $N+1$ 
elliptic soliton objects:
\bse\label{eq:UVsolrecurs}\begin{eqnarray}
S^{(N+1)}_{\aR,\bb} &=& S^{(N)}_{\aR,\bb} - \frac{1}{s} c_{N+1}\rho_{N+1}\,S^{(N)}_{\aR,\kp_{N+1}} S^{(N)}_{\kp_{N+1},\bb}\   ,  \label{eq:UVsola} \\
W^{(N+1)}_{\aR} &=& W^{(N)}_{\aR}-\frac{1}{s}c_{N+1}\rho_{N+1}\,S^{(N)}_{\aR,\kp_{N+1}} W^{(N)}_{\kp_{N+1}}\   , \label{eq:UVsolb}  
\end{eqnarray}\ese 
where we exclude values for $\aR$ and $\bb$ such that $\aR+\kp_{N+1}=0$ or $\bb+\kp_{N+1}=0$ (modulo the period lattice of the Weierstrass 
elliptic functions). Furthermore, the quantity $s$ can be directly computed using \eqref{eq:sinv} with the identifications of the entries given earlier
and using the definitions \eqref{eq:UV} and \eqref{eq:WS} again, leading to: 
\be\label{eq:s}
 s=1+\rho_{N+1}c_{N+1}S^{(N)}_{\kp_{N+1},\kp_{N+1}}=\frac{f^{(N+1)}}{f^{(N)}} \  . \ee
Here the latter equality follows from the second equality in \eqref{eq:sinv}, which expresses $s$ the ratio of  
the $\tau$-function ~$f^{(N+1)}=\det(\bun+\bM^{(N+1)})$~ of the elliptic $(N+1)$-soliton solution and the $\tau$-function 
~$f^{(N)}=\det(\bun+\bM^{(N)})$~ of the embedded elliptic $N$-soliton solution. 
Setting $\bb=\kp_{N+1}$ in \eqref{eq:UVsola} and using \eqref{eq:s} we subsequently obtain the following identification for the 
quantity $s$
\begin{equation}\label{eq:ss} 
s=\frac{S^{(N)}_{\aR,\kp_{N+1}}}{S^{(N+1)}_{\aR,\kp_{N+1}}}=\frac{W^{(N)}_{k_{N+1}}}{W^{(N+1)}_{k_{N+1}}}\   , 
\end{equation} 
which in particular implies that the ratio of $S_{\aR,\kp_{N+1}}$ between its $N$- and $(N+1)$-elliptic soliton value, is independent 
of the parameter $\aR$.  

\paragraph{}{\it Step \# 2.}  Next we apply the BT with the (so far arbitrary) B\"acklund parameter $\ssl=(l,L)\in\Gamma_{a,b}$ to the 
elliptic $N$-soliton solution defined in \eqref{eq:Nsol}, i.e. we want to solve the system of discrete Riccati equations 
for a new variable $v$ 
\be\label{eq:Q3BT}
\cQ^\Delta_{\ssp,\ssl}(u^{(N)},\wt{u}^{(N)},v,\wt{v})=0\quad, \quad 
 \cQ^\Delta_{\ssq,\ssl}(u^{(N)},\wh{u}^{(N)},v,\wh{v})=0\  ,  
\ee 
relying on the multidimensional consistency of Q3, and the fact that the shifts on the lattice can be interpreted as B\"acklund 
transformations. To solve the system \eqref{eq:Q3BT} we reduce the problem by identifying two particular solutions, which are 
obtained by the forward- and backward shifts, in the covariant extension, of the known elliptic $N$-soliton solutions $u^{(N)}$. 
This implies we need to extend all the plane-wave factors by including a lattice direction associated with the parameter $\ssl$, i.e. we set 
\bse\label{eq:covext}  
\be
 \rho_{0,0}(\kp)=\left( e^{-2\zeta(\ld)\kp} \frac{\ell_{-\kp}}{\ell_\kp}\right)^h\rho_{0,0,0}\quad\Rightarrow\quad 
\ol{\rho}(\kp)=e^{2\eta_\ld\kp}\frac{\ell_{-\kp}}{\ell_{\kp}}\,\rho(\kp)\quad,\quad \underline{\rho}(\kp)=e^{-2\eta_\ld\kp} 
\frac{\ell_{\kp}}{\ell_{-\kp}}\,\rho_i(\kp)\    , \ee
in all relevant definitions, where ~$\ell_\kp=\Psi_\ld(\kp)$ ($\ld$ being the uniformising variable associated with the point $\ssl$ on the 
curve $\Gamma_{a,b}$ and on the Weierstrass curve).  Similarly we have 
\be\label{eq:digextend}
\ol{\rho}_i=e^{2\eta_\ld\kp_i}\frac{\ell_{-\kp_i}}{\ell_{\kp_i}}\,\rho_i\quad,\quad \underline{\rho}_i=e^{-2\eta_\ld\kp_i}\frac{\ell_{\kp_i}}{\ell_{-\kp_i}}\,\rho_i\   .    
\ee \ese 
Here the elementary discrete shift in the discrete variable $h$ is indicated by a $\ol{\phantom{a}}$, in terms of which the elliptic $N$-soliton 
solutions depending on three variables, $u^{(N)}_{n,m,h}$, we have 
$$ 
\ol{u}^{(N)}_{n,m,h}=u^{(N)}_{n,m,h+1}\quad,\quad \underline{u}^{(N)}_{n,m,h}=u^{(N)}_{n,m,h-1}\   . 
$$ 
It is an immediate consequence of the construction behind Theorem \ref{th:theorem1}, that the following equations are satisfied:
\bse\label{eq:BT0sol}\begin{eqnarray}\label{eq:BT0sola}
\cQ^\Delta_{\ssp,\ssl}(u^{(N)},\wt{u}^{(N)},\ol{u}^{(N)},\wt{\ol{u}}^{(N)})=0\quad&,&\quad 
\cQ^\Delta_{\ssq,\ssl}(u^{(N)},\wh{u}^{(N)},\ol{u}^{(N)},\wh{\ol{u}}^{(N)})=0\   , \\  
\cQ^\Delta_{\ssp,\ssl}(u^{(N)},\wt{u}^{(N)},\underline{u}^{(N)},\wt{\underline{u}}^{(N)})=0  
\quad&,&\quad  
\cQ^\Delta_{\ssq,\ssl}(u^{(N)},\wh{u}^{(N)},\underline{u}^{(N)},\wh{\underline{u}}^{(N)})=0\   , \label{eq:BT0solb} 
\end{eqnarray}\ese    
where \eqref{eq:BT0solb} holds because of the symmetry of the quadrilaterals. 
Having established the two solutions $\ol{u}^{(N)}$ and $\underline{u}^{(N)}$ of \eqref{eq:Q3BT} we can now find the 
general solution of that system in the interpolating form:  
\be\label{eq:interpol}
 v=\frac{\ol{u}^{(N)}+\veta\underline{u}^{(N)}}{1+\veta}\   , \ee  
in which $\veta$ is some function to be determined from the following coupled system of homogeneous linear equations:  
\bse\label{eq:etalineqs}\bea
\frac{\wt{\veta}}{\veta} &=& -\frac{\cQ^\Delta_{\ssp,\ssl}(u^{(N)},\wt{u}^{(N)},\underline{u}^{(N)},\wt{\ol{u}}^{(N)})}
{\cQ^\Delta_{\ssp,\ssl}(u^{(N)},\wt{u}^{(N)},\ol{u}^{(N)},\wt{\underline{u}}^{(N)})}\   , \\ 
\frac{\wh{\veta}}{\veta} &=& -\frac{\cQ^\Delta_{\ssq,\ssl}(u^{(N)},\wh{u}^{(N)},\underline{u}^{(N)},\wh{\ol{u}}^{(N)})}
{\cQ^\Delta_{\ssq,\ssl}(u^{(N)},\wh{u}^{(N)},\ol{u}^{(N)},\wh{\underline{u}}^{(N)})}\   . 
\eea\ese 
The compatibility of this system is equivalent to the compatibility of the BT \eqref{eq:Q3BT} as a coupled system of discrete Riccati 
equations, which in turn is a consequence of the multidimensional consistency of the lattice equation.  

Using the explicit expressions for the quadrilaterals \eqref{eq:Q3quad}, eqs. \eqref{eq:etalineqs} reduce to
\bse\label{eq:etalineqs2}\bea
\frac{\wt{\veta}}{\veta} &=& \frac{P\wt{u}^{(N)}-(p^2-l^2)u^{(N)}-L\underline{u}^{(N)}}{P\wt{u}^{(N)}-(p^2-l^2)u^{(N)}-L\ol{u}^{(N)}}
=\frac{\underline{w}-\wt{w}}{\ol{w}-\wt{w}}\   , \\ 
\frac{\wh{\veta}}{\veta} &=& \frac{Q\wh{u}^{(N)}-(q^2-l^2)u^{(N)}-L\underline{u}^{(N)}}{Q\wh{u}^{(N)}-(q^2-l^2)u^{(N)}-L\ol{u}^{(N)}}
=\frac{\underline{w}-\wh{w}}{\ol{w}-\wh{w}}\   , 
\eea\ese 
where in the last step we have made use of eqs. \eqref{eq:wuUMiura}, replacing $\ven$, respectively $\dd$, by $\ld$. 
Finally, using eqs. \eqref{eq:wWMiura} with the same replacements, leading to 
\be\label{eq:wwWW} 
 \ol{w}-\wt{w}=\frac{p_\aR e^{-\ol{\eta}_\dd\aR}\ol{W}_\aR-\ell_\aR e^{-\wt{\eta}_\ld\aR}\wt{W}_\aR}{\wt{\ol{W}}_\aR}\quad, \quad 
\underline{w}-\wt{w}=\frac{p_{-\aR}e^{\eta_\dd\aR}\wt{W}_\aR+\ell_\aR e^{-\underline{\eta}_\ld\aR} \underline{W}_\aR}{W_\aR}\  , \ee  
which by setting subsequently ~$\aR=-\ld~\Rightarrow~\ell_\aR=0$~ yields the expressions 
$$  \frac{\wt{\veta}}{\veta}= e^{-(\eta_\dd+\ol{\eta}_\dd)\ld} \left(\frac{p_\ld}{p_{-\ld}}\right)\,
\frac{\wt{W}_{-\ld}\wt{\ol{W}}_{-\ld}}{W_{-\ld}\ol{W}_{-\ld}}\quad, \quad 
\frac{\wh{\veta}}{\veta}= e^{-(\eta_\ven+\ol{\eta}_\ven)\ld}\left(\frac{q_\ld}{q_{-\ld}}\right)\,
\frac{\wh{W}_{-\ld}\wh{\ol{W}}_{-\ld}}{W_{-\ld}\ol{W}_{-\ld}}\   . $$ 
These relations can be simultaneously integrated, yielding the following expression for the function $\veta$:
\be\label{eq:eta} 
\veta=\veta_{n,m}=\veta_0\,e^{-[\zeta(\xi)+\zeta(\ol{\xi})]\ld} \left( e^{2\zeta(\dd)\ld} \frac{p_\ld}{p_{-\ld}}\right)^n 
\left( e^{2\zeta(\ven)\ld} \frac{q_\ld}{q_{-\ld}}\right)^m\,W_{-\ld}\ol{W}_{-\ld}\  . 
\ee 
Using the relation 
$$ \ol{W}_\aR\,W_\bb=\ell_\aR e^{-\eta_\ld\aR} S_{\aR,\bb}-\ell_{-\bb} e^{\eta_\ld\bb} \ol{S}_{\aR,\bb}
\quad\Rightarrow\quad \ol{W}_\ld W_\ld=\ell_\ld e^{-\eta_\ld\ld} S_{\ld,\ld}\   , $$ 
obtained by replacing $\dd$ by $\ld$ and $\wt{\phantom{a}}$ by $\ol{\phantom{a}}$ in \eqref{eq:WW}, and using also \eqref{eq:WWinv} with 
the same replacements, we can obtain from \eqref{eq:eta} the following result:  
\be\label{eq:feta}
\frac{1}{\ol{\veta}}=\frac{\sg^2(\ld)\,\sg(2\ld)}{\ol{\veta}_0}\frac{\rho(\ld)}{\rho_{0,0}(\ld)}\,S_{\ld,\ld}\  .  
\ee 

\paragraph{Step \# 3.} We now compare the expressions for the solution obtained from 
the recurrence of step \# 1 and the one from the BT of step \# 2. On the one hand, the solution from the BT \eqref{eq:interpol} 
can be recast into the following form 
\be\label{eq:vsol}
v=\underline{u}^{(N)}+\frac{1}{1+\veta}(\ol{u}^{(N)}-\underline{u}^{(N)})\quad\Rightarrow\quad \ol{v}=
u^{(N)}-\frac{1}{1+\ol{\veta}}\,\frac{1}{L}(w-\ol{\ol{w}})\ol{U}^{(N)}\   ,  
\ee 
where we have used the relation ~$w-\ol{\ol{w}}=L(u^{(N)}-\ol{\ol{u}}^{(N)})/\ol{U}^{(N)}$~, which follows from \eqref{eq:wuUMiurab}, 
setting $\dd=\ven=\ld$. Setting now in \eqref{eq:feta} the B\"acklund parameter $\ld=\kp_{N+1}$, i.e. $\ssl=\ssk_{N+1}=(k_{N+1},K_{N+1})$, which 
implies that the $\ol{\phantom{a}}$-shift from now on is the lattice shift associated with the lattice parameter $\ssk_{N+1}$, and we choose the integration 
constant $\veta_0$, such that we identify \eqref{eq:feta} as follows:  
\be\label{eq:etaident}
\frac{1}{\ol{\veta}}=c_{N+1}\rho_{N+1}S^{(N)}_{\kp_{N+1},\kp_{N+1}}\quad\Rightarrow\quad s=1+\frac{1}{\ol{\veta}}\   , 
\ee 
comparing the result with \eqref{eq:s}. This identification requires the choice
\be\label{eq:t0id}
\ol{\veta}_0=\sg(2\kp_{N+1})\sg^2(\kp_{N+1})/c_{N+1}\rho_{0.0}(\kp_{N+1})\  . \ee   

On the other hand, 
inserting the recursion relation \eqref{eq:UVsola} into the expression for the elliptic $(N+1)$-soliton solution we find   
\begin{eqnarray*}
u^{(N+1)} &=& A\rho^{1/2}(\aR)\,\rho^{1/2}(\bb) S^{(N+1)}_{\aR,\bb} +\cdots  \\ 
&=& A\rho^{1/2}(\aR)\,\rho^{1/2}(\bb)\left( S^{(N)}_{\aR,\bb} -\frac{1}{s}c_{N+1}\rho_{N+1} S^{(N)}_{\aR,\kp_{N+1}}\,S^{(N)}_{\kp_{N+1},\bb}\right) + \cdots \\
&=& u^{(N)} - \frac{1}{s}c_{N+1}\rho_{N+1}\left[ A \rho^{1/2}(\aR)\,\rho^{1/2}(\bb) S^{(N)}_{\aR,\kp_{N+1}}\,S^{(N)}_{\kp_{N+1},\bb} +\cdots \right]\  ,  \\ 
&=& u^{(N)} - \frac{1}{s}c_{N+1}\rho_{N+1}\left[ A \rho^{1/2}(\aR)\,\rho^{1/2}(\bb) e^{\eta_{\kp_{N+1}}(\aR+\bb)}  
\frac{\ol{W}^{(N)}_{\aR}\ol{W}^{(N)}_{\bb}(W^{(N)}_{\kp_{N+1}})^2}{\Psi_{\kp_{N+1}}(\aR)\,\Psi_{\kp_{N+1}}(\bb)} +\cdots \right]\  ,
\end{eqnarray*}  
where the dots stand for the remaining terms  with coefficients $B$, $C$, 
$D$ and with $(\aR,\bb)$ replaced by $(\aR,-\bb)$, $(-\aR,\bb)$, $(-\aR,-\bb)$ respectively, and where in the last step we have made use of the relations 
$$  \ol{W}_\aR W_\ld=\ell_\aR e^{-\eta_\ld\aR} S_{\aR,\ld} \quad,\quad \ol{W}_\bb W_\ld=\ell_\bb e^{-\eta_\ld\bb} S_{\bb,\ld}\  ,  $$
which follows from the earlier relation setting $\bb=\ld$, and the same with $\aR$ and $\bb$ interchanged, and subsequently taking $\ld=\kp_{N+1}$. 
Taking into account the covariant extension of the plane-wave factors $\rho(\aR)$, $\rho(\bb)$, applying a relation of the form of \eqref{eq:rhoshifts}, we can 
conclude from this computation that 
\be \label{eq:usol}
u^{(N+1)}=u^{(N)} -\frac{c_{N+1}\rho_{N+1}}{sK_{N+1}}(W^{(N)}_{\kp_{N+1}})^2 \ol{U}^{(N)} \   , 
\ee 
using the expression \eqref{eq:assNsol} and identifying the factor $L$ with $K_{N+1}$ through the relation  
$L=(\ell_{-\aR}\ell_\aR\ell_{-\bb}\ell_\bb)^{1/2}$ for $\ld=\kp_{N+1}$. 

Finally, noting that 
\be
 \underline{w}-\ol{w}=\frac{\ell_\aR e^{-\underline{\eta}_\ld\aR} \underline{W}_\aR+\ell_{-\aR} e^{\eta_\dd\aR} \ol{W}_\aR}{W_\aR}=
\ell_\ld e^{-\underline{\eta}_\ld\ld} \frac{\underline{W}_\ld}{W_\ld}\quad\Rightarrow\quad w-\ol{\ol{w}}=\frac{W_\ld^2}{S_{\ld,\ld}}\  ,  
\ee
we can compare the expressions for $\ol{v}$ in \eqref{eq:vsol} and $u^{(N+1})$ in \eqref{eq:usol} and conclude that they are the same provided 
that 
$$ \frac{1}{1+\ol{t}}=\frac{1}{s}c_{N+1}\rho_{N+1} S_{\kp_{N+1},\kp_{N+1}}\   , $$ 
and which holds as a consequence of \eqref{eq:etaident}. 
Thus, we have established that with the above choice \eqref{eq:t0id} of the integration constant $\veta_0$ in \eqref{eq:eta}  
the B\"acklund transformed solution coincides with the corresponding elliptic $(N+1)$-soliton solution given by the Cauchy-matrix objects. 
This then  completes the proof of Theorem \ref{th:theorem2}. 
\DONE 

Theorem \ref{th:theorem2} establishes the precise connection between the structure of the $N$-soliton solution as given by the 
Cauchy matrix approach, and the way to generate a soliton hierarchy through B\"acklund transforms. What we conclude 
is that these two approaches coincide up to a subtle identification of the relevant constants in the solution. 
Since, as was remarked in the Corollary of section 3 that the soliton solutions of Q3 really live in an extended 
four-dimensional lattice, the precise identification of those constants is of interest, since they contain possibly 
the additional lattice directions. In fact, in \cite{AHN,AN} we established the first soliton type solutions for Q4 
through the B\"acklund approach, and it is of interest to see how that approach connects to a (yet unknown) 
representation of multi-soliton solutions in terms of a scheme similar to the one set up in this paper for Q3.   

\section{Elliptic soliton solutions for degenerate subcases of Q3} 

In the same way as in the rational soliton case of \cite{NAH}, one can obtain from the elliptic $N$-soliton solution for Q3 
the corresponding elliptic $N$-soliton solutions for all "lower" ABS equations through limits on the parameters $\aR$, $\bb$ 
(adjusting, where necessary, the coefficients $A$,$B$,$C$,$D$ 
in an appropriate manner). This follows the coalescence scheme of Figure \ref{Hdegen}.
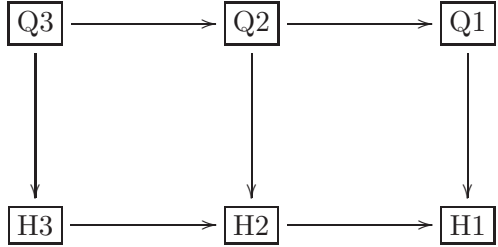
\begin{figure}[h]
\xymatrix{
&&\framebox{Q3} \ar[dd] \ar[rr] && \framebox{Q2} \ar[dd] \ar[rr] && \framebox{Q1} \ar[dd] \\ \\
&&\framebox{H3} \ar[rr] && \framebox{H2} \ar[rr] && \framebox{H1}
}
\caption{Coalescence scheme employed to construct elliptic $N$-soliton-type solutions for the degenerate sub cases of equation Q3.}
\label{Hdegen}
\end{figure}
The upper horizontal sequence in this scheme, involving the degenerations of the Q-equations, are obtained 
from performing careful limits of the type $\bb\rightarrow\aR$, whilst the vertical limit from Q- to H-equations is 
obtained from the limits $\aR~{\rm or}~\bb\rightarrow 0$. We observe that this degeneration scheme roughly follows the pattern of the limits of 
the NQC equation that were studied in \cite{NQC} leading to the various degenerations of KdV type lattice equations. 

\subsection{Q-degenerations} 

We first make the following observation which can be independently validated: 
\begin{lemma}\label{th:lemma4}
The quantity 
\be\label{eq:uaa}
z=\digamma_{\aR}+\left(\wp(\aR)-\wp(\xi)\right) U_{\aR,-\aR}\  , 
\ee
where $U_{\aR,-\aR}$ is defined as before, and where we have introduced
\be\label{eq:digamma}
\digamma_\aR=\digamma_{n,m}(\xi;\aR)\Equiv \zeta(\xi)+ \frac{n}{2}\left(\zeta(\aR-\dd)-\zeta(\aR+\dd)\right)+ 
\frac{m}{2}\left(\zeta(\aR-\ven)-\zeta(\aR+\ven)\right)\  , 
\ee
obeys the Q1 equation in the following form:
\be\label{eq:zeq}
(p^2-a^2)(z-\wt{z})(\wh{z}-\wh{\wt{z}})-(q^2-a^2)(z-\wh{z})(\wt{z}-\wh{\wt{z}})=
\frac{1}{4}(\wp'(\aR))^2\left(\frac{1}{p^2-a^2}-\frac{1}{q^2-a^2} \right)\   . 
\ee
\end{lemma}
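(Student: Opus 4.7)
The plan is to deduce from the elliptic Wronskian identity \eqref{eq:pWW} (and its $\aR\mapsto-\aR$ counterpart) a pair of Miura-type relations expressing $(p^2-a^2)(z-\wt z)\mp\tfrac12\wp'(\aR)$ as single products $p_{\mp\aR}e^{\pm\eta_\dd\aR}\wt W_{\pm\aR}W_{\mp\aR}$. Substituting the definition $z=\digamma_\aR-(\wp(\xi)-\wp(\aR))U_{\aR,-\aR}$, the $\wp$-dependent part of the bracket in \eqref{eq:pWW} becomes $(z-\wt z)+(\wt\digamma_\aR-\digamma_\aR)$. All terms involving $\zeta(\xi)$ and $\zeta(\wt\xi)$ then cancel against those produced by the $n$-shift of $\digamma_\aR$, leaving $\tfrac12[\zeta(\aR-\dd)+\zeta(\aR+\dd)-2\zeta(\aR)]=-\tfrac12\wp'(\aR)/(p^2-a^2)$, where the second equality uses \eqref{eq:zetawp} applied with $\pm\dd$. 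Multiplying by $p_{-\aR}e^{\eta_\dd\aR}$ then gives
\[
(p^2-a^2)(z-\wt z)-\tfrac12\wp'(\aR)=p_{-\aR}e^{\eta_\dd\aR}\wt W_\aR W_{-\aR}\, ,
\]
and the $\aR\mapsto-\aR$ version produces the companion identity with $+\tfrac12\wp'(\aR)$ and $p_\aR e^{-\eta_\dd\aR}\wt W_{-\aR}W_\aR$ on the right. Parallel relations for the $\wh{\cdot}$-shift follow from \eqref{eq:qWW}, and their $\wt{\cdot}$- and $\wh{\cdot}$-shifted analogues cover the differences $\wh z-\wh{\wt z}$ and $\wt z-\wh{\wt z}$.

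Introducing $\Lambda\Equiv W_\aR/W_{-\aR}$, taking the ratio of the two displayed identities gives
\[
\frac{\wt\Lambda}{\Lambda}=\frac{p_\aR}{p_{-\aR}}\,e^{-2\eta_\dd\aR}\,\frac{s(z-\wt z)-\alpha}{s(z-\wt z)+\alpha}\, ,\qquad s\Equiv p^2-a^2,\ \ \alpha\Equiv\tfrac12\wp'(\aR),
\]
and analogous formulae for $\wh\Lambda/\Lambda$, $\wh{\wt\Lambda}/\wt\Lambda$, $\wh{\wt\Lambda}/\wh\Lambda$. The trivial compatibility $(\wh{\wt\Lambda}/\wt\Lambda)(\wt\Lambda/\Lambda)=(\wh{\wt\Lambda}/\wh\Lambda)(\wh\Lambda/\Lambda)$ then yields a purely algebraic cross-ratio identity: the $p_{\pm\aR}/p_{\mp\aR}$ and $q_{\pm\aR}/q_{\mp\aR}$ ratios cancel pairwise, while the exponential factors collapse via the elementary $\zeta$-cocycle $\eta_\dd+\wt\eta_\ven=\eta_\ven+\wh\eta_\dd$. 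What remains is
\[
\frac{[s(z-\wt z)-\alpha]\,[t(\wt z-\wh{\wt z})-\alpha]}{[s(z-\wt z)+\alpha]\,[t(\wt z-\wh{\wt z})+\alpha]}=\frac{[t(z-\wh z)-\alpha]\,[s(\wh z-\wh{\wt z})-\alpha]}{[t(z-\wh z)+\alpha]\,[s(\wh z-\wh{\wt z})+\alpha]}\, ,
\]
with $t\Equiv q^2-a^2$.

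Finally, cross-multiplying and using the elementary arithmetic identity $(\wt z-\wh{\wt z})-(z-\wh z)=-[(z-\wt z)-(\wh z-\wh{\wt z})]$, the terms of total degree $0$ and $4$ in the shifts of $z$ cancel automatically, and the difference of the two sides factorises cleanly as
\[
\bigl[(z-\wt z)-(\wh z-\wh{\wt z})\bigr]\cdot\Bigl\{s(z-\wt z)(\wh z-\wh{\wt z})-t(z-\wh z)(\wt z-\wh{\wt z})-\alpha^2\bigl(\tfrac{1}{s}-\tfrac{1}{t}\bigr)\Bigr\}=0\, .
\]
The second factor, upon substituting $\alpha^2=\tfrac14\wp'(\aR)^2$, is precisely the Q1 equation \eqref{eq:zeq}. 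Since the prefactor is generically non-zero on the solution manifold (as seen already in the trivial reduction $c_i=0$, where $z=\digamma_\aR$), and since \eqref{eq:zeq} is polynomial in $z$, the identity extends to all values.

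The main obstacle is the step-one bookkeeping: the ``constant'' bracket in \eqref{eq:pWW} superficially involves $\zeta(\xi)-\zeta(\wt\xi)$, and there is no a priori reason to expect its combination with $\wt\digamma_\aR-\digamma_\aR$ to be $\xi$-independent. This exact cancellation is what fixes the form of $\digamma_\aR$ in \eqref{eq:digamma} and promotes $z$ from a non-autonomous quantity to a solution of an autonomous lattice equation. Once the clean Miura pair is in hand, the cross-ratio compatibility and the final factorisation are essentially mechanical, although they do rely on the $\zeta$-cocycle identity, which is itself a manifestation of the integrability of the elliptic background.
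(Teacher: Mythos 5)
Your proof is correct, and its first half coincides with the paper's: both start from \eqref{eq:pWW}, \eqref{eq:qWW}, observe that the $\zeta(\xi)$, $\zeta(\wt{\xi})$ terms cancel against the lattice-linear part of $\digamma_\aR$, and use $\tfrac12[\zeta(\aR+\dd)+\zeta(\aR-\dd)-2\zeta(\aR)]=-\tfrac12\wp'(\aR)/(p^2-a^2)$ to arrive at the same pair of Miura-type identities $(p^2-a^2)(z-\wt{z})\mp\tfrac12\wp'(\aR)=p_{\mp\aR}e^{\pm\eta_\dd\aR}\wt{W}_{\pm\aR}W_{\mp\aR}$. The endgame differs. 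The paper simply multiplies the relation for $\wt{z}-z$ by the $\wh{\phantom{a}}$-shift of its $\aR\mapsto-\aR$ partner, so that the right-hand side becomes $(p^2-a^2)\,p_\aR^{-1}p_{-\aR}^{-1}e^{(\eta_\dd-\wh{\eta}_\dd)\aR}\,\wt{W}_\aR W_{-\aR}\wh{\wt{W}}_{-\aR}\wh{W}_\aR$, which is manifestly symmetric under $(\dd,\wt{\phantom{a}})\leftrightarrow(\ven,\wh{\phantom{a}})$ by $p_\aR p_{-\aR}=p^2-a^2$ and the same cocycle identity $\eta_\dd-\wh{\eta}_\dd=\eta_\ven-\wt{\eta}_\ven$ that you invoke; equating the two products, the terms linear in the differences of $z$ cancel and Q1 drops out directly. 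Your route through $\Lambda=W_\aR/W_{-\aR}$ and the closure of $\Lambda$ around the plaquette uses exactly the same four relations and the same cocycle, but the multiplicative compatibility produces Q1 only up to the spurious prefactor $z-\wt{z}-\wh{z}+\wh{\wt{z}}$, which you then have to remove by the genericity/analyticity argument at the end (your check that the prefactor is nonvanishing in the reduction $c_i=0$ is the right one, and the argument is sound for these meromorphic families). So both proofs rest on the same two ingredients; the paper's product trick buys a factorisation-free derivation, while your cross-ratio formulation makes the underlying Q1${}_{\delta=0}$/three-leg structure of the identity more visible at the cost of one extra non-degeneracy step.
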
 
\begin{proof}
To prove this lemma, let us consider the quantity $z$ as defined in \eqref{eq:uaa}, with $\digamma_\aR$ given in \eqref{eq:digamma}. 
From the definition we have, using \eqref{eq:WWs},  
\begin{eqnarray*}
\wt{z}-z &=& \zeta(\xi+\dd)-\zeta(\xi)+\frac{1}{2}\left( \zeta(\aR-\dd)-\zeta(\aR+\dd)\right) 
+\left( \wp(\aR)-\wp(\wt{\xi})\right) \wt{U}_{\aR,-\aR}-\left( \wp(\aR)-\wp(\xi)\right) U_{\aR,-\aR}  \\ 
&=& - \frac{1}{p_{\aR}}e^{\eta_\dd\aR}\,\wt{W}_{\aR}W_{-\aR}+\frac{1}{2}\left(\zeta(\aR+\dd)+\zeta(\aR-\dd)-2\zeta(\aR)\right) \\ 
&=& - \frac{1}{p_{\aR}}e^{\eta_\dd\aR}\,\wt{W}_{\aR}W_{-\aR}-\frac{1}{2} \frac{\wp'(\aR)}{\wp(\dd)-\wp(\aR)}\   , 
\end{eqnarray*}
where the latter form arises by using \eqref{eq:zetawp}. Since $z$ remains unchanged under flipping the sign of $\aR$, because 
$\digamma_{\aR}=\digamma_{-\aR}$ and $U_{\aR,\bb}$ is symmetric with respect to the interchange of $\aR$ and $\bb$, we also 
have
$$ \wt{z}-z= - \frac{1}{p_{-\aR}}e^{-\eta_\dd\aR}\,\wt{W}_{-\aR}W_{\aR}+\frac{1}{2} \frac{\wp'(\aR)}{\wp(\dd)-\wp(\aR)}\  .  $$
Hence, we have 
$$ (p^2-a^2)\left(\wt{z}-z+\frac{1}{2}\frac{\wp'(\aR)}{\wp(\dd)-\wp(\aR)}\right)\left(\wh{\wt{z}}-\wh{z}-\frac{1}{2}\frac{\wp'(\aR)}{\wp(\dd)-\wp(\aR)}\right)
=(p^2-a^2)\frac{e^{\eta_\dd\aR}e^{-\wh{\eta}_\dd\aR}}{p_{\aR}p_{-\aR}}\,\wt{W}_{\aR} W_{-\aR} \wh{\wt{W}}_{-\aR}\wh{W}_{\aR}\   , $$ 
where the expression on the right-hand side is invariant under interchanges ~$\dd\leftrightarrow\ven$~,~$\wh{\phantom{a}}\leftrightarrow\wt{\phantom{a}}$~, 
taking into account that ~$p_{\aR}p_{-\aR}=\wp(\dd)-\wp(\aR)=p^2-a^2$~. and that ~$\eta_\dd-\wh{\eta}_\dd=\eta_\ven-\wt{\eta}_\ven$~. Thus, we have 
that 
\begin{eqnarray*}
 && (p^2-a^2)\left(\wt{z}-z+\frac{1}{2}\frac{\wp'(\aR)}{\wp(\dd)-\wp(\aR)}\right)\left(\wh{\wt{z}}-\wh{z}-\frac{1}{2}\frac{\wp'(\aR)}{\wp(\dd)-\wp(\aR)}\right) = \\ 
 && \quad = (q^2-a^2)\left(\wh{z}-z+\frac{1}{2}\frac{\wp'(\aR)}{\wp(\ven)-\wp(\aR)}\right)\left(\wh{\wt{z}}-\wt{z}-\frac{1}{2}\frac{\wp'(\aR)}{\wp(\ven)-\wp(\aR)}\right)
\end{eqnarray*}
which after some cancellations leads to 
$$ (p^2-a^2)(\wt{z}-z)(\wh{\wt{z}}-\wh{z})-(q^2-a^2)(\wh{z}-z)(\wh{\wt{z}}-\wt{z})=\frac{1}{4}(\wp'(\aR))^2\left( \frac{1}{p^2-a^2}-\frac{1}{q^2-a^2}\right)\  , $$ 
which is eq. \eqref{eq:zeq}. 
\end{proof}

The identification with the equation Q1 \eqref{eq:Qeqsa} can be made by identifying the lattice parameters  
\be\label{eq:Q2idens}
\po=\frac{1}{2} \frac{\wp'(\aR)}{p^2-a^2}\quad,\quad \qo=\frac{1}{2} \frac{\wp'(\aR)}{q^2-a^2}\  ,    
\ee 
making it evident that $z$ defined in \eqref{eq:uaa} provides a solution of Q1 in that form with the fixed parameter $\dd^2=1$. 
The elliptic soliton solutions for the more general parameter case of Q1 are obtained as an immediate corollary of Theorem \ref{th:theorem1}, 
by considering the limit $\bb\rightarrow \aR$. This leads to the following result: 
\begin{theorem}\label{th:cor1} The following formula 
\begin{equation}\label{eq:Q1Nsol} 
u^{(N)} = A\rho(\aR)S_{\aR,\aR}+D\rho(-\aR)S_{-\aR,-\aR} + 2Bz+ C \nn 
\end{equation} 
in which $S_{\pm\aR,\pm\aR}$, $\rho(\pm\aR)$ are defined as before, where $z$ is given as in Lemma \ref{th:lemma4},  
and where the coefficients are subject to the relation 
$$ AD-B^2=\frac{\Delta}{\wp'(\aR)^2} \  , $$ 
constitutes the elliptic $N$-soliton solution of the $({\rm Q}_1)_\Delta$ equation, given in the form
\be\label{eq:Q1Delta}
(p^2-a^2)(u-\wt{u})(\wh{u}-\wh{\wt{u}})-(q^2-a^2)(u-\wh{u})(\wt{u}-\wh{\wt{u}})+\Delta\left(\frac{1}{p^2-a^2}-\frac{1}{q^2-a^2}\right)=0\  . 
\ee 
Hence, \eqref{eq:Q1Nsol} leads to a solution of the equation Q1 in the form \eqref{eq:Qeqsa} using the identifications of lattice parameters 
given by \eqref{eq:Q2idens}, for generic value of the fixed parameter. 
\end{theorem}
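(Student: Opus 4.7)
The plan is to derive Theorem \ref{th:cor1} as the $\bb\to\aR$ coalescence limit of the Q3 $N$-soliton formula of Theorem \ref{th:theorem1}. First I would verify at the equation level that this limit is consistent: as $\bb\to\aR$, the branch point $b^2=\wp(\bb)-e$ collapses onto $a^2=\wp(\aR)-e$, so $P\to p^2-a^2$ and $Q\to q^2-a^2$. Expanding \eqref{eq:Q3new} and using the identity $(p^2-q^2)/[(p^2-a^2)(q^2-a^2)] = 1/(q^2-a^2) - 1/(p^2-a^2)$, one checks that Q3 in this limit is algebraically identical to $(\mathrm{Q1})_\Delta$ in the form \eqref{eq:Q1Delta} with the same fixed parameter $\Delta$. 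It remains to show that the solution formula \eqref{eq:Nsol} degenerates to \eqref{eq:Q1Nsol}.

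Setting $\epsilon := \aR-\bb$, the $A'$- and $D'$-terms of \eqref{eq:Nsol} pass smoothly to $A\rho(\aR)S_{\aR,\aR}$ and $D\rho(-\aR)S_{-\aR,-\aR}$ upon identifying $A=A'$ and $D=D'$. The delicate work lies in the middle two terms, whose leading behavior is singular. From $\sg(\xi+\kp)/\sg(\xi) = \exp(\kp\zeta(\xi) - \kp^2\wp(\xi)/2 + \cdots)$ one obtains $\Psi_\xi(\pm\kp) = \pm 1/\kp \mp \wp(\xi)\kp/2 + O(\kp^2)$, while Taylor-expanding $\zeta$ gives $\chi_{\aR,-\bb} = \epsilon(\wp(\xi)-\wp(\aR)) + O(\epsilon^2)$ and $\chi_{-\aR,\bb} = -\chi_{\aR,-\bb} + O(\epsilon^2)$. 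Combined with the symmetry $U_{\pm\aR,\mp\bb}\to U_{\aR,-\aR}$, these give
$$S_{\aR,-\bb} = \frac{1}{\epsilon} - (\wp(\xi)-\wp(\aR))\,U_{\aR,-\aR} + O(\epsilon), \qquad S_{-\aR,\bb} = -\frac{1}{\epsilon} - (\wp(\xi)-\wp(\aR))\,U_{\aR,-\aR} + O(\epsilon).$$
Normalizing the initial data so that $\rho_{0,0}(\aR)\rho_{0,0}(-\aR)=1$ (with vanishing logarithmic derivative at $\pm\aR$), a direct computation of $\partial_\kp\ln\rho(\kp)|_{\kp=\pm\aR} = 2\digamma_\aR$ yields
$$\rho^{1/2}(\aR)\rho^{1/2}(-\bb) = 1+\epsilon\,\digamma_\aR + O(\epsilon^2), \qquad \rho^{1/2}(-\aR)\rho^{1/2}(\bb) = 1-\epsilon\,\digamma_\aR + O(\epsilon^2).$$

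Next I would parametrize the Q3 coefficients as $B' = B + \epsilon C/2$ and $C' = B - \epsilon C/2$. Then the singular $1/\epsilon$ contributions of the two middle terms combine to give $(B'-C')/\epsilon = C$, while the $O(1)$ remainder assembles as $(B'+C')[\digamma_\aR + (\wp(\aR)-\wp(\xi))U_{\aR,-\aR}] = 2Bz$, with $z$ exactly as in Lemma \ref{th:lemma4}. Passing to the limit in the $\Delta$-relation of Theorem \ref{th:theorem1}, $\Delta_{\mathrm{Q3}} = \wp'(\aR)\wp'(\bb)(A'D'-B'C') = \wp'(\aR)^2(AD-B^2) + O(\epsilon)$, which produces the stated constraint $AD-B^2 = \Delta/\wp'(\aR)^2$. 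Combined with the identification \eqref{eq:Q2idens} of the lattice parameters $\po$, $\qo$, this gives the elliptic $N$-soliton solution of Q1 in the original form \eqref{eq:Qeqsa}.

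The hard part is the middle step: proving that the $O(1)$ residue of the two middle terms assembles exactly into $2Bz + C$ and nothing extra. This requires tracking the Taylor expansions of $\Psi_\xi(\pm\epsilon)$, $\chi_{\pm\aR,\mp\bb}$ and the plane-wave factors simultaneously to order $\epsilon$ and observing the fortunate conspiracy that the $\wp(\xi)$ contribution from $\Psi_\xi(\pm\epsilon)$ appears only at order $\epsilon$ (so does not survive in the limit), while the $U_{\aR,-\aR}$ coefficient inherited from $\chi$ combines cleanly with $\digamma_\aR$ from the plane-wave derivative to reproduce exactly the combination $\digamma_\aR + (\wp(\aR)-\wp(\xi))U_{\aR,-\aR}$ defining $z$. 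A minor but necessary point is the normalization of the initial value $\rho_{0,0}(\kp)$ so that its logarithmic derivative vanishes at $\pm\aR$; this is a mild freedom in the construction and does not affect the statement of the theorem.
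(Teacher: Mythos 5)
Your proposal is correct and follows essentially the same route as the paper: a coalescence limit $\bb\to\aR$ in Theorem \ref{th:theorem1}, with the $1/(\aR-\bb)$ poles of the two cross terms cancelled by an $\epsilon$-dependent choice of the coefficients $B$, $C$ whose leading parts coincide, the finite remainder assembling into $2Bz+C$ and the determinant relation degenerating to $AD-B^2=\Delta/\wp'(\aR)^2$. Your symmetric parametrisation $B'=B+\epsilon C/2$, $C'=B-\epsilon C/2$ is just a repackaging of the paper's condition $C_0=B_0$, and your explicit remarks on the equation-level degeneration and the normalisation of $\rho_{0,0}$ near $\pm\aR$ make precise points the paper leaves implicit.
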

\begin{proof}
Taking the limit $\bb=\aR+\gm$, and letting $\gm\rightarrow 0$ in \eqref{eq:Nsol} the terms with coefficients $A$ and $D$ are regular under 
this limit, whilst the terms with coefficients $B$ and $C$ possess a singularity due to the prefactor $\Psi_\xi(\pm\aR\mp\bb)$ in the quantities 
$S_{\pm\aR,\mp\bb}$. Thus we have the limits 
\begin{eqnarray*}
 \rho^{1/2}(\pm\bb) &\rightarrow & \rho^{1/2}(\pm\aR)\left\{ \phantom{\frac{a}{b}} 1 \pm \gm\digamma_\aR + \right. \\ 
&& \quad \left. +\frac{1}{2}\gm^2 \left[ \digamma_\aR^2 
\pm \frac{n}{2}\left(\wp(\aR+\dd)-\wp(\aR-\dd)\right)\pm \frac{m}{2}\left(\wp(\aR+\ven)-\wp(\aR-\ven)\right)  \right]
+\mathcal{O}(\gm^3)\right\}  \\ 
 \Psi_\xi(\pm\aR\mp\bb) &\rightarrow& \mp\frac{1}{\gm} \left[ 1-\frac{1}{2}\gm^2\wp(\xi)+\mathcal{O}(\gm^3) \right] \\ 
 \Psi_\xi(\pm\aR\pm\bb) &\rightarrow& \Psi_\xi(\pm 2\aR)\left[ 1\pm\gm\eta_{\pm 2\aR}+\frac{1}{2}\gm^2\left(\wp(2\aR)-\wp(\xi\pm 2\aR)+\eta_{\pm 2\aR}^2 \right) + \mathcal{O}(\gm^3) \right] \\
\chi_{\pm\aR,\mp\bb} &\rightarrow& \pm\gm\left(\wp(\aR)-\wp(\xi) \right) + \frac{1}{2}\gm^2\left(\wp'(\pm\aR)+\wp'(\xi) \right)
+\mathcal{O}(\gm^3)\  , \\ 
\chi_{\pm\aR,\pm\bb} &\rightarrow& \chi_{\pm\aR,\pm\aR}+ 
\gm\left(\wp(\xi\pm 2\aR)-\wp(\aR) \right) + \frac{1}{2}\left(\wp'(\xi\pm 2\aR)\mp\wp'(\aR) \right)+ \mathcal{O}(\gm^3)\  , 
\end{eqnarray*}
and introducing
\be\label{eq:T}
T_{\aR,\bb}=\bs^T \chi^{-1}_{\aR,\bK} (\bun+\bM)^{-1} \chi_{\bb,\bK}^{-2}\left(\wp(\bb)-\wp(\xi+\bb+\bK)\right)\,\brr\  . 
\ee 
we have also
\begin{eqnarray*}
U_{\aR,-\bb} &\rightarrow& U_{\aR,-\aR} - \gm T_{\aR,-\aR} + \mathcal{O}(\gm^2)\   , \\ 
\Rightarrow\quad S_{\aR,-\bb} &\rightarrow& -\frac{1}{\gm}+\left(\wp(\aR)-\wp(\xi) \right)U_{\aR,-\aR} 
+\gm \left[\frac{1}{2}\wp(\xi)-\left(\wp(\aR)-\wp(\xi)\left(\eta_{-\aR}\,U_{\aR,-\aR}+T_{\aR,-\aR}\right) \right)   \right]\   , 
\end{eqnarray*}
and 
\begin{eqnarray*}
U_{\aR,\bb} &\rightarrow& U_{\aR,\aR} + \gm T_{\aR,\aR} + \mathcal{O}(\gm^2)\   , \\ 
\Rightarrow\quad S_{\aR,\bb} &\rightarrow& S_{\aR,\aR}+\gm\Psi_\xi(2\aR)\left[ \eta_{2\aR}\left(1-\chi_{\aR,\aR}U_{\aR,\aR}\right) 
-\left(\wp(\xi+2\aR)-\wp(\aR)\right)U_{\aR,\aR}-\chi_{\aR,\aR}T_{\aR,\aR}\right] \   . 
\end{eqnarray*}
Assuming the coefficients $B$ and $C$ to behave as:
$$ B~\rightarrow~B_0+B_1\gm+\cdots\quad,\quad C~\rightarrow~C_0+C_1\gm+\cdots \  , $$ 
where the singularity can be avoided if we choose $C_0=B_0$.     
\end{proof}

We now consider the limit to the case of the equation Q2. An extension of Lemma \ref{th:lemma4} is the following statement:  
\begin{lemma}\label{th:lemma5}
The following quantity obtained from the expansion of the Q3 elliptic soliton formula, namely 
\begin{eqnarray}\label{eq:zpm}
z'_\pm&=&\frac{1}{2}\left[\wp(\xi)-\digamma_{\aR}^2\pm\frac{n}{2}\left(\wp(\aR+\dd)-\wp(\aR-\dd)\right)+\frac{m}{2}\left(\wp(\aR+\ven)
-\wp(\aR-\ven)\right)\right] \nn \\ 
&& -\left(\wp(\aR)-\wp(\xi)\right)\left[\left(\eta_{\mp\aR}+\digamma_\aR\right)\,U_{\aR,-\aR}+T_{\pm\aR,\mp\aR}\right] 
\end{eqnarray}
obeys the following relations
\begin{eqnarray}\label{eq:Q2rels}
&& (p^2-a^2)\left[(z-\wt{z})(\wh{z}'_{\pm}-\wh{\wt{z}}'_{\pm})+(z'_{\pm}-\wt{z}'_{\pm})(\wh{z}-\wh{\wt{z}})\right] \nn \\ 
&& -(q^2-a^2)\left[(z-\wh{z})(\wt{z}'_{\pm}-\wh{\wt{z}}'_{\pm})+(z'_{\pm}-\wh{z}'_{\pm})(\wt{z}-\wh{\wt{z}})\right]= \nn \\ 
&& = \pm\frac{1}{2}\wp'(\aR) (z-\wh{\wt{z}})(\wh{z}-\wt{z}) + \frac{1}{4}\left(\frac{1}{p^2-a^2}-\frac{1}{q^2-a^2}\right)\bigg[ \pm\wp'(\aR)\wp''(\aR) \nn \\ 
&& \qquad -(\wp'(\aR))^2(z+\wt{z}+\wh{z}+\wh{\wt{z}}) \pm \frac{1}{2}(\wp'(\aR))^3\left(\frac{1}{p^2-a^2}+\frac{1}{q^2-a^2}\right) \bigg] \  . 
\end{eqnarray}
\end{lemma}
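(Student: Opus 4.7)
The quantity $z'_\pm$ is by construction the first correction in a Taylor expansion of the Q3 elliptic $N$-soliton $u^{(N)}$ at $\beta=\alpha$, so the natural strategy is to bootstrap from the fact that $u^{(N)}$ satisfies Q3 identically in its parameters. My plan is to set $\beta=\alpha\pm\gamma$, tune the coefficients $A,B,C,D$ of \eqref{eq:Nsol} so that the poles of $S_{\pm\alpha,\mp\beta}$ cancel at order $\gamma^{-1}$ exactly as in the proof of Theorem \ref{th:cor1}, and then expand the Q3 identity \eqref{eq:Q3new} in powers of $\gamma$ around $\gamma=0$.

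After that tuning one obtains a regular expansion $u^{(N)}=\mathfrak{u}_0+\gamma\,\mathfrak{u}_1+O(\gamma^2)$, where $\mathfrak{u}_0$ is the affine combination of $z$ with the $A,D$ contributions already described in Theorem \ref{th:cor1}, while $\mathfrak{u}_1$ is the corresponding affine combination involving $z'_\pm$. Each ingredient of the Q3 quadrilateral \eqref{eq:Q3quad} must then be expanded in $\gamma$: one has $b^2-a^2=\gamma\wp'(\alpha)+\tfrac{1}{2}\gamma^2\wp''(\alpha)+O(\gamma^3)$, edge parameters $P=(p^2-a^2)-\tfrac{1}{2}\gamma\wp'(\alpha)+O(\gamma^2)$ with the analogous expansion for $Q$, and $\Delta=\wp'(\alpha)\wp'(\beta)\det(\cA)$ whose leading order is $\wp'(\alpha)^2\det(\cA_0)$ and whose next order supplies the $\wp''(\alpha)$ contribution. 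Collecting powers of $\gamma$ in $\cQ^\Delta_{\ssp,\ssq}(u^{(N)},\wt{u}^{(N)},\wh{u}^{(N)},\wh{\wt{u}}^{(N)})=0$, the $\gamma^0$ piece reduces to the Q1 equation \eqref{eq:zeq} of Lemma \ref{th:lemma4}, which is automatic, while the $\gamma^1$ piece, after using that leading identity to cancel groups of terms, delivers exactly \eqref{eq:Q2rels}; the $\wp''(\alpha)$ term comes from the $\Delta$ expansion, and the term involving $(z+\wt{z}+\wh{z}+\wh{\wt{z}})$ emerges from the cross-interaction of the $\gamma$-corrections in $P,Q$ with the leading-order Q1 quadrilateral.

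The main obstacle will be the algebraic bookkeeping at order $\gamma^1$: several contributions must be redistributed using the $\gamma^0$ identity before the answer collapses to the compact form \eqref{eq:Q2rels}, and one must also carefully track how the sign choice $\pm$ in $\beta=\alpha\pm\gamma$ propagates through the expansions of $S_{\alpha,\beta}$ and $S_{\alpha,-\beta}$ so as to produce the $\eta_{\mp\alpha}$ and $T_{\pm\alpha,\mp\alpha}$ factors appearing in \eqref{eq:zpm}. As an independent check, one could instead differentiate the shift relations \eqref{eq:WWs} for $U_{\alpha,-\alpha}$ with respect to $\beta$ at $\beta=\alpha$ to obtain shift relations for $T_{\pm\alpha,\mp\alpha}$, and then combine these with the shift relations for $z$ established in the proof of Lemma \ref{th:lemma4} to verify \eqref{eq:Q2rels} directly by the same kind of elliptic manipulations used there.
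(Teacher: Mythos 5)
Your overall strategy (substitute $\bb=\aR+\gm$ into the Q3 identity and expand in $\gm$) is the right one, but the normalisation you choose defeats it for this particular Lemma. By cancelling the $\gm^{-1}$ pole ``exactly as in the proof of Theorem \ref{th:cor1}'' you lock the leading coefficients of $B$ and $C$ together ($C_0=B_0$), so that at order $\gm$ only the single combination $B_0(z'_+-z'_-)$ of the two quantities $z'_\pm$ can ever appear in $\mathfrak{u}_1$. Expanding Q3 then yields \emph{one} identity for that one combination, whereas the Lemma asserts two independent identities, one for each sign; and the $\aR\to-\aR$ symmetry (which swaps $z'_+$ with $z'_-$ and flips the sign of $\wp'(\aR)$) maps such a combined relation to itself, so it cannot be used afterwards to disentangle the two. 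The paper's proof avoids this precisely by \emph{not} cancelling the pole: it takes $A=D=C=0$ (so $\det(\cA)=0$, $\Delta=0$, and $u^{(N)}=-\gm^{-1}+z+\gm z'_+ +O(\gm^2)$) to obtain the $+$ relation, and $A=D=B=0$ (so $u^{(N)}=\gm^{-1}+z-\gm z'_-+O(\gm^2)$) to obtain the $-$ relation.

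The retained pole is not merely a convenience; it is what generates the right-hand side of \eqref{eq:Q2rels}. For instance, the term $-(\wp'(\aR))^2(z+\wt{z}+\wh{z}+\wh{\wt{z}})$ arises from contributions of the form $\gm^2(\wp'(\aR))^2\,u\wh{u}/(p^2-a^2)$ in the expansion of the quadrilateral: with $u\sim\mp\gm^{-1}+z$ one has $\gm^2 u\wh{u}=1\mp\gm(z+\wh{z})+O(\gm^2)$, which contributes at order $\gm$, whereas with your regular expansion $\gm^2 u\wh{u}=O(\gm^2)$ and this term is simply absent. Likewise the $\pm\wp'(\aR)\wp''(\aR)$ and $\pm\tfrac12(\wp'(\aR))^3$ terms come from the interaction of the $\gm^{-1}$ pole with the $\gm^3$ and $\gm^4$ coefficients of $P$, $Q$ and $\Delta$. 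Your regular, pole-free expansion is exactly the right tool for Theorem \ref{th:cor2}, where only a fixed combination of $z'_+$ and $z'_-$ is required, but it cannot prove the present Lemma. Your proposed alternative check --- deriving shift relations for $T_{\pm\aR,\mp\aR}$ by differentiating \eqref{eq:WWs} with respect to $\bb$ and arguing as in Lemma \ref{th:lemma4} --- is viable in principle, but as stated it remains a sketch.
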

\begin{proof}
The proof is by direct expansion of the Q3 solution using the expansions ~$u^{(N)}=-\frac{1}{\gm}+z+\gm z_+$~ in the case that we take the 
coefficients $A=D=C=0$, or the expansion ~$u^{(N)}=\frac{1}{\gm}+z-\gm z_-$~ in the case we take the coefficients $A=D=B=0$, and using the expansion 
of the proof of Theorem \ref{th:cor1} to obtain the explicit form of the quantities $z_\pm$.  
\end{proof}

A combination of the quantities $z_\pm$ of Lemma \ref{th:lemma5} yields the core of the general elliptic soliton solution of the equation Q2, as is 
expressed in the following statement. 

\begin{theorem}\label{th:cor2}
The following function
\begin{eqnarray}\label{eq:Q2Nsol}
u^{(N)} &=& \left(\wp(\aR)-\wp(\xi)\right)(T_{\aR,-\aR}+T_{-\aR,\aR})-\wp'(\xi)U_{\aR,-\aR} -\wp(\xi)
-\frac{1}{6}\left(\frac{\wp'''(\aR)}{\wp'(\aR)}-\frac{3}{2} \frac{(\wp''(\aR))^2}{(\wp'(\aR))^2} \right) \nn \\ 
&& + \left( \frac{B_1+C_1}{2B_0}-\digamma_{\aR}\right) \left[ \frac{B_1+C_1}{2B_0}-\digamma_{\aR}-2\left(\wp(\aR)-\wp(\xi)\right)U_{\aR,-\aR}\right] 
-\frac{A_1D_1}{B_0^2}-\frac{A_1}{B_0}\rho(\aR)S_{\aR,\aR} \nn \\ 
&& -\frac{D_1}{B_0}\rho(-\aR)S_{-\aR,-\aR}\  ,  
\end{eqnarray}
where $A_1$, $D_1$, $B_0$, $B_1$, $C_1$ are constants\footnote{We note that the solution \eqref{eq:Q2Nsol} of Q2 presented in Theorem \ref{th:cor3} depends effectively on three  coefficients, 
$A_1/B_0$, $D_1/B_0$ and $(B_1+C_1)/B_0$. We could have replaced these coefficients by single symbols, but for the clarity of the proof we have 
abstained from doing that.} and  
with the notations as introduced earlier, i.e. \eqref{eq:U}, \eqref{eq:WS}, \eqref{eq:T} and \eqref{eq:digamma}, and the identifications \eqref{eq:Q2idens}, 
obeys the equation Q2 in the form \eqref{eq:Qeqsb}. 
\end{theorem}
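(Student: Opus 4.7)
The plan is to derive this result by extending the coalescence analysis that underpinned Theorem \ref{th:cor1} and Lemmas \ref{th:lemma4}--\ref{th:lemma5} one order further in the parameter $\gamma = \bb - \aR$. I would start from the Q3 elliptic $N$-soliton solution \eqref{eq:Nsol}, set $\bb = \aR + \gamma$, and Taylor-expand each of the four constituents to $O(\gamma^2)$. The expansions of $\rho^{1/2}(\pm\bb)$, $\Psi_\xi(\pm(\aR - \bb))$, and $\chi_{\pm\aR,\mp\bb}$ are already recorded in the proof of Theorem \ref{th:cor1}; what is new is the $O(\gamma)$ correction to $U_{\pm\aR,\mp\bb}$ and $S_{\pm\aR,\pm\bb}$, which brings in the derivative-like quantities $T_{\pm\aR,\mp\aR}$ introduced in \eqref{eq:T} and the subleading elliptic ingredients appearing in Lemma \ref{th:lemma5}.

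Next I would rescale the arbitrary coefficients as $A = \gamma A_1$, $D = \gamma D_1$, $B = B_0 + \gamma B_1$, $C = B_0 + \gamma C_1$. As in Theorem \ref{th:cor1}, the equality of the leading constants $B$ and $C$ cancels the $\gamma^{-1}$ pole produced by $\Psi_\xi(\pm(\aR - \bb))$; the additional scaling of $A$ and $D$ demotes the regular $S_{\pm\aR,\pm\aR}$ contributions to the same $O(\gamma)$ level as the subleading corrections in the $B$- and $C$-terms. Substituting these expansions into \eqref{eq:Nsol} and collecting powers of $\gamma$, the Q3 solution takes the form $u^{(N)}_{\mathrm{Q3}} = (C_1 - B_1) + \gamma\, v + O(\gamma^2)$, where the coefficient $v$, after straightforward regrouping, is precisely the expression on the right-hand side of \eqref{eq:Q2Nsol}. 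The $\wp(\xi)$ contribution emerges from the $\frac{1}{2}\gamma \wp(\xi)$ subleading term of the expansion of $\Psi_\xi$ near the origin, whilst the universal constant $\frac{1}{6}(\wp'''(\aR)/\wp'(\aR) - \tfrac{3}{2}(\wp''(\aR)/\wp'(\aR))^2)$ arises from the symmetric, parameter-independent part of the third-order Taylor expansion of $\zeta(\aR \pm \dd)$ and $\zeta(\aR \pm \ven)$ around $\aR$.

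The verification that $v$ satisfies Q2 in the form \eqref{eq:Qeqsb} then proceeds by feeding the expansion back into Q3 \eqref{eq:Q3new}. Under this scaling the parameter $\Delta = \wp'(\aR)\wp'(\bb)(AD - BC)$ itself expands as $-\wp'(\aR)^2 B_0^2 - \gamma\, \wp'(\aR)^2 B_0(B_1 + C_1) + \gamma^2 \wp'(\aR)^2(A_1 D_1 - B_1 C_1) + O(\gamma^3)$, so Q3 splits into a hierarchy of equations indexed by the order in $\gamma$. The $O(1)$ content is the Q1 equation for the constant $C_1 - B_1$, trivially satisfied by translation invariance of Q1 with the appropriate $\Delta$ identified from the leading $B_0^2$ term; the $O(\gamma)$ content is a linearization that reduces, via Lemma \ref{th:lemma4} and the coupled relations \eqref{eq:Q2rels} of Lemma \ref{th:lemma5}, to an identity for $z$ and $z'_\pm$; and the $O(\gamma^2)$ residue, after eliminating all Q1-type pieces using Lemma \ref{th:lemma4} and the $z'_\pm$ identities, collapses to Q2 for $v$ with lattice parameters \eqref{eq:Q2idens}.

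The main obstacle will be the combinatorial bookkeeping of the elliptic Taylor expansions to the order required: specifically, matching the cubic combinations of $\digamma_\aR$, $U_{\aR,-\aR}$, $T_{\pm\aR,\mp\aR}$ and $\wp$-derivatives against the symmetric nonlinearity $u + \wt u + \wh u + \wh{\wt u}$ characteristic of Q2. Lemma \ref{th:lemma5} packages most of this difficulty, so in practice the remaining work is a verification that the particular linear combination of $z$, $z'_\pm$, and the new $A_1, D_1$-dependent pieces entering \eqref{eq:Q2Nsol} is the unique combination for which the $O(\gamma^2)$ residue of Q3 closes onto Q2 without a residual Q1-constraint, and that the choice of constant $-\tfrac{1}{6}(\wp'''/\wp' - \tfrac{3}{2}(\wp''/\wp')^2)$ is forced by the inhomogeneous right-hand side $\po\qo(\po-\qo)(\po^2 - \po\qo + \qo^2)$ of Q2 under the identification \eqref{eq:Q2idens}.
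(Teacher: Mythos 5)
Your overall strategy --- expand the Q3 solution \eqref{eq:Nsol} and the Q3 quadrilateral in powers of $\gm=\bb-\aR$ and read off Q2 at the first nontrivial order --- is the one the paper uses, but your scaling of the coefficients implements the wrong degeneration. You take $C=B_0+\gm C_1$, so $C_0=+B_0$ and, as you say, the $\gm^{-1}$ poles coming from $\Psi_\xi(\pm(\aR-\bb))$ cancel. That is precisely the Q1 limit of Theorem \ref{th:cor1}: with the poles cancelled, the $O(1)$ term of $u^{(N)}$ is $2B_0z+(C_1-B_1)$, a genuine non-constant Q1 solution, not the constant $C_1-B_1$ you claim. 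Your stated expansion $u^{(N)}=(C_1-B_1)+\gm v+O(\gm^2)$ is therefore internally inconsistent unless $B_0=0$, which is excluded by the $1/B_0$ factors in \eqref{eq:Q2Nsol}. The paper instead takes $C=-B_0+\gm C_1+\gm^2C_2$, so the pole does \emph{not} cancel: $u^{(N)}=-2B_0/\gm+(C_1-B_1)+\gm u_1+O(\gm^2)$ with \emph{constant} coefficients $u_{-1}=-2B_0$ and $u_0=C_1-B_1$, and the dynamical content first appears at order $\gm$. This large constant background is the essential mechanism by which the linear terms $u+\wt{u}+\wh{u}+\wh{\wt{u}}$ of Q2 are generated from the quadratic terms of Q3, and it is also why the final answer is normalised by $u_{-1}$, i.e. $v=2u_1/u_{-1}+\cdots$ as in \eqref{eq:idens}, which produces the $1/B_0$ and $(B_1+C_1)/(2B_0)$ factors appearing in \eqref{eq:Q2Nsol}.

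A second discrepancy concerns the order-by-order analysis of the equation. In the paper, the orders $\gm^{0}$ and $\gm^{1}$ of the expanded quadrilateral cancel identically once one verifies $(\det\cA)_0=\tfrac14u_{-1}^2$ and $(\det\cA)_1=\tfrac12u_{-1}u_0$, which hold automatically for the above scaling; they do not reduce to the $z$, $z'_\pm$ identities of Lemmas \ref{th:lemma4}--\ref{th:lemma5} as you suggest, and there is no residual ``Q1 equation at $O(1)$'' to satisfy. The first nontrivial order is $\gm^{2}$, and it directly yields Q2 for the shifted quantity \eqref{eq:idens}. Finally, you truncate $B$ and $C$ at order $\gm$; the coefficients $B_2$, $C_2$ must be retained, since they enter $u_1$ through $(C_2-B_2)$ and enter the equation through $(\det\cA)_2$, and it is the combination $\bigl(u_0^2-4(\det\cA)_2\bigr)/u_{-1}^2$ that yields the $-A_1D_1/B_0^2$ and $\bigl((B_1+C_1)/2B_0\bigr)^2$ terms of \eqref{eq:Q2Nsol}.
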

\begin{proof}
The result is obtained by systematic expansion with respect to the parameter $\gm$ as introduced in the proof of Corrollary \ref{th:cor1}. 
The solution \eqref{eq:Nsol} of the Q3 equation under the expansions given earlier takes the form 
\begin{equation}\label{eq:uexpand}
u^{(N)}= \frac{1}{\gm}u_{-1}+u_0+\gm u_1+\mathcal{O}(\gm^2)\  ,
\end{equation}
in which the coefficients $u_{-1}$, $u_0$ and $u_1$ are obtained from the expansions of the various ingredients of the elliptic $N$-soliton 
solution as given earlier. Here we still have the freedom to impose a dependence of the coefficients $A$, $B$, $C$, $D$ on $\gm$ as we want, 
and we shall chose these such that the coefficients $u_{-1}$, $u_0$ are actually \textit{constant}. This can be achieved by chosing  
$$ A=\gm A_1\quad,\quad D=\gm D_1\quad,\quad B=B_0+\gm B_1+\gm^2 B_2\quad,\quad C=-B_0+\gm C_1+\gm^2 C_2\  , $$ 
in which $A_1$, $D_1$, $B_0$, $B_1$, $B_2$, $C_1$, $C_2$ are all assumed to be of order 1 in powers of $\gm$, and which will be specified 
further lateron. Thus, we obtain ~$ u_{-1}= -2B_0$~, ~$u_0=C_1-B_1$~ and $u_1$ given by
$$ u_1=(C_2-B_2)+(C_1+B_1)z + B_0(z'_+ + z'_-)+A_1 \rho(\aR)S_{\aR,\aR}+D_1\rho(-\aR)S_{-\aR,-\aR}\  , $$
where $z$ is given in \eqref{eq:uaa} and $z'_\pm$ in \eqref{eq:zpm}. 
Next we employ the same expansions on the Q3 quadrilateral \eqref{eq:Q3quad}, where for arbitrary coefficients $A$, $B$, $C$, $D$, 
$\Delta$ is given by \eqref{eq:Delta}. Thus, on the equation we can compute the expansion by noting that up to fourth in powers of $\gm$ we have: 
\begin{eqnarray*}
P&=&(p^2-a^2)\left[  1-\frac{1}{2}\gm\frac{\wp'(\aR)}{p^2-a^2} -\frac{1}{4}\gm^2 \left( \frac{\wp''(\aR)}{p^2-a^2} +
\frac{1}{2}\frac{(\wp'(\aR))^2}{(p^2-a^2)^2} \right) \right. \\ 
&& \left. \qquad\qquad\quad -\frac{1}{12}\gm^3 \left(\frac{\wp'''(\aR)}{p^2-a^2} +
\frac{3}{2}\frac{\wp'(\aR)\wp''(\aR)}{(p^2-a^2)^2}  + \frac{3}{4} \frac{(\wp'(\aR))^3}{(p^2-a^2)^3}\right) \right. \\ 
&& \left. \qquad\qquad\quad -\frac{1}{48}\gm^4 \left( \frac{\wp''''(\aR)}{p^2-a^2} +
\frac{3}{2}\frac{(\wp''(\aR))^2}{(p^2-a^2)^2}  + 2\frac{\wp'(\aR)\wp'''(\aR)}{(p^2-a^2)^2}\right. \right. \\ 
&& \left.\left. \qquad\qquad\qquad\qquad\quad +\frac{9}{2} \frac{(\wp'(\aR))^2\wp''(\aR)}{(p^2-a^2)^3}
+ \frac{15}{8} \frac{(\wp'(\aR))^4}{(p^2-a^2)^4}\right)+\mathcal{O}(\gm^5)\,\right]\  ,  \\ 
&& 
\end{eqnarray*}
and similarly for $Q$. Inserting this expansion into the Q3 quadrilateral we obtain up to the required order in powers of $\gm$ 
\begin{eqnarray*}
&& \cQ_{\ssp,\ssq}(u,\wt{u},\wh{u},\wh{\wt{u}})= \\ 
&& (p^2-a^2)(u-\wt{u})(\wh{u}-\wh{\wt{u}})- (q^2-a^2)(u-\wh{u})(\wt{u}-\wh{\wt{u}}) 
- \left(\frac{1}{2}\gm\wp'(\aR)+\frac{1}{4}\gm^2\wp''(\aR)\right)(u-\wh{\wt{u}})(\wh{u}-\wt{u}) \\ 
&& -\frac{1}{8}\left(\gm^2(\wp'(\aR))^2+\gm^3\wp'(\aR)\wp''(\aR)+\frac{1}{4}\gm^4(\wp''(\aR))^2+\frac{1}{3}\gm^4\wp'(\aR)\wp'''(\aR)
\right)\,\left[\frac{u\wh{u}+\wt{u}\wh{\wt{u}}}{p^2-a^2}-\frac{u\wt{u}+\wh{u}\wh{\wt{u}}}{q^2-a^2}\right] \\ 
&& -\frac{1}{16}\left(\gm^3(\wp'(\aR))^3+\frac{3}{2}\gm^4(\wp'(\aR))^2\wp''(\aR)\right)
\,\left[\frac{u\wh{u}+\wt{u}\wh{\wt{u}}}{(p^2-a^2)^2}-\frac{u\wt{u}+\wh{u}\wh{\wt{u}}}{(q^2-a^2)^2}\right] \\ 
&& -\frac{5}{128}\gm^4(\wp'(\aR))^4
\,\left[\frac{u\wh{u}+\wt{u}\wh{\wt{u}}}{(p^2-a^2)^3}-\frac{u\wt{u}+\wh{u}\wh{\wt{u}}}{(q^2-a^2)^3}\right] \\
&& + \det(\cA)(\wp'(\aR))^2\left(\frac{1}{p^2-a^2}-\frac{1}{q^2-a^2}\right) \left\{ 1 +\gm\left[\frac{\wp''(\aR)}{\wp'(\aR)}+
\frac{1}{2}\wp'(\aR)\left(\frac{1}{p^2-a^2}+\frac{1}{q^2-a^2}\right)  \right] \right. \\ 
&& ~~ +\frac{1}{2}\gm^2\left[ \frac{1}{4}(\wp'(\aR))^2\left(\frac{3}{(p^2-a^2)^2}+\frac{3}{(q^2-a^2)^2}+\frac{2}{(p^2-a^2)(q^2-a^2)}\right)
+ \frac{\wp'''(\aR)}{\wp'(\aR)}\right. \\  
&& \left.\left.\qquad\qquad  +\frac{3}{2}\wp''(\aR)\left(\frac{1}{p^2-a^2}+\frac{1}{q^2-a^2}\right)
\right] \right\} 
\end{eqnarray*}
Inserting the expansion \eqref{eq:uexpand}, we find that terms of of order 0 and 1 in powers of $\gamma$ (there are no 
negative powers) cancel, provided that 
$$ (\det(\cA))_0=\frac{1}{4}u_{-1}^2\quad,\quad (\det(\cA))_1=\frac{1}{2}u_{-1}u_0\  , $$ 
where $(\det(\cA))_i$ indicates the i$^{\rm th}$ order of the expansion of the determinant, and these hold identically true for
the choices of $u_{-1}$ and $u_0$ given above. The dominant order is then quadratic in $\gm$ and yields the following relation 
\begin{eqnarray}
&& \cQ_{\ssp,\ssq}(u,\wt{u},\wh{u},\wh{\wt{u}})= \nn \\ 
&& = \gm^2\bigg\{ (p^2-a^2)(u_1-\wt{u}_1)(\wh{u}_1-\wh{\wt{u}}_1)- (q^2-a^2)(u_1-\wh{u}_1)(\wt{u}_1-\wh{\wt{u}}_1)  \nn \\ 
&& \qquad + \left(\frac{1}{p^2-a^2}-\frac{1}{q^2-a^2}\right)(\wp'(\aR))^2\left[ (\det(\cA))_2-\frac{1}{8}
\left(2u_0^2+u_{-1}(u_1+\wt{u}_1+\wh{u}_1+\wh{\wt{u}}_1)\right) \right. \nn \\ 
&& \qquad +\frac{1}{24} u_{-1}^2 \left(\frac{\wp'''(\aR)}{\wp'(\aR)}-\frac{3}{2} \frac{(\wp''(\aR))^2}{(\wp'(\aR))^2} \right)  \nn \\  
&& \qquad + \left.\frac{1}{64} u_{-1}^2 (\wp'(\aR))^2 \left(\frac{1}{(p^2-a^2)^2}-\frac{1}{(p^2-a^2)(q^2-a^2)}
+\frac{1}{(q^2-a^2)^2}\right) \right]\bigg\} + \mathcal{O}(\gm^3)\  . \nn \\ 
\end{eqnarray}
Thus, if $\cQ_{\ssp,\ssq}(u,\wt{u},\wh{u},\wh{\wt{u}})=0$ as for the Q3 solution we started out with, then the dominant coefficient in the expansion in 
powers of $\gm$ must vanish as well, leading to the conclusion that with the identifications \eqref{eq:Q2idens}, as well as setting 
\be\label{eq:idens} 
v=2\frac{u_1}{u_{-1}}+\frac{u_0^2-4(\det(\cA))_2}{u_{-1}^2}-\frac{1}{6}\left(\frac{\wp'''(\aR)}{\wp'(\aR)}-\frac{3}{2} \frac{(\wp''(\aR))^2}{(\wp'(\aR))^2} \right)
\ee
the function $v$ obeys the Q2 equation in the form \eqref{eq:Qeqsb} with the identifications \eqref{eq:Q2idens}. 
\end{proof}

\subsection{H-degenerations} 

The degenerate limits from the solutions of Q3 to those of the H-equations are obtained through the limit $\bb\rightarrow 0$. 
As an immediate corollary of theorem \ref{th:theorem1} we have the following result: 

\begin{theorem}\label{th:cor3} The following formula 
 \begin{equation} \label{eq:H3Nsol} 
v^{(N)} = i^{n+m}\left[\left(A+B(-1)^{n+m}\right)\rho^{1/2}(\aR)W_{\aR}+\left(C+D(-1)^{n+m}\right)\rho^{1/2}(-\aR)W_{-\aR}\right]\  , 
\end{equation} 
in which $W_{\pm\aR}$ and $\rho(\pm\aR)$ are defined as before, 
constitutes the elliptic $N$-soliton solution of the $({\rm H}_3)_\Delta$ equation, given in the form
\be\label{eq:H3Delta}
P_\aR(v\wh{v}+\wt{v}\wh{\wt{v}})-Q_\aR(v\wt{v}+\wh{v}\wh{\wt{v}})+2(p^2-q^2)\frac{\Delta_\aR}{P_\aR Q_\aR}=0\  . 
\ee 
where ~$P_\aR=\sqrt{a^2-p^2}$~,~$Q_\aR=\sqrt{a^2-q^2}$~, and  where 
$$\Delta_\aR=\wp'(\aR)\det(\cA)\   . $$
Furthermore, the associated solution 
\be\label{eq:H3assNsol} 
V^{(N)}=i^{n+m}\left[\left(A-B(-1)^{n+m}\right)\rho^{1/2}(\aR)W_{\aR}+\left(C-D(-1)^{n+m}\right)\rho^{1/2}(-\aR)W_{-\aR}\right]\  ,  
\ee 
together with $v^{(N)}$ are Miura related to the elliptic $N$-soliton solution $w$ of H1 via the relations:
\bse\label{eq:wvMiura}\begin{eqnarray}
\wh{w}-\wt{w}&=&\frac{P_\aR\wh{v}-Q_\aR\wt{v}}{\wh{\wt{V}}}
 =-\frac{P_\aR\wt{v}-Q_\aR\wh{v}}{V} \label{eq:wvMiuraa} \\ 
w-\wh{\wt{w}}&=&\frac{P_\aR v-Q_\aR\wh{\wt{v}}}{\wt{V}}
 =-\frac{P_\aR\wh{\wt{v}}-Q_\aR v}{\wh{V}} \label{eq:wvMiurab} 
\end{eqnarray}\ese  
\end{theorem}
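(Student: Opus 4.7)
The plan is to obtain Theorem~\ref{th:cor3} as the $\bb\to 0$ limit of Theorem~\ref{th:theorem1}, after a simple rescaling of both the solution $u^{(N)}$ and of the Q3 equation by $\bb$. Since the H1 variable $w$ defined in \eqref{eq:wnew} is independent of $\bb$, the Miura pairing with $w$ survives the limit intact, and so the relations \eqref{eq:wvMiura} will follow by the same rescaling applied to \eqref{eq:wuUMiura}.

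First I would record the necessary asymptotics, all of which follow from $\zeta(\bb)=1/\bb+O(\bb^3)$ and $\Psi_\xi(\pm\bb)=\pm 1/\bb+O(\bb)$. A short calculation gives $\chi_{\aR,\bb}=1/\bb+O(1)$ and $(\chi_{\bb,\bK})^{-1}=\bb\bun+O(\bb^2)$, whence $U_{\aR,\bb}=\bb\,\bs^T\bu_\aR+O(\bb^2)=\bb(1-V_\aR)+O(\bb^2)$ and $\chi_{\aR,\bb}U_{\aR,\bb}=(1-V_\aR)+O(\bb)$, so that $S_{\pm\aR,\pm\bb}=\Psi_\xi(\pm\aR)V_{\pm\aR}+O(\bb)=W_{\pm\aR}+O(\bb)$. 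On the parameter side $b^2=\wp(\bb)-e\sim 1/\bb^2$, hence $P\sim P_\aR/\bb$ and $Q\sim Q_\aR/\bb$ with $P_\aR=\sqrt{a^2-p^2}$, $Q_\aR=\sqrt{a^2-q^2}$, while $\wp'(\bb)\sim-2/\bb^3$ yields $\Delta/(PQ)\sim-2\Delta_\aR/(\bb P_\aR Q_\aR)$ with $\Delta_\aR=\wp'(\aR)\det(\cA)$. Finally $\rho(\pm\bb)\to(-1)^{n+m}\rho_{0,0}(0)$, and after absorbing $\sqrt{\rho_{0,0}(0)}$ into the coefficients and fixing branches one has $\rho^{1/2}(\pm\bb)\to(\pm i)^{n+m}$.

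Substituting these limits into \eqref{eq:Nsol} and grouping the four terms via $(-i)^{n+m}=i^{n+m}(-1)^{n+m}$ produces precisely the stated $v^{(N)}$, with the pairs $A+B(-1)^{n+m}$ and $C+D(-1)^{n+m}$ emerging automatically. For the associated object, the simple poles $W_{\pm\bb}\sim\pm 1/\bb$ force the rescaling $V^{(N)}=\lim_{\bb\to 0}\bb\,U^{(N)}$: using $\rho^{1/2}(\pm\bb)\,\bb W_{\pm\bb}\to\pm(\pm i)^{n+m}$ one obtains the complementary sign combinations $A-B(-1)^{n+m}$ and $C-D(-1)^{n+m}$ of \eqref{eq:H3assNsol}.

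The equation itself is obtained by multiplying \eqref{eq:Q3new} through by $\bb$ and letting $\bb\to 0$: the left-hand side tends to $P_\aR(v\wh v+\wt v\wh{\wt v})-Q_\aR(v\wt v+\wh v\wh{\wt v})$, the bulk term $\bb(p^2-q^2)(\wh u\wt u+u\wh{\wt u})$ vanishes, and $\bb(p^2-q^2)\Delta/(PQ)\to-2(p^2-q^2)\Delta_\aR/(P_\aR Q_\aR)$, which assembles into \eqref{eq:H3Delta}. The Miura relations \eqref{eq:wvMiura} follow from the same rescaling applied to \eqref{eq:wuUMiura}: multiplying numerator and denominator by $\bb$ replaces $(P,Q,U^{(N)})$ by $(P_\aR,Q_\aR,V^{(N)})$ and suppresses the finite terms $(p^2-q^2)\wh{\wt u}$ and $(p^2-q^2)\wt u$, leaving the stated identities. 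The main bookkeeping obstacle is a consistent choice of branches for $\rho^{1/2}(\pm\bb)$ and for the now purely imaginary $P,Q$; once these are fixed compatibly, every step is a routine Laurent expansion whose $O(\bb)$ corrections are either absorbed in higher-order cancellations or suppressed by the overall rescaling.
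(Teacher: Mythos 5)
Your proposal is correct and follows essentially the same route as the paper: Laurent expansion of $\rho^{1/2}(\pm\bb)$, $\chi$, $U_{\aR,\bb}$, $S_{\pm\aR,\pm\bb}$ and $W_{\pm\bb}$ about $\bb=0$, identification of $v^{(N)}=\lim u^{(N)}$ and $V^{(N)}=\lim \bb\,U^{(N)}$, extraction of the dominant $O(1/\bb)$ order of the Q3 quadrilateral to get \eqref{eq:H3Delta}, and the same rescaling of \eqref{eq:wuUMiura} to get \eqref{eq:wvMiura}. The paper merely carries the expansions one order further (producing $v'$ and $w'$), which is not needed for this theorem but feeds the subsequent Corollary \ref{th:cor11} and Theorem \ref{th:H2theorem}.
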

\begin{proof}
The proof is by simple expansions of the elliptic functions in the solution \eqref{eq:Nsol}, namely by observing that as 
$\bb\rightarrow 0$, we have 
\begin{equation} 
\rho^{1/2}(\pm\bb)\sim (\pm i)^{n+m}\left[ 1\pm\bb\left( \zeta(\xi)-n\zeta(\dd)-m\zeta(\ven) \right)+\frac{1}{2}\bb^2\left( \zeta(\xi)-n\zeta(\dd)-m\zeta(\ven)\right)^2  
+ \mathcal{O}(\bb^3)\right]\   ,     
\end{equation} 
choosing an appropriate branch of the square root of $\rho(\bb)$, 
as well as 
\begin{eqnarray*}
&& \chi_{\aR,\bb}~\rightarrow~\frac{1}{\bb} -\eta_{\aR} + \bb\wp(\xi+\aR)+\mathcal{O}(\bb^2)
\quad,\quad \Psi_\xi(\aR+\bb)~\rightarrow~\Psi_\xi(\aR)\left( 1+\eta_{\aR}\bb +\mathcal{O}(\bb^2)\right)\  , \\ 
&& \chi_{\bb,\bK}~\rightarrow~\frac{1}{\bb} -\eta_{\bK} +\bb\wp(\xi+\bK)+ \mathcal{O}(\bb^2)
\quad \Rightarrow\quad U_{\aR,\bb}~\rightarrow~\bb(1-V_\aR)+
\bb^2 Z_{\aR}+\mathcal{O}(\bb^3)\  , \\  
\end{eqnarray*}
in which 
\be\label{eq:Z}
Z_{\aR}=\bs^T \eta_{\bK} (\bun+\bM)^{-1} \chi_{\aR,\bK}^{-1}\,\brr\quad {\rm with}\quad \eta_{\bK}=\zeta(\xi+\bK)-\zeta(\xi)-\zeta(\bK)\   . 
\ee 
Hence, 
\be\label{eq:Slim} 
S_{\aR,\bb}=\Psi_\xi(\aR+\bb)\left[1-\chi_{\aR,\bb}U_{\aR,\bb}\right]~\rightarrow~W_{\aR}+\bb\Psi_\xi(\aR)(\eta_{\aR}-Z_{\aR})+\mathcal{O}(\bb^2) 
\ee 
and, thus, we get the limiting behaviour 
\bse\bea\label{eq:Nsollim} 
&& u^{(N)}\rightarrow v^{(N)}+\bb v' + \mathcal{O}(\bb^2)  \\ 
&& v'\Equiv  i^{n+m}\left\{ \left(A-B(-1)^{n+m}\right)\rho^{1/2}(\aR)\left[ \Psi_\xi(\aR)(\eta_{\aR}-Z_{\aR})+
\left(\zeta(\xi)-n\zeta(\dd)-m\zeta(\ven)\right)W_{\aR}\right] \right. \nn \\ 
&& \qquad \left. +\left(C-D(-1)^{n+m}\right)\rho^{1/2}(-\aR)\left[\Psi_\xi(-\aR)(\eta_{-\aR}-Z_{-\aR})+
\left(\zeta(\xi)-n\zeta(\dd)-m\zeta(\ven)\right)W_{-\aR}\right]\right\}\   , \nn \\ 
&&  \label{eq:Nsollimb} 
\eea\ese 
as given in \eqref{eq:H3Nsol}. In the quadrilateral \eqref{eq:Q3quad} we have the following limits: 
$$ P\rightarrow \frac{1}{\bb}P_\aR+\mathcal{O}(\bb)\quad,\quad Q\rightarrow \frac{1}{\bb}Q_\aR+\mathcal{O}(\bb)\quad {\rm and}\quad 
\Delta=\wp'(\aR)\wp'(\bb)\det(\cA)\rightarrow \frac{-2}{\bb^3}\Delta_\aR\  , $$ 
so that the dominant behaviour of the quadrilateral \eqref{eq:Q3quad} reduces to the left-hand side of \eqref{eq:H3Delta}. Finally, with 
~$\Psi_\xi(\pm\bb)\rightarrow\pm\bb^{-1}\left(1-\frac{1}{2}\wp(\xi)\bb^2+\mathcal{O}(\bb^3)\right)$~ we get the expansions 
\begin{eqnarray*} 
&& W_\bb=\Psi_\xi(\bb)(1-V_\bb)~\rightarrow \frac{1}{\bb}\left[1 -\bb (\bs^T\,\bu^0)-\bb^2\left((\bs^T\,\eta_{\bK}\,\bu^0)+\frac{1}{2}\wp(\xi)\right)
+\mathcal{O}(\bb^3)\right] \\ 
&& \quad\Rightarrow\quad \rho^{1/2}(\pm\bb)W_{\pm\bb}~\rightarrow~\pm\frac{1}{\bb}(\pm i)^{n+m}\left( 1\pm\bb w + \frac{1}{2}\bb^2 w' +\mathcal{O}(\bb^3) \right)\  ,  
\end{eqnarray*} 
where $w$ is, once again, the solution \eqref{eq:wnew} of the equation H1, and where 
\be\label{eq:w'}
w'\Equiv  \left(\zeta(\xi)-n\zeta(\dd)-m\zeta(\ven)-\bs^T\,\bu^0\right)^2-\wp(\xi)-2(\bs^T\,\eta_{\bK}\,\bu^0)-(\bs^T\,\bu^0)^2\   .  
\ee 
This implies, in particular, that  
\be\label{eq:Ulim} 
U^{(N)}~\rightarrow~(\bb^{-1}+\frac{1}{2}\bb w')V^{(N)}+ wv^{(N)}+\mathcal{O}(\bb^2)\   ,  
\ee
from which in turn follow that the relations \eqref{eq:wuUMiura} through these expansion reduce to the relations given in \eqref{eq:wvMiura} as $\bb\rightarrow 0$.  
\end{proof}

The quantity $v'$ that is obtained in the higher orders of \eqref{eq:Nsollim} obeys also some interesting relations, which can be easily obtained from 
\eqref{eq:wuUMiura} by pushing the expansions one step further. Thus, as a corollary, we obtain 

\begin{cor}\label{th:cor11}
The quantity $v'$ defined in \eqref{eq:Nsollimb} obeys the following Miura type relations 
\bse\label{eq:v'wv}\begin{eqnarray}
&& P_{\aR}\wh{v}'-Q_{\aR}\wt{v}'=(\wh{w}-\wt{w})w\wh{\wt{v}}\quad ,\quad  P_{\aR}v'-Q_{\aR}\wh{\wt{v}}'=(w-\wh{\wt{w}})\wh{w}\wt{v}\   , \\ 
&& P_{\aR}\wt{v}'-Q_{\aR}\wh{v}'=(\wt{w}-\wh{w})v\wh{\wt{w}}\quad ,\quad  P_{\aR}\wh{\wt{v}}'-Q_{\aR}v'=-(w-\wh{\wt{w}})\wt{w}\wh{v}\   ,
\end{eqnarray}\ese 
where $v$ is the solution of the equation H3 as given in \eqref{eq:H3Nsol} and where $w$ is the solution \eqref{eq:wnew} of the equation H1 and where 
the parameters $P_\aR$, $Q_\aR$ are given as in Theorem \ref{th:cor3}.    
\end{cor}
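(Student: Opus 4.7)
The plan is to extract the four Miura relations for $v'$ directly from the sub-leading order in the $\bb\to 0$ expansion that already produced Theorem \ref{th:cor3}. The essential observation is that each identity in \eqref{eq:wuUMiura} holds to all orders in $\bb$, while the left-hand sides $\wh{w}-\wt{w}$ and $w-\wh{\wt{w}}$ are $\bb$-independent. Consequently, beyond the leading balance (which reproduces \eqref{eq:wvMiura}) the next order in $\bb$ of each right-hand side must vanish, and this vanishing is precisely the asserted identity for $v'$ once it is rearranged with the help of H1.

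First, I would collect the expansions already established in the proof of Theorem \ref{th:cor3}: namely $u^{(N)} = v + \bb v' + \mathcal{O}(\bb^2)$, $U^{(N)} = \bb^{-1}V + wv + \tfrac{1}{2}\bb\,w'V + \mathcal{O}(\bb^2)$, together with $P = \bb^{-1}P_\aR + \mathcal{O}(\bb)$ and $Q = \bb^{-1}Q_\aR + \mathcal{O}(\bb)$, and the analogous expansions for the lattice-shifted quantities (crucially, the constant term of $\wh{\wt{U}}$ is $\wh{\wt{w}}\,\wh{\wt{v}}$, that of $\wt{U}$ is $\wt{w}\,\wt{v}$, and that of $\wh{U}$ is $\wh{w}\,\wh{v}$). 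I would also note that $P$ and $Q$ contain no $\bb^{0}$ term, so the $(p^{2}-q^{2})$ contribution in the numerators of \eqref{eq:wuUMiura} first appears at order $\bb^{0}$.

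Next, substituting into the first form of \eqref{eq:wuUMiuraa}, the numerator expands as
\[
P\wh{u} - Q\wt{u} - (p^{2}-q^{2})\wh{\wt{u}} = \bb^{-1}\bigl(P_\aR\wh{v} - Q_\aR\wt{v}\bigr) + \bigl[\,P_\aR\wh{v}' - Q_\aR\wt{v}' - (p^{2}-q^{2})\wh{\wt{v}}\,\bigr] + \mathcal{O}(\bb),
\]
while $\wh{\wt{U}} = \bb^{-1}\wh{\wt{V}} + \wh{\wt{w}}\,\wh{\wt{v}} + \mathcal{O}(\bb)$. The leading $\bb^{0}$ balance recovers \eqref{eq:wvMiuraa}, and since the left-hand side has no $\bb^{1}$ term, the next order forces
\[
P_\aR\wh{v}' - Q_\aR\wt{v}' - (p^{2}-q^{2})\wh{\wt{v}} = (\wh{w}-\wt{w})\,\wh{\wt{w}}\,\wh{\wt{v}}.
\]
Invoking H1 in the form $p^{2}-q^{2} = (\wh{w}-\wt{w})(w - \wh{\wt{w}})$ (from \eqref{eq:H1new}) collapses the right-hand side to $(\wh{w}-\wt{w})\,w\,\wh{\wt{v}}$, which is the first asserted relation. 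The three remaining identities are produced by precisely the same mechanism applied to the second form of \eqref{eq:wuUMiuraa} and to both forms of \eqref{eq:wuUMiurab}; in each instance the sub-leading coefficient combined with a single use of H1 reassembles into the stated product of a shift of $w$ with the appropriate shift of $v$.

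There is no serious analytic obstacle here: every step is a formal Laurent expansion in $\bb$ with coefficients already fixed by the proof of Theorem \ref{th:cor3}. The only care required is the careful bookkeeping of \emph{which} lattice shift of $w$ accompanies each shifted $U$, since it is this shift that selects among the possible right-hand sides $w\wh{\wt{v}}$, $\wh{\wt{w}}v$, $\wh{w}\wt{v}$ or $\wt{w}\wh{v}$, thereby accounting for the distinctness of the four relations and ensuring the correct signs upon applying H1.
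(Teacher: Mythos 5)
Your proposal is correct and is exactly the route the paper takes: the paper derives Corollary \ref{th:cor11} by ``pushing the expansions one step further'' in $\bb$ on the Miura relations \eqref{eq:wuUMiura}, using the expansions $u^{(N)}=v+\bb v'+\mathcal{O}(\bb^2)$, $U^{(N)}=\bb^{-1}V+wv+\mathcal{O}(\bb)$, $P=\bb^{-1}P_\aR+\mathcal{O}(\bb)$, and then H1 to collapse the $(p^2-q^2)$ term. Your order-$\bb^0$ bookkeeping (including the observation that $P$, $Q$ have no $\bb^0$ term) and the final simplification via $p^2-q^2=(\wh{w}-\wt{w})(w-\wh{\wt{w}})$ reproduce all four relations with the correct shifts and signs.
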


It is noteworthy that the quantity $w'$, i.e. \eqref{eq:w'}, that appears in the higher orders of the expansions is, in fact, the object that solves the H2 equation. 
Thus, we can state the following:

\begin{theorem}\label{th:H2theorem} The quantity $w'$ as defined in \eqref{eq:w'}, i.e. 
$$ w'=w^2-\wp(\xi)-2(\bs^T\,\eta_{\bK}\,\bu^0) -(\bs^T\,\bu^0)^2\  , $$ 
is a solution of the H2 equation, i.e. it solves the set of relations  
\be\label{eq:H2BT}
w'+\wt{w}'-p^2=2w\wt{w}\quad,\quad  w'+\wh{w}'-q^2=2w\wt{w}\   , 
\ee
which is a Miura relation (non-auto-BT) between H1 and H2, where $w$ is a solution of H1, and where $w'$ is a solution of 
\be\label{eq:H2}
(w'+\wt{w}'-p^2)(\wh{w}'+\wh{\wt{w}}'-p^2)=(w'+\wh{w}'-q^2)(\wt{w}'+\wh{\wt{w}}'-q^2)\   ,
\ee
which coincides with \eqref{eq:Heqsb}.
\end{theorem}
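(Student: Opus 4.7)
My approach splits naturally into two parts: (A) establish the Miura-type Bäcklund relations \eqref{eq:H2BT} linking $w$ and $w'$, and (B) derive \eqref{eq:H2} from \eqref{eq:H2BT} by purely algebraic elimination. Part B is quick: shifting the first relation of \eqref{eq:H2BT} by $\wh{\phantom{a}}$ gives $\wh{w}'+\wh{\wt{w}}'-p^2=2\wh{w}\wh{\wt{w}}$, and shifting the second by $\wt{\phantom{a}}$ gives $\wt{w}'+\wh{\wt{w}}'-q^2=2\wt{w}\wh{\wt{w}}$. Multiplying the two ``$p^2$''-relations produces $(w'+\wt{w}'-p^2)(\wh{w}'+\wh{\wt{w}}'-p^2)=4w\wt{w}\wh{w}\wh{\wt{w}}$, and multiplying the two ``$q^2$''-relations produces the same quantity; equating them yields \eqref{eq:H2}.

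For Part A, set $\alpha\Equiv\bs^T\bu^0$ and $\beta\Equiv\bs^T\eta_\bK\bu^0$, so that $w=Z-\alpha$ with $Z=\zeta(\xi)-n\zeta(\dd)-m\zeta(\ven)$ and $w'=w^2-\wp(\xi)-\alpha^2-2\beta$. Then $\wt{w}-w=-\eta_\dd+(\wt\alpha-\alpha)$ follows immediately from $\wt{Z}-Z=\eta_\dd$, and using the Weierstrass identity $\eta_\dd^2=\wp(\xi)+\wp(\wt\xi)+\wp(\dd)$ (special case of \eqref{eq:zetawpwp}), direct algebra reduces
\[
w'+\wt{w}'-2w\wt{w}=(w-\wt{w})^2-\wp(\xi)-\wp(\wt\xi)-\alpha^2-\wt\alpha^2-2(\beta+\wt\beta)
\]
to $\wp(\dd)-2\{\alpha\wt\alpha+\eta_\dd(\wt\alpha-\alpha)+\beta+\wt\beta\}$. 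Thus \eqref{eq:H2BT} (with $p^2=\wp(\dd)$) is equivalent to the scalar identity $\beta+\wt\beta+\eta_\dd(\wt\alpha-\alpha)+\alpha\wt\alpha=0$, together with the analogous relation in the $\wh{\phantom{a}}$-direction.

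To establish this scalar identity I would expand the bilinear relation \eqref{eq:pVV} at $\aR=\bb$ to order $\bb^2$ as $\bb\to 0$. The LHS expands as $\wt{V}_\bb V_\bb=1-\bb(\alpha+\wt\alpha)+\bb^2(\alpha\wt\alpha-\beta-\wt\beta)+O(\bb^3)$, where the order-$\bb^2$ coefficient uses the expansion $V_\bb=1-\bb\alpha-\bb^2\beta+O(\bb^3)$ obtained from the second equivalent form of $V_\aR$ in \eqref{eq:V}. On the RHS I would use $\chi_{\bb,\dd}=1/\bb-\eta_\dd+\wp(\wt\xi)\bb+O(\bb^2)$, $\wt\chi_{\bb,-\dd}=1/\bb+\eta_\dd+\wp(\xi)\bb+O(\bb^2)$, and expand $U_{\bb,\bb}=\bs^T\chi_{\bb,\bK}^{-1}(\bun+\bM)^{-1}\chi_{\bb,\bK}^{-1}\brr$ using $\chi_{\bb,\bK}^{-1}=\bb+\bb^2\eta_\bK+O(\bb^3)$ to get $U_{\bb,\bb}=\bb^2\alpha+2\bb^3\beta+O(\bb^4)$. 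The factor of $2\beta$ at order $\bb^3$ crucially uses the equivalence of the two forms of $V_\aR$ in \eqref{eq:V}, which forces $\bs^T(\bun+\bM)^{-1}\eta_\bK\brr=\bs^T\eta_\bK(\bun+\bM)^{-1}\brr$ despite the non-commutativity of $\eta_\bK$ with $(\bun+\bM)^{-1}$. Matching the $\bb^2$ coefficients then yields the identity, and substitution back into Part A together with Part B concludes.

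The main obstacle is this bookkeeping at order $\bb^3$ in $U_{\bb,\bb}$: without invoking the symmetric form of $V_\aR$ from \eqref{eq:V}, one would land on an undesired combination $\bs^T[(\bun+\bM)^{-1},\eta_\bK]\brr$ rather than a usable factor of $2\beta$, and the scalar identity would not close. Once one recognizes that the equivalence in \eqref{eq:V} supplies precisely the commutator relation needed, the rest of the proof is routine.
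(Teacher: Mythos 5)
Your proposal is correct, and it reaches the key identity by a different route than the paper. Both arguments pivot on the same scalar relation — in your notation $\beta+\wt\beta+\eta_\dd(\wt\alpha-\alpha)+\alpha\wt\alpha=0$, which is precisely the paper's $w_1+\wt{w}_1=\eta_\dd w_0-\eta_\dd\wt{w}_0-w_0\wt{w}_0$ — and both then complete the square using $\eta_\dd^2=\wp(\xi)+\wp(\wt\xi)+\wp(\dd)$ and eliminate $w$ from the two Miura relations exactly as in your Part B. The difference is how that scalar identity is obtained: the paper derives it by a fresh direct computation on the Cauchy matrix, multiplying the shift relations \eqref{eq:pMa}--\eqref{eq:pMb} into $\bu^0=(\bun+\bM)^{-1}\brr$ and contracting with $\bs^T$, whereas you extract it as the order-$\bb^2$ coefficient of the already-established bilinear relation \eqref{eq:pVV} at $\aR=\bb\to 0$. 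Your expansions of $V_\bb$, $\chi_{\bb,\pm\dd}$ and $U_{\bb,\bb}$ all check out, and your observation that the two equivalent forms of $V_\aR$ in \eqref{eq:V} supply the needed identity $\bs^T(\bun+\bM)^{-1}\eta_\bK\brr=\bs^T\eta_\bK(\bun+\bM)^{-1}\brr$ is exactly right — without it the $O(\bb^3)$ term of $U_{\bb,\bb}$ would not collapse to $2\bb^3\beta$. This degeneration route is essentially the ``alternative proof'' the paper alludes to after its own argument (there via the expansion \eqref{eq:Ulim}); it buys economy by recycling established structure, while the paper's direct verification is self-contained and does not depend on the limit bookkeeping. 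One trivial slip: from $w=Z-\alpha$ and $\wt{Z}-Z=\eta_\dd$ one gets $\wt{w}-w=\eta_\dd-(\wt\alpha-\alpha)$, the negative of what you wrote; since only $(w-\wt{w})^2$ enters your computation, nothing downstream is affected.
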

\begin{proof}
To prove Theorem \ref{th:H2theorem} we proceed in s similar fashion as in section 4. From \eqref{eq:bu0}, using 
\eqref{eq:rho}, we have using the definitions: 
\begin{eqnarray*}
&& p_{\bK}e^{-\eta_\dd\bK} (\bun+\wt{\bM}) \wt{\bu}^0 = p_{\bK} e^{-\eta_\dd\bK}\,\wt{\brr} \\ 
&& \qquad = e^{-\eta_\dd\bK} e^{-2\zeta(\dd)\bK}p_{-\bK} e^{2[\zeta(\wt{\xi})-\zeta(\xi)]\bK}\frac{\Psi_{\wt{\xi}}(\bK)}{\Psi_\xi(\bK)}\,\brr \\ 
&& \qquad = e^{\eta_\dd\bK} \Psi_\dd(-\bK)\frac{\Psi_{\wt{\xi}}(\bK)}{\Psi_\xi(\bK)}\,\brr= 
-\left[\zeta(\wt{\xi})-\zeta(\dd)+\zeta(\bK)-\zeta(\xi+\bK)\right]\,\brr  \\ 
&& \qquad = (\eta_{\bK}-\eta_\dd)\,\brr = \left[ (\bun+\bM)p_{\bK}e^{-\eta_\dd\bK} - \brr\,\wt{\bs}^T\right] \wt{\bu}^0\  , 
\end{eqnarray*}
where the last step follows by applying \eqref{eq:pMb}. Multiplying both sides from the left by ~$(\bun+\bM)^{-1}$~ we obtain the relation 
\bse\label{eq:bu00}\be\label{eq:bu00a}
p_{\bK} e^{-\eta_\dd\bK} \wt{\bu}^0 = (\bun+\bM)^{-1}\,\eta_{\bK}\,\brr -\left(\eta_\dd-(\wt{\bs}^T\,\wt{\bu}^0)\right)\,\bu^0\   . 
\ee 
In a similar way, using \eqref{eq:pMa}, we obtain the relation
\be\label{eq:bu00b}
p_{-\bK} e^{\eta_\dd\bK} \bu^0 = -(\bun+\wt{\bM})^{-1}\,\wt{\eta}_{\bK}\,\wt{\brr} -\left(\eta_\dd+(\bs^T\,\bu^0)\right)\,\wt{\bu}^0\   , 
\ee \ese 
taking into account that from the definitions we have ~$\wt{\eta}_{-\dd}=-\eta_\dd$~. Multiplying \eqref{eq:bu00a} from the left by $\bs^T$ 
we now compute 
\begin{eqnarray*}
\bs^T p_{\bK} e^{-\eta_\dd\bK} \wt{\bu}^0 &=& \wt{\bs}^T \frac{\Psi_\xi(\bK)}{\Psi_{\wt{\xi}}(\bK)}\Psi_\dd(\bK) e^{-\eta_\dd\bK} \wt{\bu}^0 \\ 
&=& \wt{\bs}^T \left[ \zeta(\xi)+\zeta(\bK)+\zeta(\dd)-\zeta(\wt{\xi}+\bK)\right] \wt{\bu}^0=-\wt{\bs}^t (\eta_\dd+\wt{\eta}_{\bK}) \wt{\bu}^0 \\ 
&=& \bs^T (\bun+\bM)^{-1} \eta_{\bK}\,\brr -\left(\eta_\dd -(\wt{\bs}^T\,\wt{\bu}^0) \right)\bs^T\bu^0 \  ,  
\end{eqnarray*}
and hence we obtain
\be\label{eq:bsu00} 
-\eta_\dd (\wt{\bs}^T\,\wt{\bu}^0)-\wt{\bs}^T\,\wt{\eta}_{\bK}\,\wt{\bu}^0=\tbu^0\,\eta_{\bK}\,\brr -\left(\eta_\dd -(\wt{\bs}^T\,\wt{\bu}^0) \right)(\bs^T\,\bu^0)\   .  
\ee 
The same equation is obtained by multiplying \eqref{eq:bu00b} from the left by $\wt{\bs}^T$ and performing a similar calculation. 
Using the abbreviations ~$w_0\Equiv (\bs^T\,\bu^0)$~, ~$w_1\Equiv (\bs^T\,\eta_{\bK}\bu^0)$~ we can rewrite \eqref{eq:bsu00} more conveniently in 
the form: 
$$ w_1+\wt{w}_1=\eta_\dd w_0 -\eta_\dd\wt{w}_0-w_0 \wt{w}_0\quad\Rightarrow\quad 2(w_1+\wt{w}_1)+w_0^2+\wt{w}_0^2+\eta_\dd^2=(w_0-\wt{w}_0+\eta_\dd)^2\  .   $$
Using the addition formula for the Weierstrass elliptic function
$$ \eta_\dd^2=\left(\zeta(\xi+\dd)-\zeta(\xi)-\zeta(\dd)\right)^2= \wp(\xi+\dd)+\wp(\xi)+\wp(\dd)  $$ 
and noting that from the definition \eqref{eq:wnew} we have that ~$\wt{w}-w=\eta_\dd+w_0-\wt{w}_0$~, we finally obtain;
\be\label{eq:bsu00alt}
\left(\wp(\xi)+w_0^2+2w_1-w^2\right) + \left(\wp(\wt{\xi})+\wt{w}_0^2+2\wt{w}_1-\wt{w}^2\right)+\wp(\dd)=-2w\,\wt{w}
\ee 
which after identifying the expressions between brackets with the object $w'$ in \eqref{eq:w'} leads to the first part of the Miura transform \eqref{eq:H2BT} 
setting $\wp(\dd)=p^2$. The other part of the Miura transform follows by the usual replacements, and subsequently the H2 equation for $w'$ is derived by 
applying the trivial identity ~$(2w\wt{w})(2w\wt{w})\!\wh{\phantom{a}}=(2w\wh{w})(2w\wh{w})\!\wt{\phantom{a}}$~.  
\end{proof}

An alternative proof of Theorem \ref{th:H2theorem}, using the expansion \eqref{eq:Ulim}, can be readily given, but we believe it is instructive 
to provide a direct verification based on the form of the solution. We note that the Miura relations \eqref{eq:H2BT} were first given in 
\cite{Atkinson}. 

\section*{Acknowledgements} 
The authors are grateful for the hospitality of the Isaac Newton Institute for Mathematical Sciences, Cambridge, 
where the present work was commenced during the programme Discrete Integrable Systems (DIS). 
JA was supported by the Australian Research Council Discovery Grant DP 0985615.

\end{document}